\providecommand{\keywords}[1]
{
  \small	
  \textbf{\textit{Keywords---}} #1
}
\numberwithin{equation}{section}
\theoremstyle{plain}
\newtheorem{theorem}{Theorem}[section]
\newtheorem{proposition}[theorem]{Proposition}
\newtheorem{lemma}[theorem]{Lemma}
\newtheorem{definition}[theorem]{Definition}
\newtheorem{example}[theorem]{Example}
\newtheorem{notation}[theorem]{Notation}
\newcommand{\R}{\mathbb{R}}
\newcommand{\N}{\mathbb{N}}
\newcommand{\D}{{\mathcal D}}
\newcommand{\ca}[1]{\mathop{\mathrm{Ca}{(#1)}}}
\newcommand{\E}{{\mathrm E}}
\renewcommand{\strut}{\relax}
\title{Existence and Uniqueness of Solutions to the Stochastic Bellman Equation with Unbounded Shock 
}
\author{{Juan Pablo} {Rinc{\'o}n-Zapatero}$^{1}$ \\
\small $^{1}$ Universidad Carlos III de Madrid\\
\small jrincon@eco.uc3m.es
}
\date{This Version: February 7, 2019}
\begin{document}
\maketitle

\begin{abstract}
In this paper we develop a general framework to analyze stochastic dynamic problems with unbounded utility functions and correlated and unbounded shocks. We obtain new results of the existence and uniqueness of solutions to the Bellman equation through a general fixed point theorem that generalizes known results for Banach contractions and local contractions. We study an endogenous growth model as well as the Lucas asset pricing model in an exchange economy, significantly expanding their range of applicability. 
\end{abstract}  \hspace{10pt}

\keywords
{Stochastic dynamic programming, contraction mapping, Bellman equation, value function, endogenous growth, asset pricing model}

\section{Introduction} 
Stochastic dynamic programming incorporates uncertain events into a suitable framework to help the decision-maker to design an optimal plan of action. 
A fruitful approach for showing the existence of optimal stationary plans is to prove that the dynamic programming equation admits a unique solution ---the value function--- in a suitable space of functions. See \cite{Blackwell}, \cite{Maitra}, \cite{Furukawa}, \cite{Bertsekas}, \cite {SLP} or \cite{HLLbook}, where this problem is analyzed in detail. There is a huge literature that applies stochastic dynamic programming to economics. \cite{BrockMirman}, \cite{MZ}, \cite{DMehra}, \cite{DD}, \cite{Nyarko}, \cite{HP} or \cite{Mitra} are only a few of the many relevant papers that have contributed to developing this field of research. \cite{OR} makes a review of the contributions to the stochastic optimal growth model.

However, only partial results have been developed so far about the existence and uniqueness of solutions to the dynamic programming equation with arbitrary utility functions and arbitrary shock spaces.
A satisfactory theory of stochastic dynamic programming should include these cases.
\cite{RZRP, RZRP2} developed a method to deal with unbounded utility functions in deterministic problems, the so--called local contraction approach, based on an extension of the Banach contraction principle for function spaces whose topology is defined by a countable family of seminorms.\footnote{\cite{Hadzic} is one of the first papers dealing with this extension. \cite{RZRP}, independently, introduced different hypotheses and applied the results to the deterministic Bellman equation. Recent contributions of the local contraction concept to dynamic programming are \cite{Vailakis}, \cite{MN} and \cite{BRW}.}\ 
The application of the method to the stochastic case is not straightforward if one wants to dispense with artificial bounds on the exogenous shocks. Usually, the stochastic dynamic programming theory imposes those bounds in the form of a compact shock space. This is for analytical convenience only.\footnote{In general, this assumption is incompatible with modelling the Markov chain as a first order stochastic difference equation, even if the underlying i.i.d. shocks takes only finitely many values. Think, for instance, of the simple random walk. It takes every integer number with positive probability. Thus, its range can only be limited by imposing exogenous constraints, alien to the economic model, a limitation which will modify the ``natural" solution and may distort its real implications.}

There are at least two problems when dealing both with unbounded utility functions and with an unbounded shock space. One of them is technical, related to the integrability of the functions involved and thus, to the correctness in the definition of the Bellman operator. A second problem concerns the choice of a suitable family of seminorms (or pseudodistances) that preserves the monotonicity of the Bellman operator. The local contraction approach in the deterministic Bellman equation constructs a family of compact sets in the endogenous state space to define the seminorms, see \cite{RZRP}. When trying to imitate this in the stochastic Bellman equation for the exogenous state space of shocks, one faces the difficulty that the Bellman equation requires the computation of a conditional expectation. This is an averaging of a random variable on the whole exogenous space, breaking down the monotonicity properties of the Bellman operator. 
To overcome the difficulties, we work with an extended concept of contraction parameter(s) in the local contraction definition, to consider an operator that works on the whole family of seminorms. Thus, in this framework, in addition to the selfmap for which we are interested in finding fixed points, there is a companion operator that, acting on the seminorms, plays the role of the contraction parameter of the former selfmap. See Definition \ref{LCOP} below. We state a fixed point theorem, Theorem \ref{Theorem_main}, that applies to this more general framework and we show how it covers previous fixed point results, including the classical Banach ContractionTheorem --- and henceforth, the weighted norm approach in \cite{BoydIII}, or \cite{BB} 
--- as well as those based on local contractions, see e.g., \cite{RZRP} and \cite{Vailakis}.
This idea is not new. \cite{KozlovThimTuresson} developed a fixed point theorem in locally convex spaces whose topology is given by a family of seminorms. However, the results obtained depend on the companion contraction parameter operator being linear. The proof in \cite{KozlovThimTuresson}, of the existence and uniqueness of fixed points, heavily exploits the linearity of the companion operator. This precludes the application of the results of this paper to the dynamic programming equation, as it genuinely demands a nonlinear companion contraction parameter operator, due to the presence of a maximization operation\footnote{It is for this reason that we have to develop our own fixed point theorem, departing from the approach of \cite{KozlovThimTuresson}, as we cannot make use of an equality as its formula (4). Instead, we provide an alternative condition, summarized in our assumption (VI) below.} On the other hand, we simplify somewhat the hypotheses made on the companion operator and extend the result to the consideration of arbitrary pseudodistances ---and hence the topological space is not a locally convex space anymore---, which could be useful for analyzing the unbounded from below case, and more importantly, we show how the dynamic programming equation fits well into this framework, attaining, to our knowledge, new existence and uniqueness results.

The present paper devotes a good deal of efforts to isolating a suitable space of functions and a suitable family of seminorms where the approach explained above is successful.
We find a suitable framework, where the averaging needed to compute the conditional expectation does not break down the monotonicity of the Bellman operator. The seminorms that we define combine the usual supremum norm in the endogenous variables, with an $L^1$ norm in the exogenous variables, and construct a complete space of functions for dealing with the Bellman operator --- a Carath\'eodory function space ---.



The paper is organized as follows. Section \ref{Sect:COP} develops a fixed point theorem for operators acting on topological spaces whose topology is given by a family of pseudodistances that makes it Hausdorff and sequentially complete. The operators enjoy a contraction property, materialized in an associated operator acting on the family of pseudodistances, which plays the role of the contraction parameter in Banach's Contraction Theorem. Thus, the result generalizes Banach's Contraction Theorem and shows that the local contraction approach used in previous papers is a particular case of this more general framework. Section \ref{Sect:BO} applies the theorem to the stochastic dynamic programming equation for models with shocks driven by an exogenous Markov chain and with an unbounded shock space. We carefully choose the set of functions where the Bellman operator is defined and we provide a way to methodically construct the objects needed to apply the fixed point theorem developed in Section \ref{Sect:COP}. We also show that the solution of the Bellman equation is the value function.
 In Section \ref{Sect:Appl}, we study a model of endogenous growth ---which encompasses the one sector optimal growth model---, 
and the Lucas asset pricing model in an exchange economy, in all cases allowing for correlated and unbounded shocks.\ 
Section \ref{Sect:Conclusions} establishes the conclusions of the paper, and give some tips for further research. Appendix \ref{App:Proofs} contains the proofs that are not in the main text, with the exception of the proofs regarding the completeness of the function space and seminorms we consider to analyze the dynamic programming equation, which are developed in Appendix \ref{App:FS}.
Appendix \ref{App:Con} discusses the issue of continuity of the Markov operator appearing in the dynamic programming equation, giving sufficient conditions to establish continuity, and providing an example of non--continuity.


\section{Local Contractions}\label{Sect:COP}

Let $(E,\D)$ be a topological space, where $E$ is a set whose topology is generated by a saturated family of pseudometrics $\D=\{d_a\}_{a\in A}$, with $A$ an arbitrary index set. Since the family $\D$ is saturated, the topology it generates is Hausdorff\footnote{A pseudometric $d:E\times E\rightarrow\R_+$ is a function satisfying $d(x,y)\ge 0$, $d(x,x)=0$, $d(x,y)=d(y,x)$ and $d(x,z)\le d(x,y)+d(y,z)$ for any $x,y,z\in E$, but $d(x,y)=0$ does not imply $x=y$. The family $\D$ of pseudometrics is saturated if $d_a(x,y)=0$ for all $a\in A$, implies $x=y$. Sometimes, the pseudometrics are defined through seminorms $p_a$, $a\in A$, by $d_a(x,y) = p_a(x-y)$, where now $E$ is a real vector space. A seminorm is a function $p:E\rightarrow \R_+$ that satisfies all the axioms to be a norm, except that $p(x)=0$ does not imply that $x$ is the null vector of $E$. If the family of seminorms is saturated, then the topology defined by the family is Hausdorff and the space and $E$ is a locally convex space. See \cite{Willard} for further details.}. We suppose that $(E,\D)$ is sequentially complete: if $\{x_n\}$ is a sequence in $E$ which is Cauchy with respect to all $d_a\in \D$, that is, if $d_a(x_n,x_m)\to 0$ as $n,m\to \infty$, then there is $x\in E$ such that $d_a(x_n,x)\to 0$ as $n\to \infty$ for all $a\in A$.

Given a sequentially complete subset $F\subseteq E$, we study the existence and uniqueness of a fixed point of a mapping $T:F\rightarrow E$.

Let $\R^A$ be the set of functions $d:A\rightarrow \R_+$ and let $\R_+^A$ be the non--negative cone of $\R^A$. On this set we consider the order it generates, that is, for two elements $d,d'\in \R_+^A$, we say that $d\le d'$ if and only if $d(a)\le d'(a)$ for all $a\in A$. The family $\D$ can be embedded into $\R_+^A$, since that, for $x,y\in E$ given, the mapping $a\mapsto d_a(x,y)$ defines a function in $\R_+^A$, that we denote $d^{x,y}(a):=d_a(x,y)$. In general, for a given subset $F\subseteq E$, we let $D(F)$ be the set of functions in $\R^A_+$ which are generated by pairs $x,y\in F$, that is
\[
D(F):=\{d:A\rightarrow \R_+\,:\, d=d^{x,y}\mbox{ for some $x,y\in F$}\}.
\]

\begin{definition}\label{LCOP}
Let $F\subseteq E$. The mapping $T:F\rightarrow E$ is an $L$-local contraction on $F$ with contraction operator parameter $L$ (COP, for short), if there are a set $C\subseteq \R^A_+$ such that $D(F)\subseteq C$, and an operator $L:C\rightarrow \R_+^A$, such that 
\[
d_a(T x, T y) \le (L d^{x,y}) (a),
\]
for all $x,y\in F$ and for all $a\in A$.
\end{definition}

Note that the inequality above can be rewritten $d^{Tx,Ty}\le L d^{x,y}$, that is, as an order relation in the space $\R^A_+$.  The definition of $L$--contractions for mappings $T:F\longrightarrow E$, not imposing $T:F\longrightarrow F$, will facilitate the definition of the COP parameter $L$ of the Bellman operator in Section \ref{Sect:BO}. Of course, the property $T:F\longrightarrow F$ is fundamental for Theorem \ref{GeneralDynPro} below, and will checked carefully in Section \ref{Sect:BO}.

The following two examples show that the operator $L$ is a generalization of the concept of contraction parameter of a (local) contraction mapping.
\begin{example}[Banach contractions]
In the classical Banach's Theorem, $E$ is dotted with a complete metric $d$, so the index set $A$ is a singleton, $\mathcal{D}=\{d\}$, and $T$ is a contraction of constant parameter $\beta$, with $0<\beta<1$: $d(Tx,Ty) \le \beta d(x,y)$, for any $x,y\in E$. The COP is $L=\beta I$, where $I$ is the identity map in $\R_+$.

A generalization of the Banach contraction concept is provided in \cite{Wong}, where it is considered  $T:E\longrightarrow E$ for which there is a function $L :\R_+ \longrightarrow \R_+$ satisfying
\begin{equation}\label{Wong}
d(Tx,Ty) \le L  (d(x,y)),
\end{equation}
for all $x,y\in E$. Note that our definition is an extension of this concept to topological spaces whose topology is given by a family of semidistances. 
\end{example}
\begin{example}[$k$--local contractions]
Suppose that $A=\mathbb{N}$ is countable. In \cite{RZRP, RZRP1}, we introduced the concept of $k$--local contraction in the study of the deterministic Bellman and Koopmans equations, respectively. A $k$--local contraction on $F$, $k=0,1,2,\ldots$, is a mapping $T:F\subseteq E \longrightarrow E$ satisfying
\[
d_j(T x,T y)\le \beta_j d_{j+k}(x,y)
\]
for some fixed sequence of numbers $\{\beta_j\}_{j\in \N}$ with $0<\beta_j<1$, and for all $x,y\in F$.
If we let $s=\R^{\N}$ be the set of real sequences and $s^+$ be the subset of $s$ of nonnegative sequences, then the COP associated with $T$ is the linear operator $L:s^+ \longrightarrow s^+$ acting on sequences given by
\[
L(d_1,d_2,\ldots,d_j,\ldots)=(\beta_1 d_{1+k},\beta_2 d_{2+k},\ldots,\beta_j d_{j+k},\ldots),
\]
where $k\ge 0$ is fixed.
%

Suppose that $A=\mathbb{N}$ is uncountable and let a mapping $\alpha: A\longrightarrow E$ such that for any $a\in A$, $d_a\le d_{\alpha(a)}$. \cite{Vailakis} worked with the following generalization of the countable class above: $T:E\longrightarrow E$ is an $\alpha$-local contraction if there exists a function $\beta: A\longrightarrow [0,1)$ such that
\[
d_a(T x,T y)\le \beta (a) d_{\alpha(a)} (x,y).
\]
The COP $L$ acts as follows: given a function $d:A\longrightarrow\R_+$, the image function is $(L d)(a) = \beta(a) d(\alpha(a))$, that is, a translation in the independent variable by $\alpha$, and a multiplication by $\beta$. It turns out that $L$ is also a linear mapping, as in the countable case above.
\end{example}

In what follows, we use the standard notation for successive iterations of the operators $T$ and $L$. For instance, $L^0$ is the identity operator on $C$, $L^1=L$, and for $t\ge 2$, $L^t=L\circ L^{t-1}$.
We impose to $C$, $L$ and $T$ the assumptions (I) to (VI) listed below. The assumptions (I) to (V) concern the behavior of $L$ on the set $C$. Assumption (VI) links directly the operators $T$ and $L$.
\strut

\begin{enumerate}[{(}I{)}]
\item \it $D(F)\subseteq C$. For all $d,d'\in C$, the sum $d+d'\in C$, and any bounded subset of $C$  is countably chain complete\footnote{A subset $S\subseteq C$ is bounded with respect to the order inherited from $\R^A$ if there is $d'\in C$ such that $d\le d'$ for all $d\in S$. The bounded subset $S$ is countably chain complete if for any countably chain $d_1\le d_2\le \cdots d_t\le\cdots$ in $S$, $\sup_{t\in \N} d_t\in S$.} Moreover, if $d'\in C$, $d\in \R_+^A$ and $d\le d'$, then $d\in C$.\rm

\item \it $L0= 0$.\rm

\item \it $L$ is monotone: for all $d,d'\in C$ with $d\le d'$, $L d \le L d'$.\rm

\item \it $L$ is subadditive: for any $d, d'\in C$\rm
\[
L (d+d') \le L d + L d'.
\]
\item \it $L$ is upper semicontinuous sup-preserving\footnote{For instance, the sup-preserving property, $L (\sup_t d_t) = \sup_t L  d_t$, plays a prominent role in the Fixed Point Theorem of Kantorovich-Tarski. In our context, it can be weakened to a kind of upper semicontinuity.}: for any bounded countably chain in $C$,
$d_1\le d_2\le \cdots\le d_t\le \cdots$,
\[
L \sup_t d_t \le  \sup_t L  d_t.
\]\rm
\end{enumerate}



\begin{enumerate}[{(}VI{)}]
\item \it There are $x_0\in F$ and $r:0\in C$ with $d_a(x_0,Tx_0)\le r_0(a)$ and
\[
R_0(a):=\sum_{t=0}^{\infty} L^t r_0(a)<\infty,
\]
for all $a\in A$.
\end{enumerate}

\strut

Since $L^tr_0\in C$, for all $t=0,1,\ldots$, and the countable chain $\{r_0, r_0+Lr_0,\dots,r_0+Lr_0+\cdots +L^t r_0,\cdots\}$ is bounded in $C$ by (VI), $R_0$ is in $C$ by assumption (I).


For $F\subseteq E$, $x_0\in F$, and $m\in \R_+^A$, let the set
\[
V_F(x_0,m)=\{x\in F\,:\, d_a(x_0,x) \le m(a), \text{ $\forall a\in A$}\}.
\]
When $E$ is a metric space, that is, when $A$ is a singleton, the pseudometric is a metric, and $V_F(x_0,m)$ is simply the intersection with $F$ of the closed ball centered at $x_0$ and radius $m$.

\begin{lemma}\label{B_defined}
Let $T:F\longrightarrow F$ be an $L$-local contraction on $F\subseteq E$ and let $x_0\in F$ be such that (I)--(VI) hold true for a suitable $r_0\in C$. Let $R_0$ be defined as in (VI). Then
\begin{enumerate}[{(a)}]
\item $T: V_F(x_0,R_0) \longrightarrow V_F(x_0,R_0)$.
\item For any $a\in A$, $\lim_{t\to \infty} (L^t R_0) (a) = 0$.
\end{enumerate}
\end{lemma}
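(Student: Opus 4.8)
The plan is to distill from assumptions (I)--(V) a single order inequality in $\R_+^A$,
\[
r_0 + L R_0 \le R_0,
\]
and, more generally, $L^t R_0 \le \sum_{s\ge t} L^s r_0$ for every $t\ge 0$; parts (a) and (b) then fall out quickly. To get the key inequality, write $R_0=\sup_N \sigma_N$ with $\sigma_N:=\sum_{s=0}^{N}L^s r_0$. Each $\sigma_N$ lies in $C$ because $r_0\in C$ and $C$ is closed under sums by (I), and the $\sigma_N$ form a countable chain bounded by $R_0\in C$, so (V) applies and gives $L R_0 = L\sup_N \sigma_N \le \sup_N L\sigma_N$. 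Iterating subadditivity (IV) yields the finite bound $L\sigma_N = L\left(\sum_{s=0}^{N}L^s r_0\right)\le \sum_{s=0}^{N}L^{s+1}r_0=\sum_{s=1}^{N+1}L^s r_0$, and taking the supremum over $N$ gives $L R_0\le \sum_{s=1}^{\infty}L^s r_0$, whence $r_0+L R_0\le \sum_{s=0}^{\infty}L^s r_0=R_0$. In particular $L R_0\le R_0\in C$, so $L R_0\in C$ by the downward-closedness clause of (I); iterating, $L^t R_0\in C$ for all $t$, so every expression written above is legitimate.

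For (a): take $x\in V_F(x_0,R_0)$, so $d^{x_0,x}\le R_0$ with $d^{x_0,x}\in D(F)\subseteq C$. The triangle inequality gives $d_a(x_0,Tx)\le d_a(x_0,Tx_0)+d_a(Tx_0,Tx)$; by (VI) the first term is $\le r_0(a)$, and by the $L$-local contraction property the second is $\le (L d^{x_0,x})(a)$, which by monotonicity (III) is $\le (L R_0)(a)$. Hence $d_a(x_0,Tx)\le r_0(a)+(L R_0)(a)\le R_0(a)$ for all $a\in A$ by the key inequality. Since $T$ maps $F$ into $F$ we have $Tx\in F$, so $Tx\in V_F(x_0,R_0)$.

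For (b): I would prove $L^t R_0\le \sum_{s\ge t}L^s r_0$ by induction on $t$. The case $t=0$ is the definition of $R_0$. Assuming it for $t$, apply $L$ and use monotonicity (III) (both sides being in $C$, as above) to get $L^{t+1}R_0\le L\left(\sum_{s\ge t}L^s r_0\right)$; then repeat the (V)+(IV) argument of the first paragraph with the sum started at $s=t$ to obtain $L\left(\sum_{s\ge t}L^s r_0\right)\le \sum_{s\ge t+1}L^s r_0$, completing the induction. Finally, since $R_0(a)=\sum_{s=0}^{\infty}L^s r_0(a)<\infty$ by (VI), the tail $\sum_{s\ge t}L^s r_0(a)$ of a convergent series of nonnegative reals tends to $0$; as $0\le (L^t R_0)(a)\le \sum_{s\ge t}L^s r_0(a)$, we conclude $(L^t R_0)(a)\to 0$.

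The one delicate point is the repeated interchange of $L$ with the infinite sum accompanied by the shift of the summation index, i.e. the step $L R_0\le \sum_{s\ge 1}L^s r_0$: this is exactly where subadditivity (IV) (to move $L$ past finite partial sums) and the upper semicontinuity property (V) (to move $L$ past the supremum defining the infinite sum) are both needed, together with careful bookkeeping of membership in $C$ so that (III)--(V) are applicable at each stage. Everything else is the triangle inequality and the elementary fact that tails of a convergent nonnegative series vanish.
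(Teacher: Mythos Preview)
Your proof is correct and follows essentially the same route as the paper: derive the key inequality $r_0 + L R_0 \le R_0$ from subadditivity (IV) on finite partial sums and the sup-preserving property (V), then use the triangle inequality plus the $L$-local contraction bound and monotonicity for (a), and the tail estimate $L^t R_0 \le \sum_{s\ge t} L^s r_0$ for (b). If anything, you are more careful than the paper in justifying the passage to the limit via (V) and in tracking membership in $C$ so that (III)--(V) remain applicable at each iterate.
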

The following result is a fixed point theorem for $L$-local contractions.
\begin{theorem}\label{Theorem_main}
Let $(E,\D)$ be a Hausdorff and sequentially complete topological space. Let $T:F\rightarrow F$ be an $L$-local contraction on the sequentially complete subset $F\subseteq E$ and let $x_0\in F$ be such that (I)--(VI) hold true. Then there is a unique fixed point $x^*\in V_F(x_0, R_0)$ of $T$, which is the limit of any iterating sequence $y_{t+1} = T y_t$, $t=0,1,2,\ldots$, where $y_0=x\in V_F(x_0,R_0)$ is arbitrary.
\end{theorem}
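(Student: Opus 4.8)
The plan is a Banach-style Picard iteration, with the numerical contraction factor replaced by the operator $L$ and the geometric series replaced by the function $R_0$ supplied by (VI). First I would fix $y_0=x_0$, set $y_{t+1}=Ty_t$, and note by Lemma \ref{B_defined}(a) that the whole orbit stays in $V_F(x_0,R_0)\subseteq F$. Using the $L$-contraction inequality of Definition \ref{LCOP} together with (II) and the monotonicity (III), and keeping track via (I) that the functions involved lie in $C$, an easy induction gives $d^{y_t,y_{t+1}}\le L^t r_0$ pointwise, whence $d_a(y_n,y_m)\le\sum_{t=n}^{\infty}(L^t r_0)(a)=:\sigma_n(a)$ for $m>n$. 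Assumption (VI) says $R_0(a)=\sigma_0(a)<\infty$, so $\sigma_n(a)\to 0$ as $n\to\infty$; hence $\{y_t\}$ is Cauchy for every $d_a$ and, by sequential completeness of $F$, converges to some $x^*\in F$. Letting $m\to\infty$ (each $d_a$ being continuous in its arguments) gives $d_a(y_n,x^*)\le\sigma_n(a)$; in particular $d_a(x_0,x^*)\le\sigma_0(a)=R_0(a)$, so $x^*\in V_F(x_0,R_0)$.

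Next I would show $Tx^*=x^*$. From the triangle inequality, $d_a(x^*,Tx^*)\le d_a(x^*,y_{n+1})+d_a(Ty_n,Tx^*)\le\sigma_{n+1}(a)+(Ld^{y_n,x^*})(a)$, using the previous step and the $L$-contraction inequality. Now $d^{y_n,x^*}\in D(F)\subseteq C$ with $d^{y_n,x^*}\le\sigma_n$, where $\sigma_n\in C$ because it is the supremum of the bounded countable chain of partial sums $\sum_{t=n}^{n+k}L^t r_0\le R_0$ (so (I) applies); hence by monotonicity (III), $Ld^{y_n,x^*}\le L\sigma_n$. Applying subadditivity (IV) to the finite partial sums and then the upper semicontinuity property (V) yields $L\sigma_n=L\big(\sup_k\sum_{t=n}^{n+k}L^t r_0\big)\le\sup_k\sum_{t=n}^{n+k}L^{t+1}r_0=\sigma_{n+1}$. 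Therefore $d_a(x^*,Tx^*)\le 2\sigma_{n+1}(a)\to 0$ for every $a\in A$, and since the family $\D$ is saturated, $Tx^*=x^*$.

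Finally, for uniqueness and for convergence from an arbitrary starting point, let $y_0\in V_F(x_0,R_0)$ be arbitrary and $y_{t+1}=Ty_t$; by Lemma \ref{B_defined}(a) the orbit stays in $V_F(x_0,R_0)$, and since $y_0$ and $x^*$ both lie in $V_F(x_0,R_0)$ we have $d^{y_0,x^*}\le 2R_0\in C$. An induction with the $L$-contraction inequality, (III) and (IV) then gives $d^{y_t,x^*}\le 2\,L^t R_0$, and Lemma \ref{B_defined}(b) gives $(L^t R_0)(a)\to 0$ for every $a$, so $y_t\to x^*$. In particular, any fixed point $\bar x\in V_F(x_0,R_0)$ is the limit of its own constant orbit, so $d_a(\bar x,x^*)=0$ for all $a$, i.e.\ $\bar x=x^*$ by saturation.

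The main obstacle is the fixed-point verification in the second step: since $L$ is only monotone, subadditive and upper semicontinuous, not continuous, one cannot simply push $L$ through the convergence $d^{y_n,x^*}\to 0$. The device that makes it work is to trap $d^{y_n,x^*}$ underneath the decreasing tails $\sigma_n$ of the series defining $R_0$ and use precisely the interplay of (IV) and (V) to obtain $L\sigma_n\le\sigma_{n+1}$, which forces the residual term to vanish; this is also the place where the countable-chain-completeness half of (I) and the summability in (VI) are genuinely used.
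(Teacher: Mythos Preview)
Your proof is correct and follows essentially the same route as the paper's: Picard iteration from $x_0$, the tail estimate $d_a(y_n,x^*)\le\sum_{t\ge n}(L^t r_0)(a)$, the key manipulation $L\sigma_n\le\sigma_{n+1}$ via (III)--(V) to verify the fixed-point property, and the use of Lemma~\ref{B_defined}(b) with the bound $2L^tR_0$ for uniqueness and convergence from arbitrary seeds. The only cosmetic differences are that the paper works with $d^{x_0,Tx_0}$ rather than $r_0$ (equivalent by (III) and (VI)), and it proves uniqueness directly rather than as a corollary of convergence from the constant orbit.
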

\begin{proof}
Consider first the iterating sequence $x_{t+1} = T x_t$, $t=0,1,2,\ldots$ (that is, the initial seed is $x_0$). By Lemma \ref{B_defined}, $x_t$ is in $V_F(x_0,R_0)$ for any $t=0,1,2,\ldots$. Since $T$ is an $L$-local contraction
\[
d_a(x_{t},x_{t+1}) = d_a(T x_{t-1}, Tx_{t}) \le L d^{x_{t-1},x_{t}} (a)
\]
and by induction $d_a(x_{t},x_{t+1}) \le ( L^t d^{x_0,Tx_0} )(a)$.
Let $r>s\ge 1$. Then by the triangle inequality extended to finite sums
\begin{equation}\label{estimate}
d_a(x_{s},x_{r+1}) \le \sum_{t=s}^{r} d_a (x_t,x_{t+1}) \le  \sum_{t=s}^{r} L^t d^{x_0,Tx_0} (a)\le R_0(a)<\infty,
\end{equation}
for all $a\in A$. By the Cauchy criterion for series, $d_a(x_{s},x_{r+1})$ tends to 0 as $r,s\to \infty$. Since $a$ is arbitrary, the sequence $\{x_t\}_{t=0}^{\infty}$ is Cauchy, hence it converges to some $x^*\in F$, as $F$ is sequentially complete. In fact, $x^*\in V_F(x_0,R_0)$. To see this, note that for all $a\in A$, the mapping $x\mapsto d_a(x_0,x)$ is trivially sequentially continuous in the topology generated by $\D$, hence $d_a(x_0,x^*)=d_a(x_0,\lim_{t\to \infty} x_t)=\lim_{t\to\infty} d_a(x_0,x_t)\le R_0 (a)$ for any $a\in A$. Next, we prove that $x^*$ is a fixed point of $T$. By estimate \eqref{estimate}, letting $r\to \infty$, $d_a(x_{s},x^*) \le \sum_{t=s}^{\infty} L^t d^{x_0,Tx_0} (a) = \sup_N \sum_{t=s}^{N} L^t d^{x_0,Tx_0} (a) $. Hence
\begin{align*}
L d^{x_s,x^*}(a) &\le  L\Big(\sup_N \sum_{t=s}^{N} L^t d^{x_0,Tx_0}\Big) (a) \\
&\le \sup_N L \Big(\sum_{t=s}^{N} L^t d^{x_0,Tx_0}\Big) (a) \\
&\le \sup_N \sum_{t=s}^{N} L^{t+1} d^{x_0,Tx_0} (a),
\end{align*}
which tends to 0 as $s\to \infty$, since $R_0$ is finite. Hence $L d^{x_s,x^*}\to 0$ as $s\to \infty$, for all $a\in A$. The first line above is due to the monotonicity of $L$, the second line since $L$ is upper sup-preserving, and the third one, since $L$ is subadditive. Now, given that
\[
d_a(x^*,Tx^*) \le d_a(x_{s+1},x^*) + d_a(x_{s+1},Tx^*) \le d_a(x_{s+1},x^*) + Ld^{x_s,x^*}(a)
\]
and that both summands tend to 0 as $s\to \infty$, we conclude that $x^*=Tx^*$.
To prove uniqueness, we argue by contradiction, supposing the existence of another fixed point $x^{**}\in V_F(x_0,R_0)$. Then $d_a(x^*, x^{**})\le 2 R_0(a)$ for all $a\in A$, and hence
\[
d_a(x^*, x^{**}) = d_a(T^tx^*, T^tx^{**}) \le L^t d^{x^*, x^{**}} (a) \le 2 L^t R_0 (a),
\]
since $L$ is both monotone and subadditive\footnote{It is easy to prove that $L^t$ is monotone and subadditive for any $t$.}. Taking the limit as $t\to \infty$ and using Lemma \ref{B_defined}, $x^* = x^{**}$ is proven. Finally, let $x\in V_F(x_0,R_0)$ and let the iterating sequence $y_{t+1} = T y_t$, for $n=0,1,2,\ldots,$, with $y_0=x$. Observe that
\[
d_a(x_t,y_t) \le L^t d^{x_0,x}(a) \le 2 L^t R_0(a) \to 0 \text{ as $n\to \infty$,}
\]
by Lemma \ref{B_defined}, so $d_a(y_{t}, x^*)\le d_a(x_t,y_t) + d_a(x_t,x^*)$ tends to 0 as $t\to \infty$.
\end{proof}

The next corollary provides conditions for the uniqueness of the fixed point in $F$ and not only in $V_F(x_0,R_0)$. 
When $T$ is indeed an $L$-local contraction on the whole $E$, this result provides global uniqueness of the fixed point on $E$.
\begin{theorem}\label{GlobalU}
Let $(E,\D)$ be a Hausdorff, sequentially complete space. Let $T:F\rightarrow F$ be an $L$-local contraction on the sequentially complete subset $F\subseteq E$ and let $x_0\in F$ be such that (I)--(VI) hold true. Suppose that, for any $x\in F$, it is possible to choose $r_0\in C$ satisfying (VI), such that $x\in V_F(x_0,R_0)$. Then there is a unique fixed point of $T$ in $F$ and convergence to the fixed point of successive iterations of $T$ is attained from any $x\in F$.
\end{theorem}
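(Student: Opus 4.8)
The plan is to bootstrap directly from Theorem \ref{Theorem_main}, exploiting the fact that the iterate sequence seeded at $x_0$, namely $\{T^tx_0\}_{t\ge 0}$, is one fixed object that does not depend on which admissible $r_0$ (and hence which $R_0$) we feed into the hypotheses, combined with uniqueness of sequential limits in the Hausdorff space $(E,\D)$.

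First I would apply Theorem \ref{Theorem_main} with the given $x_0$ and $r_0$ to obtain the fixed point $x^*\in V_F(x_0,R_0)$; since $d_a(x_0,x_0)=0\le R_0(a)$ for all $a$, we have $x_0\in V_F(x_0,R_0)$, so by the last clause of Theorem \ref{Theorem_main} (with $y_0=x_0$) it follows that $x^*=\lim_{t\to\infty}T^tx_0$. Next, fix an arbitrary $x\in F$. By hypothesis there is some $r_0'\in C$ satisfying (VI) (with the same seed $x_0$) such that $x\in V_F(x_0,R_0')$, where $R_0'(a):=\sum_{t=0}^{\infty}L^tr_0'(a)<\infty$. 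The point to check here is that the whole package (I)--(VI) holds for the pair $(x_0,r_0')$: assumptions (I)--(V) are statements about $C$, $L$ and $T$ alone and do not mention $r_0$, while (VI) with $r_0'$ is exactly what the hypothesis provides. Hence Theorem \ref{Theorem_main} applies again, now relative to $r_0'$, and produces a fixed point $y^*\in V_F(x_0,R_0')$ which equals $\lim_{t\to\infty}T^ty_0$ for \emph{every} $y_0\in V_F(x_0,R_0')$.

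Now I would use that set $V_F(x_0,R_0')$ in two ways. Choosing $y_0=x_0\in V_F(x_0,R_0')$ gives $y^*=\lim_{t\to\infty}T^tx_0$; since $(E,\D)$ is Hausdorff, a sequence has at most one limit, so $y^*=x^*$. Choosing $y_0=x$ gives that the iterating sequence $y_{t+1}=Ty_t$ with $y_0=x$ converges to $y^*=x^*$. As $x\in F$ was arbitrary, this establishes convergence of the successive iterates of $T$ to $x^*$ from any point of $F$. Finally, for uniqueness in $F$: if $\bar x\in F$ is any fixed point of $T$, apply the above with $x=\bar x$; the iterating sequence starting at $\bar x$ is the constant sequence $\bar x$, which therefore converges to $x^*$, so $\bar x=x^*$. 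When $T$ is an $L$-local contraction on all of $E$ (so $F=E$) and the hypothesis holds for $F=E$, this is global uniqueness on $E$.

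The argument is short and essentially a reduction to Theorem \ref{Theorem_main}; the only step requiring any care — and the closest thing to an obstacle — is the bookkeeping that swapping $r_0$ for $r_0'$ leaves assumptions (I)--(V) untouched and that $x_0$ always lies in $V_F(x_0,R_0')$, so that the intrinsic orbit $\{T^tx_0\}$ is governed by Theorem \ref{Theorem_main} for every admissible choice of $r_0'$, forcing all the resulting fixed points to coincide.
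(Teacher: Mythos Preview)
Your proof is correct and follows essentially the same approach as the paper: both reduce to Theorem \ref{Theorem_main} applied with varying choices of $r_0$, and use that any fixed point in $F$ lands in some $V_F(x_0,R_0')$ where Theorem \ref{Theorem_main} forces it to coincide with $x^*$. Your version is in fact more explicit than the paper's, which simply asserts that the fixed point is the same for any admissible $r_0$; you justify this by anchoring everything to the orbit $\{T^tx_0\}$ and invoking uniqueness of limits in the Hausdorff topology, which is exactly the detail the paper leaves implicit.
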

\begin{proof}
By Theorem \ref{Theorem_main}, $T$ admits a unique fixed point $x^*$ in $V_F(x_0,R_0)$, where $R_0=\sum_{t=0}^{\infty}L^t r_0$, for any $r_0\in C$ for which $R_0$ is a convergent series. Suppose, by contradiction, that $T$ admits another fixed point $x^{**}\neq x^*$ in $F$. By assumption, there is $r'_0\in C$ such that $x^{**}\in V_F(x_0,R'_0)$, where $R_0' = \sum_{t=0}^{\infty} L^tr_0'$ is finite. Hence, $x^*=x^{**}$. The convergence of iterating sequences is also an immediate consequence of Theorem \ref{Theorem_main}.
\end{proof}

Next we establish a useful sufficient condition for (VI). Note that the Bellman operator satisfies the extra condition imposed on $L$.
\begin{proposition}\label{SufficientB6}
Let $(E,\D)$ be a Hausdorff and sequentially complete topological space. Let $T:F\longrightarrow F$ be an $L$--local contraction on $F\subseteq E$, with COP $L$ satisfying (I) to (V) and
$
L(\alpha d) \le \alpha L d$, for all $d\in C$, for all $ \alpha \in[0,1]$.
Let $x_0\in F$, for which there is $t_0\in \{0,1,2,\ldots\}$, $s\in C$, and $\theta\in[0,1)$ such that
\begin{equation}\label{Lalphad}
L^{t_0} d_0 \le s\quad \mbox{and}\quad Ls\le \theta s,
\end{equation}
where $d_0(a)=d_a(x_0,Tx_0)$. Then (VI) holds with $r_0 = d_0$.
\end{proposition}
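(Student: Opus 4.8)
The plan is to verify directly that the choice $r_0 = d_0$ satisfies condition (VI), namely that the series $R_0(a) = \sum_{t=0}^\infty (L^t r_0)(a)$ converges for every $a \in A$. The idea is to split the series at the index $t_0$: the first $t_0$ terms form a finite sum, each lying in $C$ by (I) and (II), so they contribute a finite amount; for the tail $\sum_{t\ge t_0} L^t d_0$, I would use the two inequalities in \eqref{Lalphad} together with monotonicity of $L$ to dominate $L^t d_0$ by a geometric expression in $s$.

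First I would establish, by induction on $j\ge 0$, that $L^{t_0+j} d_0 \le \theta^j s$. The base case $j=0$ is exactly $L^{t_0} d_0 \le s$. For the inductive step, assume $L^{t_0+j} d_0 \le \theta^j s$. Applying $L$ and using monotonicity (III) gives $L^{t_0+j+1} d_0 \le L(\theta^j s)$. Now by the extra hypothesis $L(\alpha d)\le \alpha L d$ for $\alpha\in[0,1]$ (applied with $\alpha=\theta^j\in[0,1]$ and $d=s\in C$) we get $L(\theta^j s)\le \theta^j L s$, and then $L s \le \theta s$ from \eqref{Lalphad} yields $L(\theta^j s)\le \theta^j\cdot\theta s = \theta^{j+1} s$, closing the induction. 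One small point to check here is that $\theta^j s$ lies in $C$ so that the hypotheses of $L$ apply; this follows from (I), since $\theta^j s \le s$ and $s\in C$, hence $\theta^j s \in C$.

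Next I would assemble the estimate for $R_0$. Fix $a\in A$. Then
\[
R_0(a) = \sum_{t=0}^{t_0-1} (L^t d_0)(a) + \sum_{j=0}^{\infty} (L^{t_0+j} d_0)(a) \le \sum_{t=0}^{t_0-1} (L^t d_0)(a) + \sum_{j=0}^{\infty} \theta^j s(a) = \sum_{t=0}^{t_0-1} (L^t d_0)(a) + \frac{s(a)}{1-\theta}.
\]
The first sum is finite because it is a finite sum of real numbers (each $L^t d_0 \in \R_+^A$, indeed $L^t d_0 \in C$ by repeated use of (I)), and the second is finite because $\theta\in[0,1)$ and $s(a)<\infty$. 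Hence $R_0(a)<\infty$ for all $a\in A$, which is precisely (VI) with $r_0 = d_0$. (That $d_0 \in C$ is given by hypothesis, so $r_0\in C$ as required.)

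I do not anticipate a serious obstacle; the only subtlety is bookkeeping — making sure at each step that the element to which $L$ is applied genuinely belongs to $C$, which is where assumption (I) (downward closure and closure under the relevant operations) is used. The convergence of the geometric series and the finiteness of the initial block are routine. If one wanted to be slightly more careful about the case $t_0=0$, the first sum is empty and the bound reads $R_0(a)\le s(a)/(1-\theta)$, with the induction starting from $L^0 d_0 = d_0 \le s$ directly.
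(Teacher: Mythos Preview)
Your proof is correct and follows essentially the same approach as the paper: split the series at $t_0$, dominate the tail by a geometric series in $s$ via $L^{t_0+j}d_0\le \theta^j s$, and note the initial block is a finite sum. Your version is simply more explicit about the induction and the membership in $C$; the paper compresses this into the chain $\sum_{t\ge t_0}L^t d_0 \le s+Ls+L^2s+\cdots \le (1+\theta+\theta^2+\cdots)s$.
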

\begin{proof}
Note that $\sum_{t=t_0}L^td_0\le s + L s+ L^2s+\cdots\le (1+\theta+\theta^2+\cdots)s=\frac 1{1-\theta}s$. Hence, $\sum_{t=0}^{\infty} L^td_0 =\sum_{t=0}^{t_0-1}L^td_0 + \sum_{t=t_0}^{\infty} L^td_0\le \sum_{t=0}^{t_0-1}L^td_0 +\frac 1{1-\theta}s$ is finite for all $a\in A$.
\end{proof}


\section{Stochastic Dynamic Programming and Bellman Equation}\label{Sect:BO}
Consider a dynamic programming model $(X,Z,\Gamma,Q,U,\beta)$, where
$X\times Z$ is the set of possible states of the system,
$\Gamma$ is a correspondence that assigns a nonempty set $\Gamma(x,z)$ of feasible actions to each state $(x,z)$ and
 $Q$ is the transition function, which associates a conditional probability distribution $Q(z,\cdot)$ on $Z$ to each $z\in Z$. Hence, the law of motion is assumed to be a first-order Markov process, which could be degenerated, giving rise to a deterministic model. We will use indistinctly the notation $Q_z(\cdot)=Q(z,\cdot)$; the function
$U$ is the one--period return function, defined on the graph of $\Gamma$, $\Omega=\{(x,y,z)\,:\, (x,y)\in X\times Z, y\in \Gamma(x,z)\}$, and
$\beta$ is a discount factor.

Starting at some state $(x_0,z_0)$, the agent chooses an action $x_1\in \Gamma(x_0,z_0)$, obtaining a return of $U(x_0,x_1,z_0)$ and the system moves to the next state $(x_1,z_1)$, which is drawn according to the probability distribution $Q(\cdot|z_0)$. Iteration of this process yields a random sequence $(x_0,z_0,x_1,z_1,\ldots)$ and a
total discounted return $\sum_{t=0}^{\infty} \beta^t U(x_t, x_{t+1},z_t)$. A history of length $t$ is $z^t=(z_0,z_1,\ldots, z_t)$. Let $Z^t$ be the set of all histories of length $t$. A (feasible) plan $\pi$ is a constant value $\pi_0\in X$ and a sequence of measurable functions $\pi_t : Z^t \longrightarrow X$, such that $\pi_t(z^t)\in \Gamma(\pi_{t-1}(z^{t-1}),z_t)$, for all $t=1,2,\ldots$. Denote by $\Pi(x_0,z_0)$ the set of all feasible plans starting at the state $(x_0,z_0)$. Any feasible plan $\pi\in \Pi(x_0,z_0)$, along with the transition function $Q$, defines a distribution $\mathbb{P}^{\pi,(x_0,z_0)}$ on all possible futures of the system $\{(x_t,z_t)\}_{t=1}^{\infty}$, as well as the expected total discounted utility
\[
u(\pi,x_0,z_0) = \mathbb{E}^{\pi,(x_0,z_0)}\left(\sum_{t=0}^{\infty} \beta^t U(x_t,x_{t+1},z_t)\right).
\]
The expectation $\mathbb{E}^{\pi,(x_0,z_0)}$ is taken with respect to the distribution $\mathbb{P}^{\pi,(x_0,z_0)}$.
The problem is then to find a plan $\pi\in \Pi(x_0,z_0)$ such that $u(\pi,(x_0,z_0))\ge u(\widehat \pi,(x_0,z_0))$ for all $\widehat \pi \in \Pi(x_0,z_0)$, for all $(x_0,z_0)\in X\times Z$. The value function of the problem is
$v(x_0,z_0)=\sup_{\pi\in\Pi(x_0,z_0)} u(\pi, (x_0,z_0))$.

Consider the functional equation corresponding to the above dynamic programming problem as stated in \cite{SLP}.
For $x\in X$, $z\in Z$
\begin{equation}\label{Bellmaneq}
v(x,z)=\max _{y\in \Gamma(x,z)} \left\{U(x,y,z) + \beta \int_Z v(y,z')Q(z,dz')\right\}.
\end{equation}

A solution of the Bellman equation satisfying additional assumptions is the value function of the infinite programming problem. This is the content of Theorem \ref{GeneralDynPro} below, whose proof needs
the notion of the probability measure $\mu^t$ defined on the sequence space of shocks $(Z^t, \mathcal{Z}^t)$ for finite $t=1,2,\ldots$, where
\[
(Z^t, \mathcal{Z}^t) = (Z\times \cdots\times Z, \mathcal{Z}\times \cdots \times\mathcal{Z})\quad (\mbox{$t$ times}).
\]
For any rectangle $B=A_1\times \cdots\times A_t\in \mathcal{Z}^t$, $\mu^t$ is defined by
\[
\mu^t(z_0,B) = \int_{A_1}\ldots\int_{A_{t-1}}\int_{A_t} Q_{z_{t-1}}(dz_t) Q_{z_{t-2}}(dz_{t-1}) \cdots Q_{z_{0}}(dz_1),
\]
and by the Hahn Extension Theorems, $\mu^t(z_0,\cdot)$ has a unique extension to a probability measure on all of $\mathcal{Z}^t$. We omit the details, that can be found in \cite{SLP}, Section 8.2, whose presentation we follow closely.

Defining the Bellman operator in a suitable function space $E$, such that for $f\in E$
\[
(Tf)(x,z) = \max _{y\in \Gamma(x,z)} \left\{U(x,y,z) + \beta \int_Z f(y,z')Q(z,dz')\right\},
\]
the Bellman functional equation \eqref{Bellmaneq} is a fixed point problem for $T$. This fixed point problem is completely understood for the case where $U$ is bounded. There are now also different approaches for some special cases for unbounded $U$. It is worth mentioning the constant returns to scale model and the logarithmic and the quadratic parametric examples analyzed in \cite{SLP}, pp. 270--280, and the weighted norm approach in \cite{BoydIII} and \cite{HLLbook}. One feature of all these approaches is that they consider a bounded (or compact) space of shocks, an assumption that we want to dispense with.\footnote{The weighted norm approach presents some limitations, which are explained, for instance, in Remark 9 of \cite{MN}. This paper constitutes a first attempt to translate the approach initiated by \cite{RZRP} for deterministic programs to the stochastic case. However, the results obtained do not cover a general model where shocks are driven by an exogenous transition probability. In fact, in the class of dynamic programming models described here and in \cite{SLP}, assumption (A4) on the generation of shocks $\{z_t\}_{t=0}^{\infty}$ imposed in \cite{MN}, basically implies that the space of shocks $Z$ is compact, or that the underlying probability has compact support.} Allow for a non--compact shock space is important for a qualitative analysis of models, see for instance \cite{BinderPesaran} and \cite{Stach}, and more recently, \cite{MaStachurski}.

We now impose the standing hypotheses. Most are taken from \cite{SLP}, but there are essential differences, as we admit an unbounded utility $U$ and an unbounded shock space $Z$.
\begin{enumerate}[{(B}1{)}]
\item $X\subseteq \R^l$,  $Z\subseteq \R^k$ are Borel sets, with Borel $\sigma$-algebra $\mathcal{X}$ and $\mathcal{Z}$, respectively. The set $X$ is endowed with the Euclidean topology.
\item $0<\beta<1$.
\item $Q:Z\times \mathcal{Z}\rightarrow [0,1]$ satisfies
\begin{enumerate}
\item for each $z\in Z$, $Q(z,\cdot)$ is a probability measure on $(Z,\mathcal{Z})$; and
\item for each $B\in \mathcal{Z}$, $Q(\cdot,B)$ is a Borel measurable function.
\end{enumerate}
\item The correspondence $\Gamma : X \times Z \longrightarrow X$ is nonempty, compact-valued and continuous. 
\item $U: \Omega \longrightarrow \R$ is a Carath\'eodory function, that is, it satisfies
\begin{enumerate}
\item for each $(x,y)\in D := \{(x,y)\in X\times Y\,:\, \exists z\in Z, \ y\in \Gamma (x,z)\}$, the function of $z$, $U(x,y,\cdot): Z \longrightarrow \R$ is Borel measurable;
\item for each $z\in Z$, the function of $(x,y)$, $U(\cdot, \cdot,z):D\longrightarrow \R$ is continuous.
\end{enumerate}
\end{enumerate}

The reason for working with Carath\'eodory functions instead of continuous functions in the three variables $(x,y,z)$ is twofold. On the one hand, the Markov operator
\begin{equation}\label{MarkovO}
(Mf) (x,z) := \int _Z f(x,z') Q(z,dz'),
\end{equation}
does not preserve continuity of $f$, if $f$ is continuous but not bounded, as the simple example in Appendix \ref{App:Con} shows.

%
On the other hand, the Bellman operator is well defined for the class of Carath\'eodory functions in the unbounded case, while working with the supremum norm is not possible. A direct attack of the Bellman equation in the space of $(x,z)$--continuous functions does not work for unbounded functions and/or unbounded shock space: known theorems on local contractions---with a countable or uncountable index set ---are not suitable, due to the averaging operation involved in the computation of conditional expectations. For this reason 
we are going to use $L^1$-type seminorms, whose precise definition is given below.

We now describe the function space, which details are given in Appendix \ref{App:FS}.
For each $z\in Z$, let $L^1(Z,\mathcal{Z},Q_z)$ be the space of Borel measurable functions\footnote{It is well known that $L^1(Z,\mathcal{Z},Q_z)$ consists of equivalence classes rather than functions, identifying functions that are equal $Q_z$--almost everywhere.} $g:Z\longrightarrow \R$ such that
$
 \int_Z |g(z')|Q_z(dz')<\infty
$.
In what follows, we let $\mathcal{K}$ be the family of all compact subsets of $X$.

Consider the space $E:= \mathcal{L}^1(Z;C(X))$, formed by Carath\'eodory functions $f:X\times Z\longrightarrow  \R$ such that the function $z'\mapsto \max_{x\in K} |f_x(z')|$ is in $L^1(Z,\mathcal{Z},Q_z)$, for all compact sets $K\in \mathcal{K}$, and all $z\in Z$.
See Appendix \ref{App:FS} for the definitions and the notation, where it is also proved the following fundamental result.

\begin{lemma}\label{lemma:function_space}
$E= \mathcal{L}^1(Z;C(X))$ is a complete locally convex space with the topology generated by the family of seminorms $\mathcal{P}:=\{p_{K,z}\}_{K\in \mathcal{K}, z\in Z}$, given by
\begin{equation}\label{pKz}
p_{K,z} (f) :=\int_Z  \max_{x\in K} |f(x,z')| Q_z(dz').
\end{equation}
\end{lemma}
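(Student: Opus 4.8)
The plan is to prove two things: first, that $E = \mathcal{L}^1(Z;C(X))$ is a locally convex space whose topology is generated by the family $\mathcal{P} = \{p_{K,z}\}$, and second, that it is (sequentially) complete. For the first part I would begin by verifying that each $p_{K,z}$ is indeed a seminorm on $E$: nonnegativity and absolute homogeneity are immediate from the definition \eqref{pKz}, and the triangle inequality follows from the pointwise bound $\max_{x\in K}|f(x,z')+g(x,z')| \le \max_{x\in K}|f(x,z')| + \max_{x\in K}|g(x,z')|$ together with monotonicity and linearity of the integral $\int_Z \cdot\, Q_z(dz')$. One must also check that $p_{K,z}(f) < \infty$ for every $f \in E$, which is exactly the defining property of $E$, and that the map $z' \mapsto \max_{x\in K}|f(x,z')|$ is $\mathcal{Z}$-measurable — this uses that $f$ is Carath\'eodory (so $f_x$ is measurable for each $x$) and that $K$ is compact with $X$ separable, so the max over $K$ can be replaced by a sup over a countable dense subset of $K$, preserving measurability. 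Saturatedness (hence Hausdorffness) of $\mathcal{P}$: if $p_{K,z}(f) = 0$ for all $K, z$, then for each fixed $z$ and each singleton $K = \{x\}$ we get $\int_Z |f(x,z')|Q_z(dz') = 0$, and ranging over a suitable countable collection one concludes $f$ is the zero element of $E$.

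For completeness, which is the technical heart of the statement, I would take a Cauchy net (or, since we only need sequential completeness, a Cauchy sequence) $\{f_n\}$ with respect to every $p_{K,z}$. The strategy is to extract a pointwise-in-$z$ limit and show it lies in $E$. Fix $z \in Z$; the sequence $z' \mapsto \max_{x\in K}|f_n(x,z') - f_m(x,z')|$ is Cauchy in $L^1(Z,\mathcal{Z},Q_z)$ for every compact $K$, so by completeness of $L^1$ there is a limit; passing to a subsequence, $f_n(x,z')$ converges for $Q_z$-a.e.\ $z'$, uniformly in $x \in K$. Exhausting $X$ by an increasing sequence of compacta $K_1 \subseteq K_2 \subseteq \cdots$ with $\bigcup_j K_j = X$ (possible since $X \subseteq \mathbb{R}^l$ is $\sigma$-compact), and using a diagonal argument across the countably many $K_j$, one obtains a function $f(\cdot, z')$ defined for $Q_z$-a.e.\ $z'$, which is continuous in $x$ as a locally uniform limit of continuous functions. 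The delicate point is that the a.e.\ exceptional sets depend on $z$, so one must be careful to define a single Carath\'eodory representative $f$ on $X \times Z$ — here I would fix a suitable dominating/reference structure or argue directly that the constructed $f$ is jointly measurable and Carath\'eodory, invoking the Carath\'eodory-function machinery from Appendix \ref{App:FS}. Then Fatou's lemma applied to $\max_{x\in K}|f_n - f_m| \to \max_{x\in K}|f_n - f|$ gives $p_{K,z}(f_n - f) \to 0$, and the triangle inequality $p_{K,z}(f) \le p_{K,z}(f - f_n) + p_{K,z}(f_n) < \infty$ shows $f \in E$, so $f$ is the desired limit.

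The main obstacle I anticipate is the bookkeeping around the $z$-dependence of the $Q_z$-null sets in the completeness argument: unlike a single $L^1$ space, here we have a whole family of measures $\{Q_z\}_{z\in Z}$ indexed by an uncountable set, and the limit function must simultaneously be a genuine Carath\'eodory function (jointly well-behaved) while being the $L^1(Q_z)$-limit for every $z$. Getting a coherent definition of $f$ that does not depend on arbitrary choices of subsequences for different $z$, and verifying its joint measurability, is where the real work lies. Since the paper defers exactly these details to Appendix \ref{App:FS}, in the main text I would simply state that completeness is established there and restrict the present argument to the seminorm axioms, finiteness, measurability, and saturatedness, which together with the cited appendix yield Lemma \ref{lemma:function_space}.
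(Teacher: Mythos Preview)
Your outline correctly identifies the genuine obstacle---the $z$-dependence of the $Q_z$-null sets when trying to splice together a single Carath\'eodory limit from $L^1(Q_z)$-limits for uncountably many $z$---but you do not actually resolve it. Instead you propose to ``defer to Appendix~\ref{App:FS},'' which is circular: the content of Appendix~\ref{App:FS} \emph{is} the proof of Lemma~\ref{lemma:function_space}. So as it stands your argument has a gap precisely at the point you flag, and the diagonal/Fatou strategy you sketch does not, by itself, produce a globally defined Carath\'eodory function independent of the choice of $z$.

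The paper sidesteps this difficulty by an abstract route that you do not mention. It first fixes $K$ and $z$ and proves (Lemma~\ref{lemma1}) that $(E_{K,z},p_{K,z})$ is a \emph{Banach} space---this is the only place where the concrete $L^1$-subsequence/Weierstrass $M$-test argument is run, and there the null-set bookkeeping is trivial because $z$ is fixed. It then passes to the directed family $\mathcal{H}$ of finite subsets of $Z$, observes that $p_{K,H}=\sup_{z\in H}p_{K,z}$ gives equivalent Banach norms, and shows (Lemma~\ref{lemma2}) that $E_K=\bigcap_{H}E_{K,H}$ is the \emph{projective limit} $\lim_{\leftarrow}E_{K,H}$, hence automatically complete and Hausdorff by general locally convex theory. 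Finally it identifies $E$ with the Tychonoff product $\prod_{K\in\mathcal{K}}E_K$ via a linear homeomorphism, using Kucia's Carath\'eodory extension theorem to establish surjectivity (an element of the product is a compatible family $(f_K)_K$, and one must extend each $f_K$ to all of $X\times Z$ and glue). Completeness of $E$ then follows from completeness of products of complete spaces. This machinery---projective limits over $H$ to handle the uncountable $z$-index, then product over $K$ with a nontrivial extension step---is what replaces your missing construction of a coherent limit function.
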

In particular, the lemma states that $E$ is sequentially complete.
In the notation of Section \ref{Sect:COP}, the index set of the family of seminorms is $A=\mathcal{K}\times Z$.

Given a solution $f\in \mathcal{L}^1(Z;C(X))$ of \eqref{Bellmaneq}, define the policy correspondence $G^f:X\times Z \rightarrow 2^X$ by
\begin{equation}\label{Gamma*}
G^f(x,z)=\{y\in \Gamma(x,z)\,:\, f(x,z)=U(x,y,z) + \beta Mf(y,z)\}.
\end{equation}
This is the optimal policy correspondence, denoted simply by $\Gamma^*$, when $f$ is the value function, $v$.

Remember from Section \ref{Sect:COP}, that for a subset $F \subseteq E$, the set $D(F)$ is in this context
\[
D(F)=\{p:\mathcal{K}\times Z\rightarrow \R_+\,:\, p(K,z)=p_{K,z}(f) \mbox{ for some $f\in F$}\}.
\]
\begin{notation}\label{notation1}Along the paper, we will use the notation
\[
\psi(x,z) = \max_{y \in \Gamma(x,z)} U(x,y,z) = T0(x,z)
\]
while, for $p:\mathcal{K}\times Z \longmapsto \R_+$, the function  $p[\Gamma]: X\times Z \longmapsto \R_+$ is defined by $p[\Gamma](x,z)= p(\Gamma(x,z),z)$, that is, it is the function of $(x,z)$ obtained through $p$, when the compact sets $K$ equal $\Gamma(x,z)$, for $x\in X$, $z\in Z$.
\end{notation}

The next result shows that $T$ is an $L$-local contraction, and gives the expression of $L$: Given $p:\mathcal{K}\times Z \mapsto \R_+$ for which $p[\Gamma]\in \mathcal{L}^1(Z;C(X))$, the operator $L$ computes the seminorm of the function $p[\Gamma]$, that is, $(Lp) (K,z) = \beta p_{K,z} (p[\Gamma])$. Note that $L$ is \emph{nonlinear}. The expanded definition of the operator $L$ is the expression \eqref{eq:LCOP} below.

\begin{proposition}\label{prop:LCOP}
Let the Bellman operator $T:F\longrightarrow E$, where $F\subseteq \mathcal{L}^1(Z;C(X))$, such that for all $p\in D(F)$, $p[\Gamma]\in \mathcal{L}^1(Z;C(X))$. Then, $T$ is 
an $L$-local contraction on $F$ with COP $L:D(F)\longrightarrow \R_+^{\mathcal{K}\times Z}$ given by
\begin{equation}\label{eq:LCOP}
(Lp)(K,z) = \beta \int _Z  \max_{x\in K} p(\Gamma(x,z'),z') Q_z(dz'), 
\end{equation}
for all $K\in \mathcal{K}$ and $z\in Z$.
\end{proposition}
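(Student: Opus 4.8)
The plan is to verify directly the defining inequality of an $L$-local contraction (Definition \ref{LCOP}), namely that for all $f,g \in F$ and all $(K,z) \in \mathcal{K}\times Z$,
\[
p_{K,z}(Tf - Tg) \le (Lp^{f,g})(K,z),
\]
where $p^{f,g}(K,z) = p_{K,z}(f-g)$ and $L$ is the operator in \eqref{eq:LCOP}. First I would fix $(x,z) \in X\times Z$ and estimate $|(Tf)(x,z) - (Tg)(x,z)|$ pointwise. Using the elementary fact that $|\max_y a(y) - \max_y b(y)| \le \max_y |a(y) - b(y)|$ applied to the two maximands over $y \in \Gamma(x,z)$, and noting that the $U$-terms cancel, I get
\[
|(Tf)(x,z) - (Tg)(x,z)| \le \beta \max_{y\in\Gamma(x,z)} \left| \int_Z (f(y,z') - g(y,z'))\, Q(z,dz') \right| \le \beta \max_{y\in\Gamma(x,z)} \int_Z |f(y,z') - g(y,z')|\, Q(z,dz').
\]
Since $\Gamma(x,z)$ is compact (assumption (B4)), for each fixed $z'$ we have $\max_{y\in\Gamma(x,z)} |f(y,z') - g(y,z')| \le \max_{x'' \in \Gamma(x,z)} |f(x'',z') - g(x'',z')| = p_{\Gamma(x,z)}^{\mathrm{sup}}(\cdot)(z')$ in the obvious sense, so pulling the maximum inside the integral (it is an upper bound of each integrand value before integrating) yields
\[
|(Tf)(x,z) - (Tg)(x,z)| \le \beta \int_Z \max_{x''\in\Gamma(x,z)} |f(x'',z') - g(x'',z')|\, Q(z,dz') = \beta\, p^{f,g}(\Gamma(x,z), z') \text{ integrated, i.e. } = \beta \int_Z p^{f,g}[\Gamma](x,z')\,Q(z,dz'),
\]
using Notation \ref{notation1} for $p^{f,g}[\Gamma]$.

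Next I would take $\max_{x\in K}$ of both sides over a compact $K \in \mathcal{K}$ and then integrate against $Q_z$. The left side becomes exactly $p_{K,z}(Tf - Tg)$. For the right side, I need to commute $\max_{x\in K}$ past the inner integral $\int_Z (\cdot)\, Q(z,dz')$: since $\max$ of an integral is at most the integral of the $\max$ (the integrand $x \mapsto \max_{x''\in\Gamma(x,z')}|f(x'',z')-g(x'',z')|$ is, for each fixed $z'$, bounded on $K$ by its own max over $x\in K$), I obtain
\[
\max_{x\in K} \beta \int_Z p^{f,g}[\Gamma](x,z')\, Q_z(dz') \le \beta \int_Z \max_{x\in K} p^{f,g}[\Gamma](x,z')\, Q_z(dz') = \beta \int_Z \max_{x\in K} p^{f,g}(\Gamma(x,z'),z')\, Q_z(dz'),
\]
which is precisely $(Lp^{f,g})(K,z)$ by \eqref{eq:LCOP}. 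This establishes the contraction inequality. It remains to check that $L$ is well-defined as a map $D(F) \to \R_+^{\mathcal{K}\times Z}$: the hypothesis $p[\Gamma] \in \mathcal{L}^1(Z;C(X))$ for all $p \in D(F)$ guarantees exactly that $z' \mapsto \max_{x\in K} p(\Gamma(x,z'),z')$ is in $L^1(Z,\mathcal{Z},Q_z)$ for each $K$ and $z$, so $(Lp)(K,z)$ is finite, and nonnegativity is clear since $p \ge 0$.

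I expect the main obstacle to be the measurability and integrability bookkeeping that legitimizes pulling the maxima inside the integrals — in particular, verifying that the various intermediate functions (such as $z' \mapsto \max_{x\in K} p^{f,g}(\Gamma(x,z'),z')$ and $z' \mapsto \max_{y\in\Gamma(x,z)}|f(y,z')-g(y,z')|$) are genuinely Borel measurable, which relies on the compact-valuedness and continuity of $\Gamma$ (a measurable-selection / Berge-type argument) and on $f,g$ lying in the Carath\'eodory space $\mathcal{L}^1(Z;C(X))$. The purely order-theoretic and inequality steps above are routine; the care needed is in ensuring each $\max$-of-integral $\le$ integral-of-$\max$ step is applied to an honestly integrable dominating function, which is where the definition of $E$ from Lemma \ref{lemma:function_space} and the standing hypothesis on $p[\Gamma]$ do the essential work. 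I would also remark explicitly that $L$ is nonlinear because of the inner $\max_{x\in K}$, so none of the linearity-based machinery of \cite{KozlovThimTuresson} applies, motivating the general framework of Section \ref{Sect:COP}.
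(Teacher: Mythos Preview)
Your approach is correct and essentially identical to the paper's: obtain the pointwise estimate $|(Tf)(x,z)-(Tg)(x,z)|\le \beta\, p_{\Gamma(x,z),z}(f-g)$ (the paper reaches this via Blackwell-style monotonicity rather than the inequality $|\max a-\max b|\le \max|a-b|$, but the two are equivalent here), then relabel $z\mapsto z'$, take $\max_{x\in K}$, and integrate against $Q_z$. One notational slip to fix: the right-hand side of your pointwise bound already \emph{is} $\beta\, p^{f,g}[\Gamma](x,z)$---the integral is built into the seminorm $p_{K,z}$---and is not $\beta\int_Z p^{f,g}[\Gamma](x,z')\,Q_z(dz')$; once you correct this, the subsequent ``commute $\max_{x\in K}$ past the inner integral'' step is unnecessary, since after relabeling you simply have $\max_{x\in K}|(Tf)(x,z')-(Tg)(x,z')|\le \beta\max_{x\in K}p^{f,g}[\Gamma](x,z')$ and integrating against $Q_z$ gives $(Lp^{f,g})(K,z)$ directly.
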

\begin{proof}
Following \cite{Blackwell}, we exploit the fact that $T$ is monotone, in conjunction with the properties of the seminorms $p_{K,z}$. Let $f,g\in E$ and let $x\in X$, $K\in \mathcal{K}$ and $z\in Z$. Let $y\in \Gamma (x,z)$ and $z'\in Z$ arbitrary. Then $f(y,z')\le g(y,z') + |f(y,z')-g(y,z')|$ implies
$
f(y,z')\le g(y,z') + \max_{y\in \Gamma(x,z)} |f(y,z')-g(y,z')|
$
and then, by monotonicity and linearity of the integral,
\begin{align*}
\int_Z f(y,z')  Q_z(dz') \le &\int_Z g(y,z')  Q_z(dz') \\
&\quad + \int_Z \max_{y\in \Gamma(x,z)} |f(y,z')-g(y,z')| Q_z(dz').
\end{align*}
We are allowed to take the integral by Lemma \ref{welldefined}.
The inequality is maintained after multiplying by $\beta$ and adding $U(x,y,z)$ to both sides. Then, by taking the maximum in $y\in \Gamma (x,z)$ to both sides, we have
\begin{align*}
(Tf)(x,z) &\le (Tg)(x,z) + \beta \max _{y\in \Gamma(x,z)} \int_Z \max_{y\in \Gamma(x,z)} |f(y,z')-g(y,z')| Q_z(dz')\\
& =  (Tg)(x,z) + \beta \int_Z \max_{y\in \Gamma(x,z)} |f(y,z')-g(y,z')| Q_z(dz')\\
& =  (Tg)(x,z) + \beta p_{\Gamma(x,z),z}(f-g).
\end{align*}
Exchanging the roles of $f$ and $g$, we have
\[
|(Tf)(x,z)-(Tg)(x,z)|\le \beta  p_{\Gamma(x,z),z}(f-g).
\]
It is convenient to write this inequality with the dummy variable $z'$ instead of $z$. Now, taking the maximum in $x\in K$ and averaging with respect to the measure $Q_z$, we obtain
\[
\int_Z \max_{x\in K} |(Tf)(x,z')-(Tg)(x,z')|Q_z(dz') \le \beta \int_Z \max_{x\in K} p^{f-g}(\Gamma(x,z'),z')  Q_z(dz').
\]
This inequality can be rewritten $
p_{K,z}(Tf-Tg) \le (L p^{f-g}) (K,z)
$,
for all $K\in \mathcal{K}$, $z\in Z$, where $L$ is the operator defined in \eqref{eq:LCOP}.
\end{proof}

One of the difficulties in applying contraction techniques to the dynamic programming equation, when the return function and/or the space of shocks is unbounded, is the selection of a suitable space of functions where the Bellman operator is a selfmap. Assumption (B6) below provides a scheme to construct such a space along the lines of assumption (VI) in Section \ref{Sect:COP}.
This is in the same spirit of Assumption 9.3 in \cite{SLP}, pp. 248-249.\ 
Our assumption is not about bounding the one-shot utility function $U$ along any policy path by a function that depends only on time and the initial state, but about bounding its expected value with respect to the initial state. This is an important difference, as it allows us to deal with an unbounded space of shocks. 

\strut

\begin{enumerate}[{(B}6{)}]
\item
There is a collection of nonnegative functions $\{l_t\}_{t=0}^{\infty}\in \mathcal{L}^1(Z;C(X))$, such that for all $x\in X$, for all $z\in Z$
\[
\begin{array}{l}
 l_0(x,z) \ge |\psi(x,z)|;\\[1ex]
l_{t+1}(x,z)\ge 
\beta {\displaystyle \int_Z} \max_{y\in \Gamma(x,z)} l_t(y,z')Q_z(dz'),\quad \mbox{for all $t=0,1,\ldots$,}
\end{array}
\]
and the series $w:= \sum_{t=0}^{\infty}  l_t$
is unconditionally convergent, that is,
\[
R(K,z):=\sum_{t=0}^{\infty}  p_{K,z} (l_t) <\infty,
\]
for all $K\in \mathcal{K}$, for all $z\in Z$.
\end{enumerate}

\strut


Now we consider a suitable set $C$ where $L$ is defined.
\begin{equation}\label{SetC}
\begin{aligned}
C=\Big\{p:\mathcal{K}\times Z \longmapsto \R_+\,: \,p(K,z) \le c R_0(K,z) \mbox{ for some $c>0$,}&\\
\mbox{ and } p[\Gamma]\in \mathcal{L}^1(Z,C(X))\Big\}.&
\end{aligned}
\end{equation}
As it is proved in Lemma \ref{lemma:C}, $C$ is not trivial, as it contains the images of $V(0,R_0)$ by the family of seminorms $\mathcal{P}$.


Theorem \ref{GeneralDynPro} below is a fixed point theorem for the Bellman operator with unbounded utility and unbounded space of shocks. We state a previous lemma.
\begin{lemma}\label{Preparation_Th}
Let assumptions (B1) to (B6) hold. Then $T$ and $L$ with $C$ defined in \eqref{SetC}, satisfy (I) to (VI).
\end{lemma}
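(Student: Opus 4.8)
The plan is to check the six structural requirements \textbf{(I)}--\textbf{(VI)} of Section~\ref{Sect:COP} directly for the explicit operator $L$ of \eqref{eq:LCOP} and the set $C$ of \eqref{SetC}, taking as the distinguished point $x_0=0\in F$ (the null function) and as the majorant
\[
r_0(K,z):=p_{K,z}(l_0)=\int_Z\max_{x\in K}|l_0(x,z')|\,Q_z(dz'),
\]
where $\{l_t\}_{t\ge0}\subset\mathcal{L}^1(Z;C(X))$ is the collection furnished by (B6). Because $r_0=L^0r_0\le\sum_{t\ge0}L^tr_0=R_0$, the function $l_0$ belongs to $V(0,R_0)$, so $r_0\in C$ is already contained in Lemma~\ref{lemma:C}; that lemma also yields $D(F)\subseteq C$.

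The requirements that depend only on the algebraic shape of \eqref{eq:LCOP} are essentially immediate. \textbf{(II)}: the integrand defining $L0$ vanishes. \textbf{(III)}: if $p\le p'$ then $p(\Gamma(x,z'),z')\le p'(\Gamma(x,z'),z')$ for all $x,z'$, and $\max_{x\in K}$ and $\int_Z(\cdot)\,Q_z(dz')$ are monotone. \textbf{(IV)}: $(p+p')[\Gamma]=p[\Gamma]+p'[\Gamma]$ and $\max_{x\in K}(a(x)+b(x))\le\max_{x\in K}a(x)+\max_{x\in K}b(x)$. For \textbf{(I)}, beyond $D(F)\subseteq C$, closure of $C$ under addition and under domination is read off \eqref{SetC} (add the constants $c$; a sum of two members of $\mathcal{L}^1(Z;C(X))$ is again such), and $L$ carries $C$ into $C$ because $LR_0\le R_0$ (from $R_0=\sum_{t\ge0}L^tr_0$ together with subadditivity and sup-preservation of $L$), so monotonicity and subadditivity force $Lp\le n\,LR_0\le nR_0$ whenever $p\le cR_0$ and $n\ge c$ is an integer; in particular $L^tr_0\in C$ for every $t$, hence $R_0\in C$. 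The one clause of \textbf{(I)} that is not routine is countable chain completeness of bounded subsets: for a chain $p_1\le p_2\le\cdots\le d'$ with $d'\in C$, the pointwise supremum $\bar p:=\sup_t p_t$ is dominated by $d'$, hence by $cR_0$, but one still has to check $\bar p[\Gamma]=\sup_t(p_t[\Gamma])\in\mathcal{L}^1(Z;C(X))$; a priori this is only lower semicontinuous in $x$, and promoting it to a genuine Carath\'eodory function (using domination by $R_0[\Gamma]\in\mathcal{L}^1(Z;C(X))$ and a Dini-type argument on each compact $K$) is precisely the content addressed in Lemma~\ref{lemma:C} and Appendix~\ref{App:FS}.

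Granting $\bar p\in C$, condition \textbf{(V)} follows, and in fact with equality. Fix $K\in\mathcal{K}$ and $z\in Z$; for each $z'$, commuting the two suprema and using that $p_t[\Gamma]$ and $\bar p[\Gamma]$ are continuous in $x$ (so the relevant maxima over the compact $K$ are attained), one gets
\[
\max_{x\in K}\bar p(\Gamma(x,z'),z')=\sup_{x\in K}\sup_t p_t(\Gamma(x,z'),z')=\sup_t\max_{x\in K}p_t(\Gamma(x,z'),z').
\]
The last expression is an increasing limit in $t$ of nonnegative $Q_z$-integrable functions of $z'$, all dominated by $z'\mapsto\max_{x\in K}d'(\Gamma(x,z'),z')$; the monotone convergence theorem then pulls $\sup_t$ outside $\int_Z(\cdot)\,Q_z(dz')$, and multiplying by $\beta$ gives $(L\bar p)(K,z)=\sup_t(Lp_t)(K,z)$, which is stronger than the inequality required by \textbf{(V)}.

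Finally, \textbf{(VI)} holds with $x_0=0$ and $r_0$ as above. First, $d_{(K,z)}(x_0,Tx_0)=p_{K,z}(T0)=p_{K,z}(\psi)\le p_{K,z}(l_0)=r_0(K,z)$ by the bound $l_0\ge|\psi|$. Next I claim $(L^tr_0)(K,z)\le p_{K,z}(l_t)$ for all $t$, by induction: $t=0$ is the definition of $r_0$; assuming the bound for $t$, monotonicity \textbf{(III)} gives $(L^{t+1}r_0)(K,z)\le\beta\int_Z\max_{x\in K}p_{\Gamma(x,z'),z'}(l_t)\,Q_z(dz')$, while, since $l_t\ge0$, the recursion in (B6) reads $\beta\,p_{\Gamma(x,z'),z'}(l_t)=\beta\int_Z\max_{y\in\Gamma(x,z')}l_t(y,z'')\,Q_{z'}(dz'')\le l_{t+1}(x,z')$; taking $\max_{x\in K}$ and integrating against $Q_z$ yields $(L^{t+1}r_0)(K,z)\le p_{K,z}(l_{t+1})$. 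Summing over $t$ and using the unconditional convergence in (B6),
\[
R_0(K,z)=\sum_{t=0}^{\infty}(L^tr_0)(K,z)\le\sum_{t=0}^{\infty}p_{K,z}(l_t)=R(K,z)<\infty
\]
for all $K\in\mathcal{K}$, $z\in Z$, which is \textbf{(VI)}. The main obstacle I anticipate is not any of these estimates but the continuity/integrability bookkeeping inside $C$ --- that $\mathcal{L}^1(Z;C(X))$ is stable under composition with the continuous, compact-valued $\Gamma$, under monotone countable suprema, and under the series defining $R_0$ --- so that $L$ is genuinely a selfmap of $C$ and $C$ is chain complete; these facts are exactly the ones isolated in Lemma~\ref{lemma:C} and the completeness results of Appendix~\ref{App:FS}, and everything else reduces, via \eqref{eq:LCOP}, to monotonicity of the integral, subadditivity of $\max$, and the monotone convergence theorem.
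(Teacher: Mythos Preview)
Your proposal follows essentially the same architecture as the paper's proof: the verifications of \textbf{(II)}--\textbf{(IV)} are identical, your monotone-convergence argument for \textbf{(V)} is the paper's, and your induction $L^tr_0\le p^{l_t}$ for \textbf{(VI)} is precisely the content of the paper's Lemma~\ref{rt_family}.

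The one place where your write-up diverges, and where it contains a genuine slip, is the countable chain completeness clause of \textbf{(I)}. You correctly flag that continuity in $x$ of $\bar p[\Gamma]=\sup_t p_t[\Gamma]$ is the only non-routine point, but you then delegate it to Lemma~\ref{lemma:C} and Appendix~\ref{App:FS}. Neither of those results contains this argument: Lemma~\ref{lemma:C} only shows $D(V(0,R_0))\subseteq C$, and Appendix~\ref{App:FS} establishes completeness of the ambient function space, not chain completeness of $C$. The paper handles this step directly in the proof of Lemma~\ref{Preparation_Th} itself, writing the chain as a series of nonnegative Carath\'eodory increments and invoking the Weierstrass $M$-test to obtain continuity of the sum. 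Your suggested ``Dini-type argument'' would be circular as stated, since Dini's theorem presupposes continuity of the limit, which is exactly what needs to be established; you should replace it by the $M$-test route (or an equivalent uniform-convergence argument on compacta).
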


\begin{theorem}\label{GeneralDynPro}
Let assumptions (B1) to (B6) hold. The following is true.
\begin{enumerate}[{(a)}]
\item The Bellman equation admits a unique solution $v^*$ in $V(0,R_0)$.
\item If the correspondence $G^{v*}$ defined in \eqref{Gamma*} admits a measurable selection, then $v^*$ coincides with the value function, $v=v^*$, and for all $v_0\in V(0,R_0)$, $T^nv_0\to v$ as $n\to \infty$, that is, $p_{K,z}(T^nv_0-v) \to 0$, for all $K\in \mathcal{K}$ and $z\in Z$. Moreover, for all $z\in Z$, the optimal policy correspondence $\Gamma^*(\cdot,z): X\rightarrow X$ is non-empty, compact valued and upper hemicontinuous.
\end{enumerate}
\end{theorem}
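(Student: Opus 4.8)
The plan is to reduce part (a) to an application of Theorem \ref{Theorem_main} via Lemma \ref{Preparation_Th}, and then to handle part (b) by the standard dynamic programming argument adapted to the Carath\'eodory setting. First I would invoke Lemma \ref{Preparation_Th} to conclude that $T$, $L$ and $C$ (as in \eqref{SetC}) satisfy assumptions (I)--(VI), with $r_0 = l_0$ and $R_0 = \sum_{t=0}^\infty L^t l_0 \le \sum_{t=0}^\infty p_{K,z}(l_t) = R$; here one checks that $l_0 \ge |\psi| = |T0|$ gives $d_{K,z}(0, T0) = p_{K,z}(\psi) \le p_{K,z}(l_0)$, so (VI) holds with $x_0 = 0$. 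Also $F = V(0,R_0)$ must be shown sequentially complete: it is a closed subset of the sequentially complete space $E$ (Lemma \ref{lemma:function_space}), being the intersection of the $p_{K,z}$-closed sets $\{f : p_{K,z}(f) \le R_0(K,z)\}$. By Proposition \ref{prop:LCOP}, $T$ restricted to $F$ is an $L$-local contraction, and Lemma \ref{B_defined}(a) guarantees $T : V(0,R_0) \to V(0,R_0)$. Theorem \ref{Theorem_main} then yields a unique fixed point $v^* \in V(0,R_0)$, which is exactly a solution of \eqref{Bellmaneq} in that set, and the convergence $T^n v_0 \to v^*$ from any $v_0 \in V(0,R_0)$. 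This proves (a) and the convergence claim in (b).

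For the identification $v = v^*$ in part (b), I would follow the classical chain of arguments (cf.\ \cite{SLP}, Section 9.2), adapted so that all integrals against the measures $\mu^t(z_0,\cdot)$ make sense because $v^* \in \mathcal{L}^1(Z;C(X))$. The key steps: (i) iterating the Bellman equation along any feasible plan $\pi \in \Pi(x_0,z_0)$ gives, for each finite $n$,
\[
v^*(x_0,z_0) \ge \E^{\pi,(x_0,z_0)}\!\Big(\sum_{t=0}^{n-1}\beta^t U(x_t,x_{t+1},z_t)\Big) + \beta^n \E^{\pi,(x_0,z_0)}\big(v^*(x_n,z_n)\big),
\]
with equality when $\pi$ is generated by a measurable selection of $G^{v^*}$ (which exists by hypothesis); (ii) the bound $|\psi| \le l_0$ together with (B6) controls the tails, giving $\beta^n \E^{\pi,(x_0,z_0)}|v^*(x_n,z_n)| \to 0$ as $n\to\infty$ uniformly enough to pass to the limit; (iii) letting $n\to\infty$ yields $v^*(x_0,z_0) \ge u(\pi,x_0,z_0)$ for all feasible $\pi$, with equality along the selection plan, hence $v^* = \sup_\pi u(\pi,\cdot) = v$. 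The integrability and tail-vanishing in step (ii) is where the unbounded-shock machinery is genuinely used: one dominates $|v^*| \le \sum_t l_t$ pointwise (since $v^* \in V(0,R_0)$ forces $|v^*(x,z)| \le R_0[\mathrm{id}](x,z)$ where singletons $\{x\}$ are the relevant compacta, modulo the precise definition in Appendix \ref{App:FS}) and then estimates $\beta^n \E^{\pi}(\sum_t l_t)(x_n,z_n)$ using the recursive inequality on the $l_t$'s, which is exactly designed so that the discounted expected values telescope and the remainder tends to zero.

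Finally, for the properties of $\Gamma^*(\cdot,z)$: nonemptiness follows from $\Gamma(x,z)$ being nonempty and compact (B4) and $y \mapsto U(x,y,z) + \beta Mv(y,z)$ being a function whose maximum is attained; compact-valuedness and upper hemicontinuity in $x$ follow from the Theorem of the Maximum, provided $(x,y) \mapsto U(x,y,z) + \beta Mv(y,z)$ is continuous in $(x,y)$ for fixed $z$ and $\Gamma(\cdot,z)$ is continuous and compact-valued — the latter from (B4), the former requiring that $Mv(\cdot,z)$ be continuous in its first argument, which is addressed by the results of Appendix \ref{App:Con}; note that this is why we fix $z$ and only claim hemicontinuity in $x$. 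The main obstacle I expect is step (ii): making precise the pointwise domination of $v^*$ by the series $\sum_t l_t$ (the subtlety being that membership in $V(0,R_0)$ is phrased through seminorms over compacta $K$, not pointwise), and then justifying the interchange of limit, expectation and infinite sum needed to show the discounted tail vanishes — everything else is either a direct appeal to Theorem \ref{Theorem_main} or a routine transcription of the bounded-return proof.
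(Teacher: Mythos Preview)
Your plan is essentially the paper's own: part (a) is handled exactly by the chain Proposition~\ref{prop:LCOP} $\Rightarrow$ Lemma~\ref{Preparation_Th} $\Rightarrow$ Theorem~\ref{Theorem_main}, and part (b) is the paper's invocation of Theorem~9.2 in \cite{SLP}, whose hypotheses (i)--(iii) are precisely your steps. You have also correctly located the one real obstacle, namely the transversality condition $\beta^n \E^{\pi}|v^*(x_n,z_n)|\to 0$, when membership in $V(0,R_0)$ gives only integrated (seminorm) control of $v^*$, not pointwise.

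The paper's resolution of this obstacle is the one idea you have not supplied. One does \emph{not} prove $|v^*|\le w_0=\sum_t l_t$ pointwise. Instead, one first observes that the partial-sum estimate you obtain in step (i) yields $|T^n 0(x,z)|\le w_0(x,z)$ pointwise for every $n$. Since $p_{\{x\},z}(T^n 0 - v^*)\to 0$ by part (a), this pointwise bound passes to the limit only in integrated form:
\[
\int_Z |v^*(x,z')|\,Q_z(dz') \le \int_Z w_0(x,z')\,Q_z(dz')\qquad\text{for all }(x,z).
\]
Combined via Fubini with the inductive estimate $\int_{Z^t}\beta^t w_0(\pi_{t-1}(z^{t-1}),z_t)\,\mu^t(z_0,dz^t)\le \sum_{s\ge t} l_s(x_0,z_0)$ (a direct consequence of the recursion in (B6) and Monotone Convergence), this gives transversality because the tail of the convergent series $\sum_s l_s(x_0,z_0)$ vanishes. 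No pointwise domination of $v^*$ is ever asserted.

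One small correction: for the upper hemicontinuity of $\Gamma^*(\cdot,z)$ you do not need Appendix~\ref{App:Con}. Berge's theorem requires continuity of $(x,y)\mapsto U(x,y,z)+\beta Mv^*(y,z)$ for fixed $z$, and continuity of $y\mapsto Mv^*(y,z)$ is already established inside the proof of Lemma~\ref{Tf_well} (dominated convergence, using $v^*\in\mathcal L^1(Z;C(X))$). Appendix~\ref{App:Con} concerns the harder question of joint continuity in $(x,z)$, which the theorem does not claim.
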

\begin{proof} (a) $T$ is an $L$--contraction by Proposition \ref{prop:LCOP} and all the assumptions of Theorem \ref{Theorem_main} hold true by Lemma \ref{Preparation_Th}. Hence $T$ admits a unique fixed point $v^*$ is $V(0,R_0)$ and the rest of conclusions of Theorem \ref{GeneralDynPro} hold true.

(b)
To see that $v^*$ is the value function of the problem, we invoke Theorem 9.2 in \cite{SLP}.
Recall that, for any function $F$ that is $\mu^t(z_0,\cdot)$-integrable, its conditional expectation can be expressed as
\begin{align*}
\E_{z_0}(F) &:= \int_{Z^t} F(z^t) \mu^t(z_0,dz^t)\\
& = \int_{Z^{t-1}}\left[ \int_Z F(z^{t-1},z_t) Q_{z_{t-1}}(dz_t)\right]\mu^{t-1}(z_0,dz^{t-1})\\
& =  \int_Z\left[ \int_{Z^{t-1}}F(z_1,z_2^t) \mu^{t-1}(z_1,dz_2^t)\right]Q_{z_0}(dz_1).
\end{align*}
The assumptions of Theorem 9.2 in \cite{SLP} are: (i) $\Gamma$ is non-empty valued, with a measurable graph and admits a measurable selection; (ii) for each $(x_0,z_0)$ and each feasible plan $\pi$ from $(x_0,z_0)$, $U(\pi_{t-1}(z^{t-1}),\pi_t(z^t),z_t)$ is $\mu^t(z_0,\cdot)$-integrable, $t=1,2,\ldots$, and the limit
\begin{equation}\label{Limii}
U(x_0,\pi_0,z_0) + \lim_{n\to \infty} \sum_{t=1}^{n} \int_{Z^t} \beta ^t U(\pi_{t-1}(z^{t-1}),\pi_t(z^t),z_t)\mu^t(z_0,dz^t)
\end{equation}
exists; and (iii) $\lim_{t\to \infty}  \int_{Z^t} \beta ^t v^*(\pi_{t-1}(z^{t-1}), z_t)\mu^t(z_0,dz^t) = 0$.

(i) is implied by (B5) and (ii) is implied by (B6), since $|U(\pi_{t-1}(z^{t-1}),\pi_t(z^t),z_t)|$ is clearly measurable, given that $U$ is a Carath\'eodory function. Moreover, since $\l_0$ in (B6) is in $\ca{X\times Z}$, we can apply Fubini's Theorem so that $l_0(\pi_{1}(z^1),z_2)$ is $\mu^2(z_0,\cdot)$-integrable and
\begin{align*}
\int_{Z^2} l_0(\pi_{1}(z^1),z_2)\mu^2(z_0,dz^2) &= \int_{Z^1}\left(\int_Zl_0 (\pi_{1}(z^1),z_2)Q_{z_1}(dz_2)\right)\mu^1(z_0,dz^1)\\
&\le \int_{Z^1}\frac 1{\beta} l_1 (\pi_{0}(z_0),z_1)\mu^1(z_0,dz^1)\\
&\le \frac 1{\beta^2} l_2(x_0,z_0).
\end{align*}
Both inequalities are due to assumption (B6). By induction, we get that $l(\pi_{t-1}(z^{t-1}),z_t)$ is $\mu^t(z_0,\cdot)$--integrable and
\[
\int_{Z^t} l_0(\pi_{t-1}(z^{t-1}),z_t)\mu^t(z_0,dz^t) \le \frac 1{\beta^t} l_t(x_0,z_0).
\]
Since
$ |U(\pi_{t-1}(z^{t-1}),\pi_t(z^t),z_t)|\le  l_0(\pi_{t-1}(z^{t-1}),z_t)$, the first part of (ii) is proved. Indeed, this estimate provides the bound
\begin{align*}
&|U(x_0,\pi_0(z_0),z_0)| + \sum_{t=1}^{n}  \int_{Z^t} \beta ^t |U(\pi_{t-1}(z^{t-1}),\pi_t(z^t),z_t)|\mu^t(z_0,dz^t)\\
&\quad  \le |U(x_0,\pi_0(z_0),z_0) | + \sum_{t=1}^{n} l_t(x_0,z_0)\le w_0(x_0,z_0),
\end{align*}
hence the second part of (ii) also holds, that is, the limit \eqref{Limii} is finite. Moreover, since the above inequality holds for any $\pi\in \Pi(x_0,z_0)$, it shows that the $n$--th iteration of $T$ on the null function as the initial seed satisfies $|T^n0(x_0,z_0)|\le w_0(x_0,z_0)$. Hence, since $\int_Z |T^n0(x_0,z_1) - v^*(x_0,z_1)|Q_{z_0}(dz_1)$ tends to 0 as $n\to \infty$, by part (a) above, we obtain the bound
\begin{equation}\label{v*}
\int_Z |v^*(x_0,z_1)|Q_{z_0}(dz_1) \le \int _Z w_0(x_0,z_1)Q_{z_0}(dz_1).
\end{equation}
This inequality will be used to show (iii). First, we claim that for any $t$, for any $\pi\in \Pi(x_0,z_0)$,
\[
\int_{Z^t} \beta^t w_0(\pi_{t-1}(z^{t-1}),z_t) \mu^t(z_0,dz^t)\le \sum_{s=t}^{\infty} l_s(x_0,z_0).
\]
To prove it, we employ mathematical induction. Let $t=1$. Then, by assumption (B6)
\begin{align*}
\int_{Z} \beta w_0(\pi_{0}(z_0),z_1) \mu^1(z_0,dz^1) &=\int_Z  \beta  \sum_{t=0}^{\infty}  l_t(\pi_{0}(z_0),z_1) Q_{z_0}(dz_1) \\ 
&= \sum_{t=0}^{\infty} \beta   \int_Z l_t(\pi_{0}(z_0),z_1) Q_{z_0}(dz_1)\\
&\le  \sum_{t=0}^{\infty} l_{t+1}(x_0,z_0).
\end{align*}
The exchange of the integral and infinite sum is possible by the Monotone Convergence Theorem. Suppose that the property is true for $t$ and let us prove it for $t+1$. Then it will hold for any $t$. Note
\begin{align*}
\int_{Z^{t+1}} \beta^{t+1} &w_0(\pi_{t}(z^{t}),z_{t+1}) \mu^{t+1}(z_0,dz^{t+1})\\
 &=
\int_Z \left(\beta \int_{Z^t} \beta^t w_0(\pi_{t-1}(z^{t-1}),z_t) \mu^t(z_0,dz^t)\right) Q_{z_0}(dz_1)
\\
&\le \int_Z \beta \sum_{s=t}^{\infty} l_s(\pi_{0}(z_0),z_1) Q_{z_0}(dz_1)\\
&\le \sum_{s=t+1}^{\infty} l_s(x_0,z_0),
\end{align*}
again by the Monotone Convergence Theorem, and where we have used Fubini's Theorem and the induction hypothesis. This and \eqref{v*} imply (iii), since the series $w_0$ converges. Thus, $v^*$ is the value function.
The claims about $\Gamma^*$ are 
immediate from the Theorem of the Maximum of Berg\'e.
\end{proof}

The following result provides a sufficient condition for (B6).
\begin{proposition}\label{prop:l0r0}
Let assumptions (B1) to (B5) to hold. Suppose that there is $l_0\in \mathcal{L}^1(Z;C(X))$ with $|\psi|\le l_0$, $\alpha\ge 0$ such that $\alpha \beta <1$, and
\[
\int_Z \max_{y\in \Gamma(x,z)} l_0(y,z')Q_z(dz') \le \alpha l_0(x,z),
\]
for all $x\in X$, $z\in Z$.
Then (B6) holds, with $R_0(K,z)= \frac 1{1-\alpha\beta} p_{K,z}(l_{0})$.
\end{proposition}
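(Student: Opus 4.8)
The plan is to exhibit the collection $\{l_t\}_{t=0}^{\infty}$ demanded by (B6) explicitly, via the ansatz $l_t := (\alpha\beta)^t l_0$ for $t=0,1,2,\ldots$. Since $\mathcal{L}^1(Z;C(X))$ is a vector space and $l_0$ belongs to it, each $l_t$ is a nonnegative scalar multiple of $l_0$, hence lies in $\mathcal{L}^1(Z;C(X))$; nonnegativity is preserved because $\alpha\beta\ge 0$ (if $\alpha=0$ then $l_t\equiv 0$ for $t\ge 1$, which causes no trouble since $\alpha\beta=0<1$).

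Next I would verify the two defining inequalities of (B6). The first, $l_0(x,z)\ge |\psi(x,z)|$, is precisely the standing hypothesis $|\psi|\le l_0$. For the recursive one, fix $x\in X$, $z\in Z$ and use that the pointwise maximum over $\Gamma(x,z)$ and the integral against $Q_z$ are positively homogeneous, so that
\[
\beta \int_Z \max_{y\in\Gamma(x,z)} l_t(y,z')\,Q_z(dz') = (\alpha\beta)^t\,\beta \int_Z \max_{y\in\Gamma(x,z)} l_0(y,z')\,Q_z(dz') \le (\alpha\beta)^t\,\beta\alpha\, l_0(x,z) = l_{t+1}(x,z),
\]
where the inequality is the assumed contraction-type bound on $l_0$. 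The maxima are attained and the integrals are finite by (B1)--(B5) together with the well-definedness argument behind Lemma \ref{welldefined}, so each step here makes sense and $l_{t}[\Gamma]\in\mathcal{L}^1(Z;C(X))$.

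Finally, for the unconditional convergence of $w=\sum_t l_t$ I would invoke positive homogeneity of each seminorm $p_{K,z}$, giving $p_{K,z}(l_t) = (\alpha\beta)^t p_{K,z}(l_0)$, with $p_{K,z}(l_0)<\infty$ because $l_0\in\mathcal{L}^1(Z;C(X))$. Since $\alpha\beta<1$,
\[
R_0(K,z) = \sum_{t=0}^{\infty} p_{K,z}(l_t) = \sum_{t=0}^{\infty} (\alpha\beta)^t\, p_{K,z}(l_0) = \frac{1}{1-\alpha\beta}\, p_{K,z}(l_0) < \infty
\]
for all $K\in\mathcal{K}$ and $z\in Z$; as every term is nonnegative, the series is unconditionally convergent. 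Thus (B6) holds with $R_0(K,z)=\frac{1}{1-\alpha\beta}p_{K,z}(l_0)$, as claimed.

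\textbf{Expected main difficulty.} There is essentially no analytic obstacle: once the geometric ansatz $l_t=(\alpha\beta)^t l_0$ is chosen, everything reduces to homogeneity and summing a geometric series. The only point deserving a careful word is bookkeeping --- confirming that each $l_t$, and in particular $l_t[\Gamma]$, stays in $\mathcal{L}^1(Z;C(X))$ and that the nested maxima and integrals in the recursive inequality are legitimate --- which follows from the measurable-maximum/Carath\'eodory machinery already established for the Bellman operator.
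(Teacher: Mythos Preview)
Your proposal is correct and follows essentially the same route as the paper: both choose the geometric ansatz $l_t=(\alpha\beta)^t l_0$, verify the recursive inequality of (B6) via positive homogeneity together with the assumed bound on $l_0$, and sum the resulting geometric series to obtain $R_0(K,z)=\frac{1}{1-\alpha\beta}p_{K,z}(l_0)$. Your write-up is in fact slightly more explicit about membership in $\mathcal{L}^1(Z;C(X))$ and the edge case $\alpha=0$, but the argument is the same.
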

\begin{proof}
Choose $l_{t} = (\alpha\beta)^t l_0$, for $t=0,1,\ldots$. Then
\begin{align*}
\beta \int_Z \max_{y\in \Gamma(x,z)} l_t(y,z')Q_z(dz')  &=\beta  (\alpha\beta)^t\int_Z  \max_{y\in \Gamma(x,z)}l_0(y,z')Q_z(dz')\\
& \le  (\alpha\beta)^{t+1}l_0(x,z) = l_{t+1}(x_0,z_0).
\end{align*}
Hence, $w(x_0,z_0)=\frac 1{1-\alpha\beta} l_0(x_0,z_0)$ and $R_0(K,z)=\frac 1{1-\alpha\beta}p_{K,z}(l_0)$, for $K\in\mathcal{K}$ and $z\in Z$.
\end{proof}

\section{Applications}\label{Sect:Appl}

\subsection{Endogenous growth}\label{Sect:Endgrowth}
Endogenous growth models have become fundamental to understand economic growth. From the huge literature studying this field, few contributions consider an unbounded shock space. Some exceptions are \cite{Stach} and \cite{Kami}, but with uncorrelated shocks. I consider here the stochastic endogenous growth model studied in \cite{JMSS}, which is described as follows. The preferences of the agent over random consumptions sequences are given by
\begin{equation}\label{end0}
\max\  \E\sum_{t=0}^{\infty} \beta^t\frac{c_ t^{1-\sigma} \upsilon (\ell_t)}{1-\sigma},
\end{equation}
subject to
\begin{align}
\label{end1}&c_t + k_{t+1} + h_{t+1} \le z_t A k_t^{\alpha}(n_th_t)^{1-\alpha} + (1-\delta_k) k_t +  (1-\delta_h) h_t,\\
\label{end2}&\ell_t + n_t \le 1,\\
\label{end3}&c_t,k_t, h_t,\ell_t, n_t \ge 0
\end{align}
for all $t=0,1,\ldots$, with $k_0$ and $h_0$ given. Here, $\{z_t\}$ is a Markov stochastic process with transition probability $Q_z(\cdot)$ and $Z=[1,\infty)$; $c_t$ is consumption; $\ell_t$ is leisure; $n_t$ is hours spent working; $k_t$ and $h_t$ are the stock of physical and human capital, respectively; $\delta_k$ and $\delta_h$ are the depreciation rates on physical and human capital, respectively; and $\upsilon$ is a continuous function on $(0,1]$, strictly increasing. The usual non-negativity constraints on consumption, investment, leisure and hours worked apply. The feasible correspondence is thus
\begin{align*}
\Gamma(k,h,z)=\Big\{(k',h',c,n,\ell)\,:\, \mbox{\eqref{end1}--\eqref{end3} hold with $x'=x_{t+1}$, $x=x_t$}&\\
\mbox{ for $x=k,h,c,n,\ell,z$}\Big\}&,
\end{align*}
and the utility function is $U(c,\ell)= \frac{c^{1-\sigma} \upsilon (\ell)}{1-\sigma}$. Regarding the function $\upsilon$, we consider $\upsilon(\ell) = \ell ^{\psi (1-\sigma)}$.
The endogenous state space is $X=\R_+\times \R_+$ and the family of compact sets $\mathcal K$ is formed by compact sets in the product space $\R_+\times \R_+$.
The Markov chain is given by the log--log process
\begin{equation}\label{lnz}
\ln{z_{t+1}} = \rho \ln{z_t} + \ln{w_{t+1}},
\end{equation}
with $\rho\ge 0$ and where the $w$'s are i.i.d., with support in $W\subseteq [1,\infty)$. Let $\mu$ be the distribution measure\footnote{With correlated shocks, the method developed in \cite{MN} would require $\mu(z'\in Z_{j+1}|Z_j=z)=1$ for a suitable increasing family $\{Z_j\}_{j=1}^{\infty}$ of compact sets that fills $Z$. We do not impose this strong constraint on $\mu$. In fact, \eqref{lnz} do nos satisfy it if $W$ is unbounded and $\mu$ has not compact support.} of the $w$'s. Note that $\rho = 0$ corresponds to shocks $z_t$ that are i.i.d.. \cite{JMSS} suppose that $z_t=\exp{\left(\zeta_t-\frac{\sigma_{\epsilon}^2}{2(1-\rho^ 2)}\right)}$, where $\zeta_{t+1} = \rho \zeta_t + \epsilon_{t+1}$ and the $\epsilon$'s are i.i.d., normal with mean 0 and variance $\sigma^2_{\epsilon}$. This corresponds to \eqref{lnz} with\footnote{Since, from \eqref{lnz}, $z_{t+1} = z_t ^{\rho} w_{t+1}$, it is clear that, to keep $z\ge 1$, it is necessary (and sufficient) to have $w\ge 1$. Thus, the assumption that the random variable $\epsilon $ is normally distributed with mean 0 should be modified to fulfill the requirement that the random variable $w$ has support $W$ in $[1,\infty)$. The assumption $z\ge 1$ is usually imposed in growth models with a multiplicative structure, see \cite{SLP}.} $w_{t+1} = \exp{\left(\epsilon_{t+1} - \frac {\sigma_{\epsilon}^2}{2(1+\rho)}\right)}$.
We do not need to restrict $\epsilon$ to be normally distributed.
To shorten notation, let us define $\delta = \min\{\delta_k,\delta_h\}$, $\gamma = A \alpha^{\alpha} (1-\alpha)^{1-\alpha} + (1-\delta)$, and $g(k,h) = A k^{\alpha}(nh)^{1-\alpha} + (1-\delta) (k + h)$. Also, let $\Theta = \E{(w^{\frac{1-\sigma}{1-\rho}})} = \int_W w^{\frac{1-\sigma}{1-\rho}} \mu(dw)$.
\begin{theorem}\label{prop:end_bounded}
Consider the endogenous growth model described in \eqref{end0}--\eqref{lnz} with $0\le \sigma <1$ and $0\le \rho <1$.
If
\begin{equation}\label{OptG_condition}
\beta \gamma^{1-\sigma} \Theta  <1,
\end{equation}
then the associated Bellman equation admits a unique solution, $v^*$, in the set $V(0,R_0)$, where, for $K\in \mathcal K$ and $z\in Z$
\[
R_0(K,z) = \left(\frac{\Theta}{1-\beta\gamma^{1-\sigma} \Theta }\right)  z^{\frac{\rho(1-\sigma)}{1-\rho}} \,\max_{(k,h)\in K} g(k,h)^{1-\sigma}.
\]
Moreover, $v^*$ is the value function $v$ and
$p_{K,z}(T^nv_0 - v)$ converges to 0 as $n\to \infty$, for all $K\in \mathcal K$, $z\in Z$ and all initial guess $v_0\in V(0,R_0)$.
\end{theorem}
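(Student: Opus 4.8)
The plan is to instantiate Theorem~\ref{GeneralDynPro} for the concrete primitives of \eqref{end0}--\eqref{lnz}: I would verify (B1)--(B6), obtaining (B6) through Proposition~\ref{prop:l0r0}, and then read off existence, uniqueness, the identification $v^*=v$, the convergence of $T^nv_0$, and the properties of $\Gamma^*$ directly from that theorem. Assumptions (B1)--(B5) are routine here: $X=\R_+\times\R_+$ and $Z=[1,\infty)$ are Borel with the Euclidean topology and $0<\beta<1$; the log--log recursion \eqref{lnz}, i.e.\ $z'=z^{\rho}w$ with $w\sim\mu$ on $W$, defines a transition $Q$ that is a probability measure in its first argument and Borel measurable in its second; the feasible correspondence cut out by \eqref{end1}--\eqref{end3} is nonempty, compact-valued and continuous because the right-hand side of \eqref{end1} is continuous in $(k,h,z)$; and $U(c,\ell)=c^{1-\sigma}\upsilon(\ell)/(1-\sigma)$ is jointly continuous on the graph of $\Gamma$, hence Carath\'eodory. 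So the real content of the proof is the construction underpinning (B6).

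For (B6) I would apply Proposition~\ref{prop:l0r0} with $l_0:=c_0\,z^{\frac{1-\sigma}{1-\rho}}\,g(k,h)^{1-\sigma}$ for a suitable constant $c_0>0$. Two things must be checked. First, $l_0\ge|\psi|$: from the budget \eqref{end1}, using $z\ge1$, $n\le1$ and $\delta=\min\{\delta_k,\delta_h\}$, every feasible $c$ obeys $c\le z\,g(k,h)$, and $\upsilon(\ell)\le\upsilon(1)=1$ because $\upsilon$ is increasing; hence $0\le\psi(k,h,z)\le\tfrac1{1-\sigma}z^{1-\sigma}g(k,h)^{1-\sigma}$, and since $\tfrac{1-\sigma}{1-\rho}\ge1-\sigma$ for $\rho\in[0,1)$ and $z\ge1$, taking $c_0$ large enough gives $l_0\ge\psi=|\psi|$. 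Second, $l_0\in\mathcal{L}^1(Z;C(X))$: for each compact $K$, $\max_{(k,h)\in K}g(k,h)^{1-\sigma}$ is a finite continuous quantity, and by \eqref{lnz} one has $\int_Z(z')^{\frac{1-\sigma}{1-\rho}}Q_z(dz')=z^{\frac{\rho(1-\sigma)}{1-\rho}}\,\Theta<\infty$, finiteness of $\Theta$ being implicit in \eqref{OptG_condition}.

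The crux is the contraction inequality $\int_Z\max_{y\in\Gamma(k,h,z)}l_0(y,z')\,Q_z(dz')\le\alpha\,l_0(k,h,z)$, and it is here that the exponent $\tfrac{1-\sigma}{1-\rho}$ is forced. From \eqref{end1} the successor stocks satisfy $k'+h'\le z\,g(k,h)$, and the weighted AM--GM bound $A(k')^{\alpha}(h')^{1-\alpha}\le A\alpha^{\alpha}(1-\alpha)^{1-\alpha}(k'+h')$ yields $g(k',h')\le\gamma(k'+h')\le\gamma\,z\,g(k,h)$. Raising to the power $1-\sigma$, integrating, and using that $z'=z^{\rho}w$ is independent of the past,
\[
\int_Z\max_{y\in\Gamma(k,h,z)}l_0(y,z')\,Q_z(dz')\le c_0\gamma^{1-\sigma}g(k,h)^{1-\sigma}z^{1-\sigma}\!\int_Z(z')^{\frac{1-\sigma}{1-\rho}}Q_z(dz')=\gamma^{1-\sigma}\Theta\,l_0(k,h,z),
\]
because $1-\sigma+\rho\tfrac{1-\sigma}{1-\rho}=\tfrac{1-\sigma}{1-\rho}$. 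Hence Proposition~\ref{prop:l0r0} applies with $\alpha=\gamma^{1-\sigma}\Theta$, and $\alpha\beta<1$ is exactly condition \eqref{OptG_condition}; it delivers (B6) with $R_0(K,z)=\tfrac1{1-\beta\gamma^{1-\sigma}\Theta}\,p_{K,z}(l_0)$, which on evaluating the seminorm via the integral above is the expression in the statement.

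Finally I would invoke Theorem~\ref{GeneralDynPro}: part (a) gives the unique solution $v^*\in V(0,R_0)$; for part (b) I would observe that $G^{v^*}$ of \eqref{Gamma*} is compact-valued with measurable graph (all data being Carath\'eodory), so a measurable selection theorem (e.g.\ Kuratowski--Ryll-Nardzewski) provides a selection, and Theorem~\ref{GeneralDynPro}(b) then identifies $v^*$ with the value function and yields $p_{K,z}(T^nv_0-v)\to0$ for every $v_0\in V(0,R_0)$ and the stated regularity of $\Gamma^*$. The only genuinely delicate point is the exponent-balancing in the displayed identity, together with the bookkeeping that $l_0$ actually dominates $\psi$ pointwise and remains in $\mathcal{L}^1(Z;C(X))$; the remaining steps are verification of the hypotheses already set up in Sections~\ref{Sect:COP} and~\ref{Sect:BO}.
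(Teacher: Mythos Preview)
Your proposal is correct and follows essentially the same route as the paper: verify (B1)--(B5) as routine, obtain (B6) via Proposition~\ref{prop:l0r0} with $l_0(k,h,z)=c_0\,z^{(1-\sigma)/(1-\rho)}g(k,h)^{1-\sigma}$, and then invoke Theorem~\ref{GeneralDynPro}. The paper takes $c_0=\tfrac{1}{1-\sigma}$ explicitly and phrases the key bound $g(k',h')\le\gamma\,z\,g(k,h)$ as the solution of a Lagrange problem rather than as weighted AM--GM, but the two arguments are identical in content; your explicit remarks on $l_0\in\mathcal{L}^1(Z;C(X))$ and on the measurable selection needed for part~(b) are points the paper treats more tersely.
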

\begin{proof} We check all the hypotheses of Theorem \ref{GeneralDynPro}. It is clear that (B1)--(B5) are fulfilled. Regarding (B6), we will use Proposition \ref{prop:l0r0} for a suitable function $l_0$. Since $0\le \sigma <1$, both $U$ and $\upsilon$ are bounded from below by zero, and $\upsilon$ is bounded above by 1.
Since $z\ge 1$ and by the definition of $\delta$, we have
$
 zA k^{\alpha}(nh)^{1-\alpha} + (1-\delta_k) k + (1-\delta_h)h\le zg(h,k)
$.
Then
\[
\psi(k,h,z) \le \frac 1{1-\sigma}z^{1-\sigma}g(k,h)^{1-\sigma}\le  \frac 1{1-\sigma} z^{\frac{1-\sigma}{1-\rho}} g(k,h)^{1-\sigma} = l_0(k,h,z).
\]
Let us prove that $\beta \int_Z \widehat l_0(k,h,z,z')Q_z(dz')\le \alpha l_0(k,h,z)$, for all $(k,h)\in K$, for all $z\in Z$, and for all $K\in \mathcal{K}$, where $\alpha = \gamma^{1-\sigma} \Theta $. Here, to simplify notation in what follows, we have defined
\[
\widehat l_0(k,h,z,z')= \max_{(k',h',c,n,\ell)\in \Gamma(k,h,z)} l_0(k,h,z).
\]
First, we determine a bound for $\widehat l_0$.
To this end, consider the Lagrange problem
\begin{equation}\label{Lagrange}
\begin{aligned}
&\max \ g(k',h')^{1-\sigma},\\
&\mbox{s. t.: } k'+h' \le z g(k,h),\\
&\qquad k',h'\ge 0,
\end{aligned}
\end{equation}
and notice that its feasible set is larger than $\Gamma(k,h,z)$. The constraint is binding at the optimal solution, which is $k' = \alpha z g(k,h)$, $h'=(1-\alpha) z g(k,h)$. Substituting this into the objective function of \eqref{Lagrange}, we find its optimal value, $\gamma^{1-\sigma}  z^{1-\sigma} g(k,h)^{1-\sigma}$. Thus, $\widehat l_0(k,h,z,z') \le \frac 1{1-\sigma} (z')^{\frac{1-\sigma}{1-\rho}}\gamma^{1-\sigma}z^{1-\sigma} g(k,h)^{1-\sigma}$. Second, we use the conditional expectation $\int_Z (z')^{\frac{1-\sigma}{1-\rho}}Q_z(dz') = z^{\rho \frac{1-\sigma}{1-\rho}}\Theta $ to estimate
\begin{align*}
\int_Z \widehat l_0(k,h,z')Q_z(dz')& \le \gamma^{1-\sigma} z^{1-\sigma} z^{\rho \frac{1-\sigma}{1-\rho}}\Theta  g(k,h)^{1-\sigma}\\
& = \gamma^{1-\sigma}\Theta  l_0(k,h,z).
\end{align*}
Since $\beta \gamma^{1-\sigma} \Theta <1$, Proposition \ref{prop:l0r0} applies. The expression for $R_0$ requires a simple computation.
\end{proof}

\subsection{Asset Prices in an Exchange Economy}\label{sect:Exchange}
\cite{Lucas} studied the determination of equilibrium asset prices in a pure exchange economy in a framework that has become classical in the economics and financial literature. Boundedness of the utility function, as well as compactness of the space of shocks, are important hypotheses in the development of this model. In this section, we show that these hypotheses can be dispensed with by using the results of Theorem \ref{GeneralDynPro}. In this way, we significantly extend the model's range of applicability.

We closely follow \cite{SLP} in the exposition of the problem.
The preferences of the representative consumer over random consumption sequences are
\begin{equation}\label{ULucas}
\E\ \sum_{t=0}^{\infty} \beta ^t u(c_t),
\end{equation}
where $u:\R_+\longrightarrow \R_+$ is continuous, not necessarily bounded, with\footnote{If $u$ does not satisfy $u\ge 0$ and $u(0)=0$, but is bounded from below, it may be modified to $u(c)-u(0)$ to fulfill our hypotheses.} $u(0)=0$ and where $\beta\in (0,1)$. There are $i=1,\ldots,k$ productive assets taking values on a set $Z\subseteq \R^k_+$, not necessarily compact, with Borel sets $\mathcal{Z}$. The components $z_i$ of the vector $z=(z_1,\ldots,z_k)^{\top}$ in $Z$ represents the dividend paid by one unit of asset $i$. In the description of the model, all vectors are considered column vectors, and the symbol $^{\top}$ denotes transposition. We assume that the dividends follow a Markov process, with stationary transition function $Q$ on $(Z,\mathcal{Z})$. Assets are traded on a competitive stock market at an equilibrium price given by a stationary continuous price function $p:Z\longrightarrow \R^k_+$, where $p(z) = (p_1(z),\ldots,p_k(z))^{\top}$ is the vector of asset prices if the current state of the economy is $z$ (the notation for prices should not be confused with the notation for seminorms, which always carry a subindex). The goal is to characterize equilibrium asset prices. Let $x=(x_1,\ldots, x_k)^{\top}\in \R^k_+$ be the vector of the consumer's asset holdings. Given the price function $p$, the initial state of the economy $z_0$ and initial asset holdings $x_0$, the consumer chooses a sequence of plans for consumption and end-of-period asset holdings that maximizes discounted expected utility \eqref{ULucas} subject to
\begin{align}
&c_t + x^{\top} _{t+1}p(z_t) \le  x_t^{\top} (z_t + p(z_t))  \quad \mbox{for all $z^t$, for all $t$,}\label{Lconst1}\\
&c_t, x_{t+1} \ge 0\quad \mbox{for all $z^t$, for all $t$.}\label{Lconst2}
\end{align}
The consumer holds exactly one unit of each asset in equilibrium, hence we can restrict the state space to $X=[0,\overline x]^k$, with $\overline x >1$, with its Borel subsets $\mathcal{X}$. The correspondence $\Gamma:X\times Z \longrightarrow 2^X$ is
\[
\Gamma(x,z) =\{y\in X\,:\, y^{\top} p(z)  \le x^{\top} (z+p(z))\}.
\]
Assuming that $p$ is continuous, $\Gamma$ is nonempty, compact valued and continuous.
Given the price $p$, the dynamic programming equation is
\begin{equation}\label{LucasDP}
v(x,z) = \max_{y\in \Gamma(x,z)} \Big\{u(x^{\top} z  + (x-y)^{\top}p(z)) + \beta \int_Z v(y,z') Q_z(dz')\Big\}.
\end{equation}
We will look for solutions to this functional equation in the class $\mathcal{L}^1(Z;C(X))$. Since the state space is compact, and the utility function $u$ is bounded from below, we take the trivial family of compact sets $\mathcal{K}=\{X\}$ in this model, and not the whole family of compact subsets of $X$.
We impose the following assumptions.
\begin{align}
&u \mbox{ is nondecreasing and concave}\label{uLucas};\\
&p_i(z) \le a_i^{\top}  z + b_i \mbox{ for some vectors  $a_i\ge 0$ and scalars $b_i> 0$},\label{pLucas}\\
& \mbox{$i=1,\ldots,k$}\nonumber.
\end{align}
Hence, we look for equilibrium prices in the class of functions that are bounded by an affine function. Other possibilities could obviously be explored. We state two results about the existence of equilibrium in a Lucas asset pricing model satisfying \eqref{uLucas} and \eqref{pLucas} under two different regimes for the Markov chain.

\begin{enumerate}[{(}{M}1{)}]
\item
The Markov chain is given by $z_{t+1} =  Bz_t + w_t$, where $B$ a matrix of order $k$ with non--negative entries and norm\footnote{The norm of a matrix $B$ is defined by $\|B\| = \sup\left\{\frac{\|Bx\|}{\|x\|}\,:\, x\in \R^l\mbox{ with $x\neq 0$}\right\}$. The condition $\|B\|<1$ is equivalent to saying that the spectral radius of $B$---the maximum of the module of the eigenvalues of $B$---is less than one.} $\|B\|<1$, and where $\{w_t\}_{t=1}^{\infty}$ are i.i.d. random vectors with support in a Borel subset $W\subseteq \R_+^k$ with finite expectation, $0\le \E{w}<\infty$.
\item
The Markov chain is given by $z_{i,t+1} = z_{i,t} ^{\rho_i} w_{i,t+1}$, for all $t=0,1,2,\ldots$, where $0\le \rho_i \le 1$ for all $i=1,\ldots,k$, and where $\{w_t\}_{t=1}^{\infty}$ are i.i.d. random vectors\footnote{Hence, we are now considering a linear log--log system of uncoupled equations for the evolution of dividends.} with support in a Borel subset $W\subseteq [1,\infty)^k$ such that
\begin{equation}\label{mulLucas}
\rho_i \le 1 \mbox{ for all } i=1,\ldots,k \mbox{ and, if $\rho_i=1$, then }\E{w_i} < 1/\beta.
\end{equation}
\end{enumerate}

In the proof that follows, as well as in the rest of the paper, we use the same notation for inequalities between scalars and inequalities between vectors, which have to be understood in a pointwise manner.

\begin{theorem}\label{th:Lucas1}
Consider the Lucas Asset Pricing model described above in \eqref{ULucas}-\eqref{Lconst2}, for which \eqref{uLucas} and \eqref{pLucas} hold and the Markov chain satisfies either (M1) or (M2). Then there is a unique solution of \eqref{LucasDP} in $V(0, R_0)$ for a suitable $R_0$, which is the value function $v$ of the problem, and the conclusions of Theorem \ref{GeneralDynPro} hold.
\end{theorem}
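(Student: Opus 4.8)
The plan is to verify that the standing assumptions (B1)--(B6) hold, so that Theorem~\ref{GeneralDynPro} applies verbatim and produces the unique solution $v^{*}\in V(0,R_{0})$, its coincidence with the value function, the convergence of the iterates, and the stated properties of $\Gamma^{*}$. Assumptions (B1)--(B4) are immediate from the description of the model ($X=[0,\overline{x}]^{k}$ is compact, $\mathcal{K}=\{X\}$, $\beta\in(0,1)$, $Q$ is a transition function, and $\Gamma$ is nonempty, compact-valued and continuous once $p$ is continuous), and (B5) holds because $U(x,y,z)=u\bigl(x^{\top}z+(x-y)^{\top}p(z)\bigr)$ is continuous in $(x,y)$ and Borel measurable in $z$ (after, if necessary, extending $u$ by $u(t)=u(0)=0$ for $t<0$, which is harmless since the argument is nonnegative on the graph of $\Gamma$). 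The whole content of the proof is therefore the verification of (B6), and the route I would take is Proposition~\ref{prop:l0r0} with a majorant $l_{0}$ that is \emph{constant in the endogenous variable and affine in the shock}, $l_{0}(x,z)=\ell_{0}(z):=A^{\top}z+B$ with $A\ge0$, $B>0$; then $\max_{y\in\Gamma(x,z)}l_{0}(y,z')=\ell_{0}(z')$ and the required inequality becomes $\int_{Z}\ell_{0}(z')\,Q_{z}(dz')\le\alpha\,\ell_{0}(z)$ for some $\alpha$ with $\alpha\beta<1$.

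I would start by bounding $\psi=T0$. Since $u$ is nondecreasing and $p(z)\ge0$, the maximand in \eqref{LucasDP} at $f\equiv0$ is largest at $y=0\in\Gamma(x,z)$, so $\psi(x,z)=u\bigl(x^{\top}(z+p(z))\bigr)$. Using $x\le\overline{x}\mathbf{1}$ and the affine price bound \eqref{pLucas} gives $x^{\top}(z+p(z))\le c^{\top}z+d=:\eta(z)$ with explicit $c\ge\overline{x}\mathbf{1}>0$ and $d>0$; and concavity of $u$ together with $u(0)=0$ yields an affine majorant $u(t)\le at+b'$ on $\R_{+}$ (e.g.\ $a=u'_{+}(1)\ge0$, $b'=u(1)\ge0$, using that a concave nondecreasing function on $\R_{+}$ grows at most linearly). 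Hence $0\le\psi(x,z)\le(ac)^{\top}z+(ad+b')$, so the constraint $l_{0}\ge|\psi|$ reduces to $A\ge ac$ and $B\ge ad+b'$.

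The decisive step, and the one I expect to be the main obstacle, is the contraction inequality for $\ell_{0}$ under each regime, because the naive bounds on the conditional moments of the next shock are not sharp enough and must be refined. \emph{Under (M1)}, $\int_{Z}z'\,Q_{z}(dz')=Bz+\E{w}$, so $\int_{Z}\ell_{0}(z')\,Q_{z}(dz')=(B^{\top}A)^{\top}z+A^{\top}\E{w}+B$; I would take $A=\lambda(I-B^{\top})^{-1}\mathbf{1}$, which is a well-defined strictly positive vector since $\|B^{\top}\|=\|B\|<1$ makes the Neumann series $\sum_{n\ge0}(B^{\top})^{n}$ convergent with nonnegative terms, whence $B^{\top}A=A-\lambda\mathbf{1}\le A$, so $B^{\top}A\le\alpha A$ for every $\alpha\ge1$. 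Choosing $\alpha\in(1,1/\beta)$ (nonempty as $\beta<1$), then $\lambda$ large enough that $A\ge ac$, and finally $B\ge\max\{ad+b',\,A^{\top}\E{w}/(\alpha-1)\}$, gives $\int_{Z}\ell_{0}(z')\,Q_{z}(dz')\le\alpha\,\ell_{0}(z)$. \emph{Under (M2)}, $\int_{Z}z'_{i}\,Q_{z}(dz')=z_{i}^{\rho_{i}}\E{w_{i}}$, and the delicate point is that for coordinates with $\rho_{i}<1$ the factor $\E{w_{i}}$ may be large, so $z_{i}^{\rho_{i}}\le z_{i}$ is too crude; instead I would invoke the elementary inequality $z_{i}^{\rho_{i}}\le\delta z_{i}+C_{\delta}$ (valid for all $z_{i}\ge0$, $\delta>0$, with a suitable $C_{\delta}$, from weighted AM--GM) to push those contributions into a small-slope linear part plus a constant, while coordinates with $\rho_{i}=1$ contribute $A_{i}\E{w_{i}}z_{i}$ with $\E{w_{i}}<1/\beta$ by \eqref{mulLucas}. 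Taking $\alpha\in\bigl(\max\{1,\max_{\rho_{i}=1}\E{w_{i}}\},\,1/\beta\bigr)$, then $\delta$ small enough that $\delta\E{w_{i}}\le\alpha$ for all $i$ with $\rho_{i}<1$, then $A\ge ac$, and finally $B$ large enough that $B\ge ad+b'$ and $(\alpha-1)B\ge\sum_{\rho_{i}<1}C_{\delta}\E{w_{i}}A_{i}$, again yields $\int_{Z}\ell_{0}(z')\,Q_{z}(dz')\le\alpha\,\ell_{0}(z)$. In both regimes $\ell_{0}$ is $Q_{z}$-integrable because $\E{w}<\infty$, so $l_{0}\in\mathcal{L}^{1}(Z;C(X))$; Proposition~\ref{prop:l0r0} then gives (B6) with $R_{0}(X,z)=\frac{1}{1-\alpha\beta}\int_{Z}\ell_{0}(z')\,Q_{z}(dz')$.

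It remains to obtain part (b), i.e.\ to check that the hypothesis of Theorem~\ref{GeneralDynPro}(b) ---existence of a measurable selection of $G^{v^{*}}$--- is automatic here. Since $v^{*}\in\mathcal{L}^{1}(Z;C(X))$, dominated convergence (with the $L^{1}(Q_{z})$-dominating function $z'\mapsto\max_{x\in K}|v^{*}(x,z')|$ supplied by the definition of the function space) shows that $y\mapsto Mv^{*}(y,z)$ is continuous for each $z$, so the maximand $(x,y)\mapsto u\bigl(x^{\top}(z+p(z))-y^{\top}p(z)\bigr)+\beta\,Mv^{*}(y,z)$ is continuous in $(x,y)$ and Borel measurable in $z$; together with $\Gamma$ being a measurable, compact-valued, continuous correspondence (with, e.g., the constant selection $y\equiv0$), the measurable maximum theorem produces a measurable selection of $G^{v^{*}}$. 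Theorem~\ref{GeneralDynPro}(b) then delivers $v=v^{*}$, the convergence $p_{X,z}(T^{n}v_{0}-v)\to0$ for every $v_{0}\in V(0,R_{0})$, and that $\Gamma^{*}(\cdot,z)$ is nonempty, compact-valued and upper hemicontinuous, which completes the argument.
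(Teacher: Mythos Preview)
Your proof is correct and complete, but it follows a genuinely different route from the paper's. The paper verifies (B6) \emph{directly}: it takes $l_{0}(z)=u(\overline{x}^{\top}b)+\overline{u}\,\overline{x}^{\top}(I_{k}+A)z$ (with $\overline{u}$ a supergradient of $u$ at the positive point $\overline{x}^{\top}b$), defines the whole sequence recursively by $l_{t+1}(z)=\beta\int_{Z}l_{t}(z')\,Q_{z}(dz')$, computes these iterates explicitly by induction in each regime (M1) and (M2), and then shows that the series $\sum_{t}l_{t}$ converges. You instead go through Proposition~\ref{prop:l0r0}: you engineer an affine majorant $\ell_{0}(z)=A^{\top}z+B$ so that the single-step inequality $\int_{Z}\ell_{0}(z')\,Q_{z}(dz')\le\alpha\,\ell_{0}(z)$ holds with $\alpha\beta<1$, collapsing (B6) to a choice of the vector $A$ (via the Neumann series $(I-B^{\top})^{-1}\mathbf{1}$ under (M1), freely under (M2)) and then an inflation of the scalar constant to absorb the residual terms (using the elementary bound $z_{i}^{\rho_{i}}\le\delta z_{i}+C_{\delta}$ for the sublinear coordinates under (M2)). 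The paper's approach is computationally heavier but needs no clever choice of $A$ and yields explicit iterates; yours is shorter once the right $\ell_{0}$ is found and makes the role of the spectral condition $\|B\|<1$ (respectively $\beta\,\E{w_{i}}<1$ when $\rho_{i}=1$) visible as a single contraction step rather than as the ratio governing a series. You also supply the measurable-selection verification needed to invoke Theorem~\ref{GeneralDynPro}(b), which the paper's proof of Theorem~\ref{th:Lucas1} leaves implicit. One cosmetic point: you overload the symbol $B$ for both the constant in $\ell_{0}$ and the transition matrix in (M1); the argument is unaffected, but renaming one of them would avoid confusion.
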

\begin{proof} Let $A$ be the matrix whose columns are the vectors $a_1,\ldots,a_k$ in \eqref{pLucas}, and let $b=(b_1,\ldots,b_k)^{\top}$, hence we can write $0\le p(z)\le Az + b$.
Let us construct a family of functions $\{l_t\}_{t=0}^{\infty}$ satisfying assumption (B6). Obviously
\[
x^{\top} (z+p(z)) \le \overline x ^{\top} z(I_k + A)  + \overline x ^{\top} b,
\]
where $I_k$ is the indentity matrix. Since $u$ is increasing and concave, for a supergradient $\overline u$ of $u$ at $\overline x ^{\top} b>0$, we have
\[
\psi (x,z) = u(x^{\top} (z+p(z)) \le u(\overline x^{\top} b) + \overline u\; \overline x^{\top} (I_k+A)z.
\]
We define $l_0(z) =  u(\overline x^{\top} b) + \overline u\; \overline x^{\top} (I_k+A) z$ and,  recursively, $l_{t+1}(z) = \beta \int_Z l_t(z')Q(z,dz')$, for $t=0,1,\ldots$.

Suppose first that $Q$ satisfies (M1). In this case, $\E_{z}{z'} = Bz + \E {w}$, hence
\begin{align*}
l_1(z) &= \beta \int_Z l_0(z')  Q(z,dz')\\
& =   \beta  u(\overline x^{\top} b) + \beta \overline u\; \overline x^{\top}  (I_k+A) \E_{z}{z'}\\
& = \beta (  u(\overline x^{\top} b) + \overline u\; \overline x^{\top} (I_k+A) (Bz+\E{w}).
\end{align*}
We will prove by induction that
\begin{align*}
l_{t}(z) &= \beta^{t} u(\overline x^{\top} b) + \beta^{t}  \overline u\; \overline x^{\top} (I_k+A) \big(B^tz + (B^{t-1}+\cdots +  I_k )\E{w}\big),
\end{align*}
for all $t=1,2,\ldots$. For $t=1$ it has been just proved. Suppose it is true for $t$. Then
\begin{align*}
l_{t+1} &= \beta^{t+1} u(\overline x^{\top} b) +  \beta^{t+1}  \overline u\; \overline x^{\top} (I_k+A) \big(B^t\E_z{z'} + (B^{t-1}+\cdots + I_k )\E{w}\big)\\
&=\beta^{t+1} u(\overline x^{\top} b) \\
&\quad + \beta^{t+1}  \overline u\; \overline x^{\top} (I_k+A) \big(B^{t}(Bz+\E{w})
+ (B^{t-1}+\cdots + I_k )\E{w}\big),
\end{align*}
and we are done. On the other hand, $(B^{t-1}+\cdots +  I_k )\E{w}\le  (I_k -B)^{-1}\E{w}$, since $B$ has nonnegative entries, $\|B\|<1$ and $\E{w}>0$. Hence, the series
$w_0(z) = \sum_{t=0}^{\infty}l_t(z)$ is unconditionally convergent, since it is bounded by the function of  $L^1(Z)$ defined by
\[
\overline w_0(z) := \frac 1{1-\beta} (u(\overline x^{\top} b)  + \beta \overline u\; \overline x^{\top}(I_k+A)  \big((I_k - \beta B)^{-1} z + (I_k - B) ^{-1} \E{w} \big),
\]
where we have used $\sum_{t=0}^{\infty} (\beta B)^t = ( I_k - \beta B)^{-1}$.
Hence (B6) holds with $R_0(z)=\sum_{t=0}^{\infty}p_z(l_t)$.

If $Q$ satisfies (M2), then $\E_z{z'}= (z_1^{\rho_1}\E{w_1},\ldots,z_k^{\rho_k}\E{w_k})^{\top}$. Define, as above, $l_0(z) =  u(\overline x\cdot b) + \overline u\; \overline x\cdot z (I_k+A)$ and let $l_{t+1}(z) = \beta \int_Z l_t(z')Q(z,dz')$, for $t=0,1,\ldots$. Then it is easy to prove by induction that
\[
l_{t} (z) = \beta^{t} u(\overline x^{\top} b) + \beta^{t}  \overline u\; \overline x^{\top} (I_k+A) \left(z_1^{\rho_1^t}\Pi_{s=0}^{t-1}\E{\left(w_1^{\rho_1^s}\right)},\ldots,z_k^{\rho_k^t}\Pi_{s=0}^{t-1}\E{\left(w_k^{\rho_k^s}\right)}\right)^{\top}.
\]
By Jensen's inequality, $\E{\left(w_i^{\rho_i^s}\right)} \le (\E{w_i})^{\rho_i^s}$ and thus
\[
l_{t}(z) \le \beta^{t} u(\overline x^{\top} b) + \beta^{t}  \overline u\; \overline x^{\top} (I_k+A) \left(z_1^{\rho_1^t}\left(\E{w_1}\right)^{1/(1-\rho_1)},\ldots, z_k^{\rho_k^t}\left(\E{w_k}\right)^{1/(1-\rho_k)}\right)^{\top}.
\]
In the case that $\rho_i<1$ for all $i=1,\ldots, k$, the series $w_0(z) = \sum_{t=0}^{\infty} l_t(z)$ is clearly (unconditionally) convergent since $\beta <1$ and $\rho_i<1$ for all $i=1,\ldots,k$. The ratio test can be used to prove this claim. In the case in which $\rho_j=1$ for some $j$, then the bound above no longer applies, as a term $z_j (\E{w_j})^t$ appears in position $j$ of the vector $\left(z_1^{\rho_1^t}\Pi_{s=0}^{t-1}\E{w_1^{\rho_1^s}},\ldots,z_k^{\rho_k^t}\Pi_{s=0}^{t-1}\E{w_k^{\rho_k^s}}\right)^{\top}$. However, the assumption $\beta \E{w_j}<1$ assures convergence of the series $w_0(z) = \sum_{t=0}^{\infty} l_t(z)$.

Hence, in both cases considered, (M1) and (M2), the condition \eqref{mulLucas} guarantees that (B6) holds with $R_0(z)=\sum_{t=0}^{\infty}p_z(l_t)$. Thus, Theorem \ref{GeneralDynPro} applies. 
\end{proof}

%

To complete the circle, we have to prove that our conjecture \eqref{pLucas} about the equilibrium price holds. Following \cite{Lucas} or \cite{SLP}, we now assume the further conditions:
\begin{align}
\label{eqLucas1}
&\mbox{$u(0)=0$, $u$ is continuously differentiable, with $u'(c)>0$ for all $c\ge 0$,}\\
&\mbox{and strictly concave;}\nonumber
\end{align}
we also impose
\begin{align}
\label{eqLucas3}
&\mbox{there are constants $\gamma,\delta \ge 0$ such that } cu'(c) \le \gamma c + \delta,  \mbox{ for all $c\ge 0$;}\\
& \mbox{there exists $a>0$ such that }u'(\mathbf{1}^{\top}z) \ge a \mbox{ for all $z\in Z$, where $\mathbf{1}=(1,\ldots,1)^{\top}$}.\label{eqLucas4}
\end{align}

A function like $u(c) = c^{1-\sigma}/(1-\sigma) + c$, with $0\le \sigma <1$, satisfies \eqref{eqLucas1}-\eqref{eqLucas4}. Also, if $Z$ is bounded, then \eqref{eqLucas1} implies \eqref{eqLucas4}.

Finding an equilibrium price function $p(z)=(p_1(z),\ldots,p_k(z))^{\top}$ is equivalent to finding functions $\phi_1(z),\ldots, \phi_k(z)$ that satisfy the $k$ independent functional equations
\begin{equation}\label{Lucasphi}
\phi_i(z) = h_i(z) + \beta \int _Z \phi_i(z') Q(z,dz'), \quad i=1,\ldots,k,
\end{equation}
where $h_i(z)=\beta \int_Z z'_i u'(\mathbf{1}^{\top}z')  Q(z,dz')$, for all $i =1,\ldots,k$.
\cite{Lucas} shows that a solution to \eqref{Lucasphi} provides an equilibrium price $p$ given by
\begin{equation}\label{priceLucas}
p_i(z) = \frac{\phi_i(z)}{u'(\mathbf{1}^{\top} z )},\quad \mbox{for $i=1,\ldots,k$}.
\end{equation}
Let, as in \cite{Lucas}, the operator $T_i$ be
\[
T_if (z) = h_i(z) + \beta \int _Z f(z') Q(z,dz'),\quad \mbox{for all } f\in L^1(Z,\mathcal{Z},Q_z),\quad  i=1,\ldots,k.
\]
Note that, in this context, the seminorms are simply defined by $p_z(f) = \int_Z |f(z')|Q_z(dz')|$. It is pretty clear that the COP associated to $T_i$ is given by
\[
Lp (z) = \beta \int_Z p(z')Q_z(dz'),
\]
where $p$ belongs to a suitable set $C$ as defined in \eqref{SetC}.

\begin{theorem}\label{prop:Lucas1}
Consider the Lucas Asset Pricing model described above in \eqref{ULucas}-\eqref{Lconst2}, for which \eqref{eqLucas1}--\eqref{eqLucas4} hold and the Markov chain satisfies either (M1) or (M2). Then there is an equilibrium price $p$ satisfying \eqref{pLucas}.
\end{theorem}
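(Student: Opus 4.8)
The plan is to follow \cite{Lucas} (and \cite{SLP}): producing an equilibrium price function amounts to solving, for each $i=1,\dots,k$, the scalar functional equation \eqref{Lucasphi} for $\phi_i$ and then defining $p_i$ by \eqref{priceLucas}. I would proceed in three steps: (i) obtain a fixed point $\phi_i$ of the affine operator $T_i$ by applying Theorem \ref{GeneralDynPro} to a degenerate instance of the dynamic programming model; (ii) show $\phi_i\ge 0$ and that $\phi_i$ is dominated by an affine function of $z$; (iii) deduce \eqref{pLucas} via \eqref{eqLucas4} and record the continuity of $p_i$.

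For (i) and (ii), fix $i$ and consider the model $(X,Z,\Gamma,Q,U,\beta)$ in which $X$ is a singleton, $\Gamma$ is the (trivial) constant correspondence, and $U\equiv h_i$; then, in the notation of Section \ref{Sect:BO}, the Bellman operator is exactly $T_i$ and $\psi=T_i 0=h_i$, while the COP reduces to $Lp(z)=\beta\int_Z p(z')Q_z(dz')$. Assumptions (B1)--(B5) are immediate. For (B6), note first that \eqref{eqLucas3} together with $0\le z_i'\le \mathbf{1}^{\top}z'$ and $u'>0$ give $z_i'u'(\mathbf{1}^{\top}z')\le \gamma\,\mathbf{1}^{\top}z'+\delta$, hence
\[
0\le h_i(z)\le \beta\gamma\,\mathbf{1}^{\top}\E_z{z'}+\beta\delta;
\]
in regime (M1) the right-hand side is affine in $z$, and in regime (M2) it is $\beta\gamma\sum_j(\E{w_j})z_j^{\rho_j}+\beta\delta\le \beta\gamma\sum_j(\E{w_j})(1+z_j)+\beta\delta$, again affine (using $z_j^{\rho_j}\le 1+z_j$ since $\rho_j\le 1$). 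Call this affine bound $l_0$ and set $l_{t+1}=\beta M l_t$. The iterates $\beta M$ act exactly as in the proof of Theorem \ref{th:Lucas1} --- affinely in regime (M1), and turning $z_j$ into $z_j^{\rho_j^t}$ in regime (M2) --- so, using $\|B\|<1$ (resp. Jensen's inequality together with \eqref{mulLucas} and $z_j^{\rho_j^t}\le 1+z_j$), one obtains that the series $w_0:=\sum_{t\ge 0}l_t$ is unconditionally convergent and bounded above by an affine function $\overline w_0(z)=\alpha_i^{\top}z+\gamma_i$ with $\alpha_i\ge 0$, $\gamma_i\ge 0$. Thus all of (B1)--(B6) hold and Theorem \ref{GeneralDynPro}(a) yields a unique solution $\phi_i\in V(0,R_0)$ of \eqref{Lucasphi}. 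Iterating $T_i$ from the null function gives $T_i^{\,n}0=\sum_{t=0}^{n-1}\beta^t M^t h_i$, a nondecreasing sequence of nonnegative functions converging to $\phi_i$; therefore $0\le \phi_i(z)\le w_0(z)\le \alpha_i^{\top}z+\gamma_i$ for every $z\in Z$.

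For (iii), \eqref{eqLucas4} gives $u'(\mathbf{1}^{\top}z)\ge a>0$, so
\[
0\le p_i(z)=\frac{\phi_i(z)}{u'(\mathbf{1}^{\top}z)}\le \frac{1}{a}\bigl(\alpha_i^{\top}z+\gamma_i\bigr)=:a_i^{\top}z+b_i,
\]
with $a_i=\alpha_i/a\ge 0$ and $b_i=\gamma_i/a+1>0$, which is \eqref{pLucas}. For the continuity of $p_i$: by \eqref{eqLucas1}, $u'$ is continuous and strictly positive, so it suffices that $\phi_i$ be continuous; this follows from the sufficient conditions of Appendix \ref{App:Con} on $Q$ (satisfied in regimes (M1) and (M2)), with the affine dominating function $\overline w_0$ supplying the domination needed to pass limits under the integral, so that each partial sum $\sum_{t<n}\beta^t M^t h_i$ is continuous and the convergence to $\phi_i$ is uniform on compact subsets of $Z$. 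Hence $p=(p_1,\dots,p_k)^{\top}$ is a continuous price function satisfying \eqref{pLucas}, and by \cite{Lucas} (or \cite{SLP}) it is the sought equilibrium price.

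The main obstacle is step (ii): proving that the fixed point $\phi_i$ inherits an affine bound requires showing that $\beta M$ preserves a suitable class of dominating functions and that the associated (geometric-type) series of coefficients converges. This is precisely the estimate carried out in the proof of Theorem \ref{th:Lucas1}, now applied to $h_i$ --- controlled through \eqref{eqLucas3} --- rather than to the earlier $l_0$, with the two Markov regimes handled exactly as there; the only additional point is the continuity of $\phi_i$, which reduces to the criteria of Appendix \ref{App:Con}.
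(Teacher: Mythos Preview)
Your proposal is correct and follows essentially the same route as the paper: bound $h_i$ via \eqref{eqLucas3}, take $l_0=h_i$, iterate under $\beta M$ and verify (B6) using the same estimates as in Theorem \ref{th:Lucas1} for regimes (M1) and (M2), apply the fixed-point machinery to $T_i$, deduce $0\le \phi_i\le w_0$ with $w_0$ affine, and divide by $u'(\mathbf{1}^{\top}z)\ge a$ from \eqref{eqLucas4}. The only addition is your continuity discussion, which is not required by the statement and is slightly overclaimed: Lemma \ref{lemma:cont} needs an absolutely continuous density for $Q_z$, a hypothesis that neither (M1) nor (M2) imposes in general.
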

\begin{proof}
Note that $z\ge 0$ and \eqref{eqLucas3} imply
\begin{equation}\label{hineq}
z_i u'(\mathbf{1}^{\top} z)  \le (\mathbf{1}^{\top} z) u'(\mathbf{1}^{\top} z)   \le \gamma (\mathbf{1}^{\top} z) + \delta.
\end{equation}
Suppose that $Q$ satisfies (M1). Then
\begin{align*}
h_i(z) &\le \beta  \left(\gamma \left(\mathbf{1}^{\top} \E_z{z'}\right) + \delta\right)
\le
\beta  \left(\gamma \left(\mathbf{1}^{\top} (Bz+\E{w})\right) + \delta\right).
\end{align*}
This implies that the operator $T_i$ is a self--map in $L^1(Z)$, for all $i=1,\ldots,k$.  We want to apply Theorem \ref{GeneralDynPro} to each of the operators $T_i$, where the COP associated to $T_i$ is given just above the theorem.
Let $l_0(z)=h_i(z)$ and define 
\[
l_t(z) = \beta^t  \left(\gamma \left(\mathbf{1}^{\top} (B^t z+\E{w})\right) + \delta\right),
\]
for $ t=1,2,\ldots$.
It is immediate to check that $l_{t+1}\ge \beta \int_Z l_t(z')Q(z,dz')$ and that the series $w_0(z)=\sum_{t=0}^{\infty} l_t(z)$ is (unconditionally) convergent, since $\beta \|B\| <1$. The sum of this series is $w_0(z)=\gamma \mathbf{1}^{\top} (I_k-\beta B)^{-1}z  + \frac 1{1-\beta}(\gamma \mathbf{1}^{\top} \E w + \delta)$. Hence, (B6) holds. Moreover, following analogous reasonings as in the proof of part (b) of Theorem \ref{GeneralDynPro}, the fixed point of $T_i$, $\phi_i$, satisfies $\phi_i\le w_0$, and thus, by \eqref{eqLucas4}
\[
p_i(z) = \frac{\phi_i(z)}{ u'(\mathbf{1}^{\top} z)} \le \frac 1a \gamma \mathbf{1} ^{\top} (I_k-\beta B)^{-1} z + \frac 1{a(1-\beta)}(\gamma \mathbf{1}^{\top} \E w  + \delta),
\]
for all $i=1,\ldots,k$, where the right hand side is an affine function of $z$. Then, $p=(p_1,\ldots,p_k)^{\top}$ satisfies \eqref{pLucas} with
\[
a_i^{\top} = \frac 1a \gamma \mathbf{1} ^{\top} (I_k-\beta B)^{-1},\quad b_i = \frac 1{a(1-\beta)}(\gamma \mathbf{1}^{\top} \E w  + \delta),\mbox{ for all $i=1,\ldots,k$}.
\]

Suppose that $Q$ satisfies (M2). Now $\E_z{z'}= (z_1^{\rho_1}\E{w_1},\ldots,z_k^{\rho_k}\E{w_k})^{\top}$. From \eqref{hineq}, we have
\[
h_i(z)\le \beta(\gamma (\mathbf{1}^{\top} (z_1^{\rho_1}\E{w_1},\ldots,z_k^{\rho_k}\E{w_k}))+\delta).
\]
Let $l_0(z)=h_i(z)$ and $l_{t+1}(z) = \beta \int_Z l_t(z')Q(z,dz')$, for $t=0,1,\ldots$.
When $0\le \rho_i<1$ for all $i=1,\ldots,k$, using similar arguments as in the proof of Theorem \ref{th:Lucas1}, we have
\begin{align*}
l_t(z) &\le \beta^t \gamma \mathbf{1}^{\top} \left(z_1^{\rho_1^t} (\E{w_1})^{1/(1-\rho_1)},\ldots, z_k^{\rho_k^t}(\E{w_k})^{1/(1-\rho_k)}\right)+\beta^t\delta \\
&\le \beta^t (\gamma \mu \mathbf{1}^{\top} z + \delta),
\end{align*}
where $\mu:=\max\left\{z_1(\E{w_1})^{1/(1-\rho_1)},\ldots, z_k(\E{w_k})^{1/(1-\rho_k)}\right\}$.
Hence, the infinite series $w_0(z)=\sum_{t=0}^{\infty} l_t(z)=\frac {\gamma \mu \mathbf{1}^{\top} z + \delta}{1-\beta} $ is (unconditionally) convergent, (B6) holds and the fixed point of $T_i$, $\phi_i$, satisfies $\phi_i\le w_0$. It is clear then that the price $p=(p_1,\ldots,p_k)^{\top}$ defined in \eqref{priceLucas} satisfies \eqref{pLucas} with  $a_i = \frac {\gamma \mu}{a(1-\beta)}$ and $b_i = \frac{\delta}{a(1-\beta)}$, for all $i=1,\ldots,k$. In the case in which some $\rho_j=1$, the coordinate $j$ on the vector  $\left(z_1^{\rho_1^t}\Pi_{s=0}^{t-1}\E{(w_1^{\rho_1^s})},\ldots,z_k^{\rho_k^t}\Pi_{s=0}^{t-1}\E{(w_k^{\rho_k^s})}\right)$ is equal to $z_j \E{w_j}$, and then $\beta \E{w_j}<1$ is required to have convergence of the series $\sum_{t=0}^{\infty} l_t(z)$, which is then bounded by an affine expression in $z$; hence, as in the previous case, \eqref{pLucas} holds.
\end{proof}

\section{Conclusions}\label{Sect:Conclusions}
In this paper, we develop a general framework to analyze stochastic dynamic problems with unbounded utility functions and unbounded shock space. We obtain new results concerning the existence and uniqueness of solutions to the Bellman equation through a fixed point theorem that generalizes the results known for Banach contractions and local contractions. This generalization is possible by considering seminorms that give a different treatment to the endogenous state variable and the exogenous one. While a supremum norm on arbitrary compact sets is considered in the former variable, an $L^1$ type norm is in the latter variable. Putting together this definition with the aforementioned generalization of the local contraction concept, we are able to maintain the monotonicity (in a mild sense) of the Bellman operator, thus proving that it is essentially a contractive operator. The usefulness of the approach and the applicability of the results are clearly revealed in the analysis of two fundamental models of economic analysis: an endogenous growth model with a multiplicative structure in the shocks and the Lucas model of an exchange economy. The combination of unbounded rewards and unbounded shocks makes it hard to prove the existence of a unique fixed point of the Bellman equation. In this sense, another benefit of the paper is to provide a secure method to check the hypotheses needed to apply the approach, based on assumption (B6), and one that can be used straightforwardly to analyze other models. A challenging problem is to extend the theorems to deal with the unbounded from below case in a more satisfactory way, as done in \cite{RZRP} or \cite{Vailakis} for the deterministic case, by introducing a suitable family of pseudodistances.

\newpage

\appendix

\section{Proofs of auxiliary results}\label{App:Proofs}

\textsc{Proof of Lemma \ref{B_defined}}.
Due to the subhomogeneity of $L$ for finite sums, $L(r_0+Lr_0+\cdots+L^Tr_0)\le Lr_0+\cdots +L^{T+1}r_0\le R_0$, for all finite $T$. Letting $T\to \infty$, we obtain $r_0+LR_0\le R_0$. Let $x\in V_F(x_0,R_0)$, so $d_a(x_0,x)\le R_0(a)$ for all $a\in A$. By the triangle inequality and since $T$ is an $L$--local contraction
\begin{align*}
d_a(x_0,Tx) &\le d_a(x_0,Tx_0) + d_a(Tx_0,Tx)\\
& \le d_0(a) + (Ld_a) (x_0,x) \\
&\le d_0(a) + (LR_0)(a)\\
&\le R_0(a).
\end{align*}
This proves (a). To show (b), note that, by the same arguments used to prove (a), for $L^tR_0 \le L^tr_0+L^{t+1}r_0+\cdots$, for all $t=0,1,\ldots$. Then $L^tR_0(a)$ is bounded by the remainder of the convergent series $R_0(a)$,  thus it converges to 0 as $t\to\infty$, for all $a\in A$.\hfill{\it Q.E.D.}

\strut


Given $f\in \ca{X\times Z}$, we denote
\[
\widehat f (x,z) := \max_{y \in \Gamma(x,z)} f(y,z),\quad \widehat{|f|} (x,z) := \max_{y \in \Gamma(x,z)} |f(y,z)|.
\] 

We will make use of the following lemma in the main text and along this appendix.
\begin{lemma}\label{welldefined}

(1)
For all $f\in \ca{X\times Z}$, both $\widehat f$, $\widehat{| f|}\in \ca{X\times Z}$.

(2)
For all $f\in \mathcal{L}^1(Z;C(X))$, both $\widehat f$, $\widehat{| f|} \in L^1(Z,\mathcal{Z},Q_z)$, for all $z\in Z$.

\end{lemma}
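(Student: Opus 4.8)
The plan is to treat the two items separately, establishing the Carathéodory-measurability statement (1) first, and then bootstrapping it to the integrability statement (2). For item (1), I would fix $f\in \ca{X\times Z}$ and check the two defining properties of a Carathéodory function for $\widehat f$ and $\widehat{|f|}$. The continuity in $(x,y)$ — here only in $x$ since $\widehat f$ has arguments $(x,z)$ — follows from the Theorem of the Maximum of Berg\'e: for each fixed $z$, the map $(x,y)\mapsto f(y,z)$ is continuous on the graph of $\Gamma(\cdot,z)$ (this is the ``for each $z$'' clause of the Carathéodory property of $f$), and $\Gamma(\cdot,z)$ is nonempty, compact-valued and continuous by (B4); hence $x\mapsto \max_{y\in\Gamma(x,z)} f(y,z)$ is continuous. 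The same argument applies to $|f(y,z)|$, which is continuous in $(x,y)$ whenever $f$ is. For measurability in $z$: fix $x$, and I would argue that $z\mapsto \max_{y\in\Gamma(x,z)} f(y,z)$ is Borel measurable. This is the measurable-maximum part of a measurable maximum theorem (an Aumann/Berg\'e-type selection argument): $\Gamma(x,\cdot)$ is a measurable, compact-valued correspondence (it is even continuous in $z$ by (B4), a fortiori measurable), and $(z,y)\mapsto f(y,z)$ is a Carathéodory integrand, so its supremum over the correspondence is measurable. One can also obtain this more elementarily by noting that for fixed $x$, $y$, the map $z\mapsto f(y,z)$ is measurable, and by continuity of $f(\cdot,z)$ the maximum over the compact set $\Gamma(x,z)$ can be computed as a countable supremum over a dense sequence that can be chosen measurably in $z$.

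For item (2), I would use item (1) together with the definition of $\mathcal{L}^1(Z;C(X))$ and assumption (B6) / the structure of the space as defined just before Lemma \ref{lemma:function_space}. Fix $z\in Z$ and $f\in \mathcal{L}^1(Z;C(X))$. I want to show $z'\mapsto \widehat{|f|}(x',z')=\max_{y\in\Gamma(x',z')}|f(y,z')|$ is integrable with respect to $Q_z$ — but as written, $\widehat f$ and $\widehat{|f|}$ are functions on $X\times Z$, so the claim in (2) should be read as: for each fixed $x$ (or uniformly on compacta in $x$) these belong to $L^1(Z,\mathcal{Z},Q_z)$. The key observation is the pointwise bound
\[
|\widehat f(x,z')| \le \widehat{|f|}(x,z') = \max_{y\in\Gamma(x,z')}|f(y,z')| \le \max_{y\in K'} |f(y,z')|,
\]
where $K'$ is \emph{any} compact set containing $\Gamma(x,z')$. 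The subtlety is that $\Gamma(x,z')$ moves with $z'$, so I cannot use a single compact $K'$ for all $z'$. To handle this I would invoke the structure of $\mathcal{L}^1(Z;C(X))$ more carefully: the hypothesis of the main text (the clause ``such that for all $p\in D(F)$, $p[\Gamma]\in\mathcal L^1(Z;C(X))$'' in Proposition \ref{prop:LCOP}, and more concretely (B6)) is precisely what guarantees that the function $z'\mapsto \max_{y\in\Gamma(x,z')}|f(y,z')|$ — which is $p_{\cdot}[\Gamma]$-type object evaluated appropriately — is $Q_z$-integrable. So the proof of (2) reduces to: (i) measurability of $z'\mapsto \widehat{|f|}(x,z')$, which is item (1); and (ii) domination, which I would get by writing $\widehat{|f|}(x,z') = |f|[\Gamma](x,z')$ in the notation of Notation \ref{notation1}, and noting that membership of $f$ in the relevant function space built on compact sets $K=\Gamma(x,z')$ is exactly the hypothesis that makes this integrable.

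The main obstacle I expect is item (1)'s measurability-in-$z$ claim: going from ``$f(y,\cdot)$ measurable for each fixed $y$ and $f(\cdot,z)$ continuous for each fixed $z$'' to ``$\max_{y\in\Gamma(x,z)} f(y,z)$ measurable in $z$'' requires either a measurable maximum theorem or a careful ad hoc argument using a countable dense set. The cleanest route is: since $\Gamma(x,\cdot)$ is a continuous (hence measurable) compact-valued correspondence into the separable metric space $X$, by the Kuratowski–Ryll-Nardzewski selection theorem there is a sequence of measurable selectors $\{\sigma_n(z)\}_{n\in\N}$ of $\Gamma(x,\cdot)$ whose values are dense in $\Gamma(x,z)$ for each $z$; then
\[
\widehat f(x,z) = \max_{y\in\Gamma(x,z)} f(y,z) = \sup_{n\in\N} f(\sigma_n(z),z),
\]
using continuity of $y\mapsto f(y,z)$ to pass from the dense set to the whole compact set, and each $z\mapsto f(\sigma_n(z),z)$ is measurable as a composition/diagonal of measurable-in-$z$ and jointly-Carathéodory maps; a countable supremum of measurable functions is measurable. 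The analogous argument with $|f|$ in place of $f$ gives $\widehat{|f|}$. The rest of the proof is routine bookkeeping with Berg\'e's theorem and the definition of the function space, and I would not expect any real difficulty there.
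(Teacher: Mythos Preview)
Your treatment of part (1) coincides with the paper's: continuity of $x\mapsto\widehat f(x,z)$ via Berg\'e's Theorem of the Maximum, and measurability of $z\mapsto\widehat f(x,z)$ via a Measurable Maximum Theorem. The Castaing-representation argument you sketch (dense measurable selectors plus countable supremum) is a correct and more explicit alternative, but the paper simply invokes the Measurable Maximum Theorem and moves on.

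For part (2) you over-complicate matters, and in doing so reach for hypotheses the lemma does not carry. The paper's proof is a single line: since $\Gamma(x,z)$ is a compact set for each fixed $(x,z)$, and membership $f\in\mathcal L^1(Z;C(X))$ means precisely that $p_{K,z}(f)<\infty$ for every $K\in\mathcal K$ and every $z\in Z$, one just takes $K=\Gamma(x,z)$ to obtain
\[
\int_Z \max_{y\in\Gamma(x,z)}|f(y,z')|\,Q_z(dz')=p_{\Gamma(x,z),z}(f)<\infty.
\]
Your worry that ``$\Gamma(x,z')$ moves with $z'$'' arises from reading the claim as integrability of $z'\mapsto \widehat{|f|}(x,z')=\max_{y\in\Gamma(x,z')}|f(y,z')|$. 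But the only place the lemma is used (in the proof of Proposition~\ref{prop:LCOP}) requires exactly the quantity with $\Gamma(x,z)$ held fixed, and that is $p_{\Gamma(x,z),z}(f)$ by definition. There is no need to invoke (B6) or the extra hypothesis ``$p[\Gamma]\in\mathcal L^1(Z;C(X))$'' from Proposition~\ref{prop:LCOP}; importing those would add assumptions the lemma does not state and is unnecessary once you identify the correct compact set as $\Gamma(x,z)$ with the \emph{outer} $z$, not the integration variable $z'$.
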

\begin{proof}
(1)
Given the assumption made about the continuity of $\Gamma$, by the Berg\'e Theorem of the Maximum, the map $x\mapsto \widehat f (x,z)$ is continuous, for any $z\in Z$ fixed, and by the Measurable Theorem of the Maximum, $z\mapsto \widehat f (x,z)$ is Borel measurable; thus, $\widehat f$ is a Carath\'eodory function on $X\times Z$. Obviously, the same is true for $\widehat{|f|}$.

(2)
Since $p_{K,z}(f) <\infty$ for all $K\in \mathcal{K}$ and all $z\in Z$, and $\Gamma(x,z)$ is a compact set for any $x\in X$, $z\in Z$, then $\int_Z \widehat{|f|}(y,z') Q_z(dz') = p_{\Gamma(x,z),z} (f) <\infty$. Obviously, the same is true for $\widehat{f}$.
\end{proof}

\noindent
\textsc{Proof of Lemma \ref{Preparation_Th}}. We organize the proof in several previous lemmas.
\begin{lemma}\label{rt_family}
Let assumptions (B1) to (B6) to hold. Then
\begin{enumerate}
\item $\sum_{t=0}^{\infty} L^t p^{l_0} <\infty$;
\item $R_0[\Gamma]\in \mathcal{L}^1(Z;C(X))$ and $ p^{l_0} + LR_0\le R_0$.
\end{enumerate}
\end{lemma}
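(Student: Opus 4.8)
The plan is to unwind the definition of the operator $L$ from \eqref{eq:LCOP} and match its iterates $L^t p^{l_0}$ against the seminorms $p_{K,z}(l_t)$ of the functions $l_t$ from (B6). First I would record the key identity: for any $g \in \mathcal{L}^1(Z;C(X))$, $(Lp^g)(K,z) = \beta\int_Z \max_{x\in K} p^g(\Gamma(x,z'),z')\,Q_z(dz') = \beta\int_Z \max_{x\in K}\int_Z \widehat{|g|}(x,z'',\cdot)\dots$ — more precisely, $(Lp^g)(K,z) = \beta \int_Z \max_{x\in K} \big(\int_Z |g|(y,z'')\,Q_{z'}(dz'')\big)\big|_{y\in\Gamma(x,z')}\,Q_z(dz')$, so that $Lp^g = p^{h}$ where $h(x,z) = \beta\int_Z \widehat{|g|}(x,z')Q_z(dz')$ (using Notation \ref{notation1} and Lemma \ref{welldefined} to see $h \in \mathcal{L}^1(Z;C(X))$). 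Applying this with $g = l_t$ and the recursive inequality $l_{t+1}(x,z) \ge \beta\int_Z \max_{y\in\Gamma(x,z)} l_t(y,z')Q_z(dz')$ from (B6), together with monotonicity of $L$ (which holds on $C$, but at this stage I only need the elementary monotonicity of the explicit formula \eqref{eq:LCOP} in its function argument), I would prove by induction on $t$ that $L^t p^{l_0} \le p^{l_t}$ for all $t = 0,1,2,\dots$; the base case $t=0$ is the trivial equality $p^{l_0} = p^{l_0}$, and the inductive step is $L^{t+1}p^{l_0} = L(L^t p^{l_0}) \le L p^{l_t} = p^{\beta\int \widehat{l_t}Q} \le p^{l_{t+1}}$, where the last inequality is exactly (B6) combined with monotonicity of $f\mapsto p_{K,z}(f)$ on nonnegative $f$.

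Once that inductive bound is in hand, part (1) follows immediately: $\sum_{t=0}^\infty (L^t p^{l_0})(K,z) \le \sum_{t=0}^\infty p_{K,z}(l_t) = R(K,z) < \infty$ by the unconditional convergence assumed in (B6). In particular $R_0 := \sum_{t=0}^\infty L^t p^{l_0}$ is a well-defined element of $\R_+^{\mathcal{K}\times Z}$ (this is consistent with the $R_0$ of (VI) once we take $r_0 = p^{l_0} = d_0$, since $d_0(K,z) = p_{K,z}(T0 - 0) = p_{K,z}(\psi) \le p_{K,z}(l_0)$, so the same $R_0$ serves — I may want to note this identification, or simply define $R_0$ as the series $\sum_t L^t p^{l_0}$ here and reconcile it with (VI) in the main proof of Lemma \ref{Preparation_Th}).

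For part (2), I would first establish the inequality $p^{l_0} + LR_0 \le R_0$. This is the standard telescoping argument: using subadditivity of $L$ on finite sums and $L$-monotonicity (or just working with the explicit formula \eqref{eq:LCOP}, which is additive in its argument up to the $\max$, hence subadditive), $L\big(\sum_{t=0}^N L^t p^{l_0}\big) \le \sum_{t=0}^N L^{t+1}p^{l_0} = \sum_{t=1}^{N+1} L^t p^{l_0} \le R_0 - p^{l_0} + L^{N+1}p^{l_0}$; then letting $N\to\infty$ and using the upper-sup-preserving property (assumption (V), applied to the bounded countable chain of partial sums) gives $L R_0 \le \sup_N L\big(\sum_{t=0}^N L^t p^{l_0}\big) \le R_0 - p^{l_0}$, i.e. $p^{l_0} + LR_0 \le R_0$ as claimed. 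Then $R_0[\Gamma] \in \mathcal{L}^1(Z;C(X))$ follows because $R_0 \le $ (something controlled by $w = \sum_t l_t$): indeed $R_0(K,z) \le \sum_t p_{K,z}(l_t) = p_{K,z}(w)$ where $w \in \mathcal{L}^1(Z;C(X))$ by (B6) and the completeness statement of Lemma \ref{lemma:function_space}, so $R_0(\Gamma(x,z'),z') \le p_{\Gamma(x,z'),z'}(w) = \int_Z \widehat{|w|}(x,z'',\cdot)\dots$, which is a Carathéodory function of $(x,z')$ in $L^1(Z,\mathcal{Z},Q_z)$ by Lemma \ref{welldefined}(2) applied to $w$; hence $R_0[\Gamma] \in \mathcal{L}^1(Z;C(X))$.

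The main obstacle I anticipate is bookkeeping rather than conceptual: carefully tracking what $p^{l_t}$, $L^t p^{l_0}$, and $R_0[\Gamma]$ mean as functions on $\mathcal{K}\times Z$ versus functions on $X\times Z$, and making sure every application of $L$, of $\widehat{\cdot}$, and of $p_{K,z}$ lands back in the space where the next step is legal — this is precisely where Lemma \ref{welldefined} and the Bergé/measurable maximum theorems are needed, and where the interchange of $\sup_N$ with $L$ must be justified by (V) rather than taken for granted. The one genuinely delicate point is confirming that the countable chain of partial sums $\sum_{t=0}^N L^t p^{l_0}$ is \emph{bounded} in $C$ so that (V) applies; this is where part (1) feeds back in, since boundedness by $R_0$ (which lies in $C$ by the remark following (VI), or directly because $R_0 \le p_{\cdot}(w)$ and $w[\Gamma]\in\mathcal{L}^1$) is exactly what makes the chain admissible.
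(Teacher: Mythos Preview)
Your argument for part (1) is exactly the paper's: induct to get $L^t p^{l_0}\le p^{l_t}$ via $\beta\,p^{l_t}[\Gamma]\le l_{t+1}$, then sum and invoke the convergence clause of (B6). For part (2) you take a slightly more abstract route than the paper. The paper does not invoke the abstract properties (IV)--(V) of $L$; instead it works directly with the functions $p^{l_t}[\Gamma]$, using the triangle inequality for $p_{K,z}$ and the bound $\beta\,p^{l_t}[\Gamma]\le l_{t+1}$ to control $p_{K,z}\bigl(\sum_{t=0}^N p^{l_t}[\Gamma]\bigr)$ by a tail of $\sum_t p_{K,z}(l_t)$, and then lets $N\to\infty$. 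This sidesteps any circularity with Lemma \ref{Preparation_Th}, whereas your appeal to (V) requires you to note explicitly that, for this concrete $L$, sup-preservation is just the Monotone Convergence Theorem and is available independently of the conclusions (I)--(VI). Once you record that, both routes arrive at the same inequality.

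One point to tighten: in arguing $R_0[\Gamma]\in\mathcal{L}^1(Z;C(X))$ you observe that $R_0[\Gamma]$ is dominated by $p_{\Gamma(\cdot,\cdot),\cdot}(w)$, which is Carath\'eodory by Lemma \ref{welldefined}, and write ``hence $R_0[\Gamma]\in\mathcal{L}^1(Z;C(X))$.'' Domination by a Carath\'eodory integrable function gives $Q_z$-integrability but not continuity in $x$. You should argue separately that $R_0[\Gamma](\cdot,z)$ is continuous: each summand $(L^tp^{l_0})[\Gamma]$ is Carath\'eodory (iterate Lemma \ref{welldefined}(1)), and the series is dominated termwise by $p^{l_t}[\Gamma]\le\tfrac1\beta\,l_{t+1}$, so the Weierstrass M-test on compacts (using $\sum_t l_{t+1}\le w$) yields uniform convergence in $x$ and hence continuity of the limit. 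The paper's own proof is equally terse on this step, but since you correctly flag the Carath\'eodory bookkeeping as the delicate part, your write-up should close this gap explicitly.
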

\begin{proof}
Given $x\in X$ and $z\in Z$, $\beta p^{l_t}[\Gamma](x,z) =  \beta \int_Z \max_{y\in \Gamma(x,z)} l_t(y,z')Q_z(dz') \le l_{t+1}(x,z)$, hence $p^{l_t}[\Gamma]\in \mathcal{L}^1(Z,C(X))$ and then $L p^{l_t}(K,z)=\beta p_{K,z}(p^{l_t}[\Gamma])\le p_{K,z}(l_{t+1})$, for all $t=0,1,\ldots$. Thus, $L^tp^{l_0} \le L^{t-1}p^{l_1}\le \cdots \le p^{l_t}$. By (B6), the series $\sum _{t=0}^{\infty} p^{l_t}(K,z)$ converges for all $K\in \mathcal{K}$ and $z\in Z$, thus $\sum_{t=0}^{\infty} L^t p^{l_0}$ converges.
To conclude the proof, by the triangle inequality
\begin{align*}
 p_{K,z} (p^{l_0}[\Gamma] + \cdots + p^{l_t}[\Gamma]) &\le  p_{K,z} (p^{l_0}[\Gamma]) + \cdots +  p_{K,z}(p^{l_0}[\Gamma])\\
&\le (p_{K,z}(l_1) +\cdots +p_{K,z}(l_{t+1}).
\end{align*}
Letting $t\to \infty$ and adding $p_{K,z}(\psi) $ to both sides of the above inequality, we have $ p_{K,z}(\psi) + p_{K,z}( R_0[\Gamma]) \le R_0(K,z)$, showing at the same time that $R_0[\Gamma]\in \mathcal{L}^1(Z;C(X))$.
\end{proof}

\begin{lemma}\label{Tf_well}
Let assumptions (B1) to (B6) to hold. Then $f\in V(0,R_0)$ implies $Tf\in \mathcal{L}^1(Z;C(X))$.
\end{lemma}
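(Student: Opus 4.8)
The plan is to check the two conditions defining membership in $\mathcal{L}^1(Z;C(X))$: that $Tf$ is a Carath\'eodory function on $X\times Z$, and that $z'\mapsto\max_{x\in K}|(Tf)(x,z')|$ lies in $L^1(Z,\mathcal{Z},Q_z)$ for every $K\in\mathcal{K}$ and every $z\in Z$. The device that drives the second (and hence the whole) conclusion is a pointwise majorization of $|Tf|$ by the function $l_0+\beta R_0[\Gamma]$, which is already known to belong to $\mathcal{L}^1(Z;C(X))$.

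First I would settle the regularity of the Markov operator $(Mf)(y,z)=\int_Z f(y,z')Q_z(dz')$. For $f\in\mathcal{L}^1(Z;C(X))$ this integral is finite, being bounded in modulus by $p_{\{y\},z}(f)<\infty$, and $Mf$ is a Carath\'eodory function: for fixed $z$, continuity of $y\mapsto(Mf)(y,z)$ follows from dominated convergence with majorant $z'\mapsto\max_{x\in B}|f(x,z')|\in L^1(Z,\mathcal{Z},Q_z)$ on a compact ball $B$ around $y$; for fixed $y$, measurability of $z\mapsto(Mf)(y,z)$ follows from (B3)(b) by the usual approximation of $f(y,\cdot)$ by simple functions --- note that here only measurability in $z$, not continuity, is needed, so the obstruction illustrated in Appendix \ref{App:Con} does not intervene. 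Granting this, the maximand $U(x,y,z)+\beta(Mf)(y,z)$ is jointly continuous in $(x,y)$ for each fixed $z$ and is a Carath\'eodory function of $(z,y)$ for each fixed $x$ (using (B5)); since $\Gamma$ is nonempty, compact-valued and continuous by (B4), Berge's Maximum Theorem gives continuity of $x\mapsto(Tf)(x,z)$ and the Measurable Maximum Theorem gives measurability of $z\mapsto(Tf)(x,z)$. Thus $Tf$ is Carath\'eodory, and finite since every maximum involved is attained.

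The crux is the pointwise bound. Fix $(x,z)$ and choose $y^*\in\Gamma(x,z)$ with $U(x,y^*,z)=\psi(x,z)$, which is possible because $\Gamma(x,z)$ is compact and $U(x,\cdot,z)$ is continuous. Writing $\Phi(x,z):=\max_{y\in\Gamma(x,z)}(M|f|)(y,z)$, the inequalities $U(x,y,z)\le\psi(x,z)$ and $(Mf)(y,z)\le(M|f|)(y,z)$ give $(Tf)(x,z)\le\psi(x,z)+\beta\Phi(x,z)$, while evaluating the maximand at $y^*$ and using $(Mf)(y^*,z)\ge-(M|f|)(y^*,z)$ gives $(Tf)(x,z)\ge\psi(x,z)-\beta\Phi(x,z)$; hence $|(Tf)(x,z)|\le|\psi(x,z)|+\beta\Phi(x,z)$. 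Now $|\psi|\le l_0$ by (B6), and, exchanging the maximum over the compact set $\Gamma(x,z)$ with the integral,
\[
\Phi(x,z)\le\int_Z\max_{y\in\Gamma(x,z)}|f(y,z')|\,Q_z(dz')=p_{\Gamma(x,z),z}(f)\le R_0(\Gamma(x,z),z)=R_0[\Gamma](x,z),
\]
the penultimate inequality because $f\in V(0,R_0)$ and $\Gamma(x,z)\in\mathcal{K}$. Therefore $|(Tf)(x,z)|\le l_0(x,z)+\beta R_0[\Gamma](x,z)$ for all $(x,z)\in X\times Z$.

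It then only remains to take the maximum over $x\in K$ and integrate against $Q_z$:
\[
p_{K,z}(Tf)=\int_Z\max_{x\in K}|(Tf)(x,z')|\,Q_z(dz')\le p_{K,z}(l_0)+\beta\,p_{K,z}(R_0[\Gamma])<\infty,
\]
finite for every $K\in\mathcal{K}$ and $z\in Z$ because $l_0\in\mathcal{L}^1(Z;C(X))$ by (B6) and $R_0[\Gamma]\in\mathcal{L}^1(Z;C(X))$ by Lemma \ref{rt_family}; together with the Carath\'eodory property this gives $Tf\in\mathcal{L}^1(Z;C(X))$. I expect the part requiring the most care to be the Carath\'eodory regularity of $Mf$ and $Tf$ --- the dominated-convergence argument for continuity in $x$ and the measurability in $z$ --- rather than the pointwise bound, which, once $R_0[\Gamma]\in\mathcal{L}^1(Z;C(X))$ is in hand, is a one-line estimate.
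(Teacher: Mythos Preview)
Your proof is correct and follows essentially the same route as the paper's: show $Mf$ is Carath\'eodory (dominated convergence for continuity in $x$, standard arguments for measurability in $z$), deduce $Tf$ is Carath\'eodory via Berge's and the Measurable Maximum Theorem, then bound $|Tf|$ pointwise by $l_0+\beta R_0[\Gamma]$ and invoke Lemma~\ref{rt_family}. The paper in fact extracts the slightly stronger conclusion $p_{K,z}(Tf)\le R_0(K,z)$, i.e.\ $Tf\in V(0,R_0)$, by applying the inequality $p^{l_0}+LR_0\le R_0$ from Lemma~\ref{rt_family}; your bound $p_{K,z}(l_0)+\beta\,p_{K,z}(R_0[\Gamma])$ is exactly the left-hand side of that inequality, so you are one line away from the same conclusion.
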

\begin{proof}
Let $f\in \mathcal{L}^1(Z;C(X))$. We use the notation $f_x$ and $f^z$, whose meaning is explained in Appendix \ref{App:FS}. The function $f_x$ is Borel measurable for all $x\in X$ and $Q_z$--integrable for any $z\in Z$. Thus, $f_x$ can be written as the difference of two positive, $Q_z$--integrable functions, $f_x=f_x^+ - f_x^-$, where $f_x^+=\max(f_x,0)$ and $f_x^-=\max(-f_x,0)$. Applying Theorem 8.1 in \cite{SLP}, both $Mf_x^+$ and $Mf_x^-$ are Borel measurable. Since $(Mf)_x = M(f_x) = M(f_x^+) - M(f_x^-)$, $(Mf)_x$ is measurable for any $x\in X$.
To see that $(Mf)^z$ is continuous, consider a sequence $\{x_n\}$ in $X$ that converges to $x\in X$. Then the sequence and its limit form the compact set $K=\{x_n\}\cup\{x\}$. Let $f_n:=f_{x_n}$, for $n\ge 1$. For all $z'\in Z$, $f_n(z')\to f_x(z')$ as $n\to \infty$, since $f$ is continuous in $x$. Moreover, $|f^{z'}|\le \sup_{x\in K} |f^{z'}(x)|$, and $z'\mapsto \sup_{x\in K} |f^{z'}(x)|$ is $Q_z$--integrable by definition of $\mathcal{L}^1(Z;C(X))$, thus by the Lebesgue dominated convergence theorem
\[
(Mf)(x_n,z) = \int _Z f_n(z') Q_z(dz') \to \int _Z f_x(z') Q_z(dz') = (Mf)(x,z),
\]
thus $(Mf)^z$ is continuous. Hence, $Mf$ is a Carath\'eodory function and thus $U(x,y,z) + \beta Mf(y,z)$ is continuous in $(x,y)$ for all $z$, and it is Borel measurable in $z$ for all $(x,y)$. By the Berg\'e Maximum Theorem, the function $Tf$ is thus continuous in $x$ for all $z$, and by the Measurable Maximum Theorem, it is Borel measurable for any $x$. In short, the function
\[
(x,z)\mapsto Tf(x,z) = \max_{y\in \Gamma(x,z)} (U(x,y,z) + \beta Mf(y,z)),
\]
is a Carath\'eodory function. Moreover, if $f\in F$ and $x\in X$, $z\in Z$
\begin{align*}
|Tf(x,z)|&\le
|\max_{y\in \Gamma(x,z)}  U(x,y,z)|+ \beta \max_{y\in \Gamma(x,z)} {\displaystyle \int_Z}\max_{y\in \Gamma(x,z)}  |f(y,z')|Q_z(dz')\\
& \le l_0(x,z) + \beta \int _Z \max_{y\in \Gamma(x,z)} w(y,z') Q_z(dz')\\
& \le l_0(x,z) + \beta p_{\Gamma(x,z),z} (f).
\end{align*}
Since $\Gamma(x,z)\in \mathcal{K}$, for $f\in V(0,R_0)$, we have $p_{\Gamma(x,z),z} (f) \le 
R_0[\Gamma](x,z)$. By Lemma \ref{rt_family}, $p_{K,z}(l_0) + \beta p_{K,z}(R_0[\Gamma])\le R_0(K,z)$. Hence 
$
p_{K,z} (Tf) \le R_0(K,z)$.
This proves that $Tf\in V(0,R_0)$, and hence that $Tf \in \mathcal{L}^1(Z,C(X))$.
\end{proof}

\begin{lemma}\label{lemma:C}
Let assumptions (B1) to (B6) to hold. Then $D(V(0,R_0))\subseteq C$. 
\end{lemma}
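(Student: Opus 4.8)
The goal is to show that $D(V(0,R_0))\subseteq C$, where $C$ is the set defined in \eqref{SetC}. Recall that $p\in C$ iff there is a constant $c>0$ with $p\le cR_0$ pointwise on $\mathcal{K}\times Z$, and $p[\Gamma]\in\mathcal{L}^1(Z;C(X))$. An element of $D(V(0,R_0))$ has the form $p=p^{f-g}$ for some $f,g\in V(0,R_0)$, i.e.\ $p(K,z)=p_{K,z}(f-g)=\int_Z\max_{x\in K}|f(x,z')-g(x,z')|\,Q_z(dz')$. So the plan is: first establish the pointwise bound $p^{f-g}\le 2R_0$, and second establish that $p^{f-g}[\Gamma]\in\mathcal{L}^1(Z;C(X))$.

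\textbf{Step 1: the bound $p^{f-g}\le 2R_0$.} Since $f,g\in V(0,R_0)$, by definition $p_{K,z}(f)\le R_0(K,z)$ and $p_{K,z}(g)\le R_0(K,z)$ for all $K\in\mathcal{K}$, $z\in Z$. The seminorm $p_{K,z}$ satisfies the triangle inequality (it is a genuine seminorm by Lemma \ref{lemma:function_space}), so $p^{f-g}(K,z)=p_{K,z}(f-g)\le p_{K,z}(f)+p_{K,z}(g)\le 2R_0(K,z)$. Thus $p^{f-g}\le 2R_0$ with $c=2$.

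\textbf{Step 2: $p^{f-g}[\Gamma]\in\mathcal{L}^1(Z;C(X))$.} By Notation \ref{notation1}, $p^{f-g}[\Gamma](x,z)=p^{f-g}(\Gamma(x,z),z)=p_{\Gamma(x,z),z}(f-g)$. Here I would argue in two ways. For the continuity in $x$ and measurability in $z$, note that $p_{\Gamma(x,z),z}(f-g)=\int_Z \max_{y\in\Gamma(x,z)}|f(y,z')-g(y,z')|\,Q_z(dz')$, which is of the form $\int_Z\widehat{|h|}(x,z')Q_z(dz')$ with $h=f-g\in\mathcal{L}^1(Z;C(X))$ — exactly the Markov operator $M$ applied to the Carathéodory function $\widehat{|h|}$ (which is Carathéodory by Lemma \ref{welldefined}(1)). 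The argument in the proof of Lemma \ref{Tf_well} shows that $M$ maps such functions to Carathéodory functions, so $p^{f-g}[\Gamma]$ is Carathéodory. For the $L^1$-integrability requirement, I use Step 1 together with Lemma \ref{rt_family}: $p^{f-g}[\Gamma](x,z)\le 2R_0[\Gamma](x,z)$, and $R_0[\Gamma]\in\mathcal{L}^1(Z;C(X))$ by Lemma \ref{rt_family}. Since $\mathcal{L}^1(Z;C(X))$ is closed under domination by one of its elements (the defining condition $z'\mapsto\max_{x\in K}|\cdot|\in L^1(Z,\mathcal{Z},Q_z)$ is inherited by smaller nonnegative functions), we conclude $p^{f-g}[\Gamma]\in\mathcal{L}^1(Z;C(X))$.

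Combining the two steps, $p^{f-g}$ meets both defining conditions of $C$, so $p^{f-g}\in C$, and since $f,g\in V(0,R_0)$ were arbitrary, $D(V(0,R_0))\subseteq C$. The main obstacle is Step 2: one must be careful that ``$p^{f-g}[\Gamma]$ is a Carathéodory function'' genuinely follows from the already-established behavior of $M$ on Carathéodory functions (Lemma \ref{Tf_well}) applied to $\widehat{|f-g|}$, rather than needing a fresh argument — and that the integrability half really does reduce cleanly to the domination bound $p^{f-g}[\Gamma]\le 2R_0[\Gamma]$ from Lemma \ref{rt_family}, rather than requiring a direct estimate.
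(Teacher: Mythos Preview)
Your proof is correct and follows essentially the same strategy as the paper: bound the seminorm by a multiple of $R_0$, then check that $p[\Gamma]$ is Carath\'eodory and lies in $\mathcal{L}^1(Z;C(X))$ via the domination $p[\Gamma]\le cR_0[\Gamma]$ together with Lemma~\ref{rt_family}. The only cosmetic difference is that the paper, using its locally redefined $D(F)=\{p^f:f\in F\}$, works with a single $f\in V(0,R_0)$ and obtains $c=1$, whereas you stick to the general form $p^{f-g}$ and get $c=2$; and the paper cites Lemma~\ref{welldefined} directly for the Carath\'eodory property of $p^f[\Gamma]$, while your reference to the argument inside the proof of Lemma~\ref{Tf_well} (where $Mf$ is shown to be Carath\'eodory) is actually the more precise justification.
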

\begin{proof}
Since $f\in V(0,R_0)$, $p^f\le  R_0$, hence we can take $c=1$. Also, $p^f\in \ca{X\times Z}$, since $p^f[\Gamma] (x,z) = \int_Z \max_{y\in \Gamma(x,z)} |f(y,z')|Q_z(dz')$ is continuous in $x$ and Borel measurable in $z$, by Lemma \ref{welldefined}. Moreover, $p^f[\Gamma] \le R_0[\Gamma]$ implies $ p_{K,z} (p^f[\Gamma]) \le   p_{K,z}(R_0[\Gamma]) \le \frac 1{\beta} R_0(K,z)$, by Lemma \ref{rt_family}. Hence, $p^f[\Gamma]\in \mathcal{L}^1(Z,C(X))$.
\end{proof}

Now, we are in position to prove Lemma \ref{Preparation_Th}.
First, let us see
 that $L:C\longrightarrow C$. Let $p\in C$; by the definition of the operator $L$ and Lemma \ref{rt_family}
\[
Lp (K,z) =  \beta p_{K,z} (p[\Gamma])\le \beta p_{K,z} (cR_0[\Gamma])\le  c R_1(K,z)\le  c R_0(K,z),
\]
and so, $Lp[\Gamma]  \le c R_0[\Gamma]$ and $Lp[\Gamma] \in \mathcal{L}^1(Z,C(X))$.
Second,
we prove that the assumptions (I) to (VI) are fulfilled. Regarding (I), note that $p+q\in C$ if $p,q\in C$, trivially, as well it is also immediate that if $p'\in C$ and $p\le p'$, then $p\in C$. On the other hand, if a countable chain of partial sums $p_0, p_0+p_1, p_0+p_1+p_2, \ldots,$ is bounded by an element $P$ in $C$, then the infinite sum, $p:=\sum _{n=0}^{\infty} p_n$, is well defined and $p\le P\le cR_0$ for some constant $c$. Moreover, since $p[\Gamma]\le c R_0[\Gamma]$ and $R_0[\Gamma]\in \mathcal{L}^1(Z;C(X))$ by Lemma \ref{rt_family}, the Monotone Convergence Theorem implies that $p[\Gamma](x,\cdot)$ is $Q_z$--integrable for all $z\in Z$ and all $x\in X$. On the other hand, each function $p_i[\Gamma](\cdot, z)$ is continuous in $x$, for all $i=1,2,\ldots$. By the Wierstrass M test, the function $p[\Gamma](\cdot, z)$ is also continuous in $x$ for all $z\in Z$. These two observations imply that $p[\Gamma]\in \mathcal{L}^1(Z;C(X))$.
(II) is trivial; (III) holds, since the integral is monotone, and regarding (IV), it holds true, since for all $p,q\in C$, $p_{K,z}(p[\Gamma]+q[\Gamma]) \le p_{K,z}(p[\Gamma])+p_{K,z} (q[\Gamma])$ by definition of the seminorms $p_{K,z}$, hence
\begin{align*}
L(p+q)(K,z) &= p_{K,z}(p[\Gamma]+q[\Gamma])\\
& \le p_{K,z}(p[\Gamma])+p_{K,z} (q[\Gamma])\\
&=Lp(K,z) + Lq(K,z).
\end{align*}
$L$ is clearly sup-preserving in $C$ by the Monotone Convergence Theorem, hence (V) also holds.
Finally, (VI) is implied by Lemma \ref{rt_family} and Lemma \ref{Tf_well}.

\section{Function space}\label{App:FS}
We describe in this section the function space used in Section \ref{Sect:BO} and we prove Lemma \ref{lemma:function_space}.

Let the measurable space $(Z,\mathcal{Z})$, where $Z$ is the space of shocks and $\mathcal{Z}$ is the $\sigma$--algebra of Borel of $Z$. Remember that $Q$ is a transition function $Q:Z\times \mathcal{Z}\rightarrow [0,1]$ satisfying
\begin{enumerate}
\item for each $z\in Z$, $Q(z,\cdot)$ is a probability measure on $(Z,\mathcal{Z})$; and
\item for each $A\in \mathcal{Z}$, $Q(\cdot,A)$ is a Borel measurable function.
\end{enumerate}

To simplify notation, let $Q_z=Q(z,\cdot)$. For each $z\in Z$, let $L^1(Z,\mathcal{Z},Q_z)$ be the space of Borel measurable and $Q_z$--integrable functions
 $g:Z\longrightarrow \R$. Let the  $L^1$ norm with respect to the fixed probability measure $Q_z$
\[
\|g\|_{z} = \int_Z |g(z')|Q_z(dz'),
\]
where the notation $\|\cdot\|_{z}$ means that integration is with respect to the probability measure $Q_z$. In what follows we will omit the $\sigma$--algebra $\mathcal{Z}$ from the notation.

\strut
A function $f:X\times Z\longrightarrow \R$ is a Carath\'eodory function on $X\times Z$ if it satisfies
\begin{enumerate}
\item for each $x\in X$, the function $f_x:=f(x,\cdot) : Z \longrightarrow \R$ is Borel measurable;
\item for each $z\in Z$, the function $f^z:=f(\cdot,z):X\longrightarrow \R$ is continuous.
\end{enumerate}
Under our assumptions, a Carath\'eodory function is jointly measurable in $X\times Z$, see \cite{AB}, Lemma 4.50. Also, a function that is Carath\'eodory on $X\times Z$ is obviously  Carath\'eodory on $A\times Z$ for all $A\subseteq X$. Let us denote by $\ca{A\times Z}$ the set of all Carath\'eodory functions on $A\times Z$.

\strut
Let $\mathcal{K}$ denote the family of compact subsets $K\subseteq X$. Given $z\in Z$, consider the  probability measure $Q_z(\cdot)$. For a Carath\'eodory function $f$ on $X\times Z$, let
\[
p_{K,z}(f) = \int_Z \max_{x\in K} |f(x,z')| Q_z(dz').
\]
Note that this integral is well defined, as for a Carath\'eodory function $f$ and compact set $K\subseteq X$, $z'\mapsto \max_{x\in K} |f_x(z')|$ is Borel measurable by the Measurable Maximum Theorem, see \cite{AB}.
Consider the space
of Carath\'eodory functions $f$ on $X\times Z$ for which  $p_ {K,z}(f)$ is finite
\[
E:=\{f\in \ca{X\times Z}\,:\,  p_{K,z}(f) <\infty \mbox{ for all $K\in \mathcal{K}$, $z\in Z$} \}.
\]
It is clear that $p_{K,z}$ is a seminorm on $E$ (but not a norm, obviously). Given $K\in \mathcal{K}$ and $z\in Z$, let
\[
E_{K,z}:=\{f\in \ca{K\times Z}\,:\, p_{K,z}(f) <\infty\}.
\]
\begin{lemma}\label{lemma1}
For each $K\in \mathcal{K}$ and $z\in Z$, $(E_{K,z}, p_{K,z})$ is a Banach space.
\end{lemma}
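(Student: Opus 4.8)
The claim is that $(E_{K,z},p_{K,z})$ is a Banach space; since the statement of Lemma~\ref{lemma1} is exactly this, I need only establish completeness, the seminorm (in fact, after quotienting, norm) structure having already been noted. The natural approach is to exhibit an isometry between $E_{K,z}$ and a known complete space, namely $L^1(Z,\mathcal{Z},Q_z)$-valued or scalar Lebesgue space, and then transfer completeness. Concretely, to a Carath\'eodory function $f$ on $K\times Z$ with $p_{K,z}(f)<\infty$ I would associate the scalar function $\Phi f:Z\to\R_+$ given by $(\Phi f)(z')=\max_{x\in K}|f(x,z')|$, which is Borel measurable by the Measurable Maximum Theorem and lies in $L^1(Z,Q_z)$ precisely because $p_{K,z}(f)=\|\Phi f\|_z<\infty$. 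However $\Phi$ is not linear, so rather than pushing $f$ forward I would instead view $f$ itself as taking values in the Banach space $C(K)$ of continuous functions on the compact set $K$ with the sup norm: the map $z'\mapsto f(\cdot,z')|_K$ is (strongly) measurable as a $C(K)$-valued function and $p_{K,z}(f)=\int_Z\|f(\cdot,z')\|_{C(K)}\,Q_z(dz')$ is its Bochner $L^1$-norm. Thus $E_{K,z}$ is (isometrically, modulo the usual a.e.\ identification) the Bochner space $L^1(Z,\mathcal{Z},Q_z;C(K))$, which is complete because $C(K)$ is a Banach space.

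The steps, in order, would be: (i) verify that for $f\in\ca{K\times Z}$ the assignment $z'\mapsto f(\cdot,z')$ is a strongly measurable $C(K)$-valued function --- this uses that $K$ is compact metric so $C(K)$ is separable, that $f$ is jointly measurable (the paper already invokes \cite{AB}, Lemma~4.50 for this), and Pettis's measurability theorem, together with continuity in $x$ which makes the evaluation functionals separate points and gives weak measurability; (ii) identify $p_{K,z}(f)$ with $\int_Z\|f(\cdot,z')\|_{C(K)}Q_z(dz')$, so that $E_{K,z}$ coincides as a set (after identifying functions agreeing $Q_z$-a.e.\ in $z'$) with $L^1(Z,Q_z;C(K))$ and $p_{K,z}$ is the Bochner norm; (iii) invoke the standard theorem that $L^1(\Omega,\mu;B)$ is a Banach space whenever $B$ is --- via the classical argument that a norm-absolutely-convergent series converges, extracting an a.e.-convergent subsequence dominated by an $L^1$ scalar majorant and applying dominated convergence in $B$; (iv) check the converse direction, that every element of $L^1(Z,Q_z;C(K))$ has a representative which is a Carath\'eodory function on $K\times Z$ (pick a strongly measurable representative, which by separability of $C(K)$ is an a.e.\ limit of simple $C(K)$-valued functions; the corresponding two-variable function is continuous in $x$ for each $z'$ and the limit operation can be arranged to preserve Borel measurability in $z'$), so the identification is onto. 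Conclusions (ii)--(iv) together give that $(E_{K,z},p_{K,z})$ is isometrically isomorphic to a Banach space, hence Banach.

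Alternatively, and perhaps more self-containedly, I would prove completeness directly without naming Bochner spaces: given a $p_{K,z}$-Cauchy sequence $\{f_n\}$ in $E_{K,z}$, pass to a subsequence with $p_{K,z}(f_{n_{j+1}}-f_{n_j})\le 2^{-j}$; then $g(z'):=\sum_j\max_{x\in K}|f_{n_{j+1}}(x,z')-f_{n_j}(x,z')|$ is in $L^1(Z,Q_z)$ by monotone convergence, so $g(z')<\infty$ for $Q_z$-a.e.\ $z'$; for each such $z'$ the sequence $\{f_{n_j}(\cdot,z')\}$ is uniformly Cauchy on $K$, hence converges uniformly to a continuous function which I call $f(\cdot,z')$ (defining $f$ arbitrarily, e.g.\ $0$, on the null set); measurability of $z'\mapsto f(\cdot,z')$ follows from it being an a.e.\ pointwise-in-$x$ limit of Carath\'eodory functions plus the Measurable Maximum Theorem applied to the error terms; finally $p_{K,z}(f-f_{n})\to 0$ by dominated convergence with dominating function $2g$ (along the subsequence) and the Cauchy property transfers this to the full sequence. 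This is the route I would actually write, as it mirrors the style already used in the paper's appendix.

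\textbf{Main obstacle.} The delicate point is the measurability bookkeeping: showing that the pointwise-constructed limit function $f$ --- built by taking uniform limits in $x$ on a $Q_z$-co-null set of $z'$ and set to zero elsewhere --- is genuinely a Carath\'eodory function on $K\times Z$, i.e.\ Borel measurable in $z'$ for each $x$ (indeed jointly), rather than merely measurable with respect to the $Q_z$-completion. One must either argue that the exceptional null set can be taken Borel and the construction modified on it without destroying continuity in $x$, or accept the representative-up-to-$Q_z$-null-sets viewpoint, which is exactly why $L^1(Z,Q_z)$ (and hence $E_{K,z}$) must be understood as equivalence classes, as the footnote in the main text already warns. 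Handling this cleanly --- making sure that the limit lies in $\ca{K\times Z}$ and not just in its $Q_z$-completion --- is the step that requires care, and it is where I would spend the bulk of the rigorous argument; everything else is the routine series/dominated-convergence template for completeness of $L^1$-type spaces.
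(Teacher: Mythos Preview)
Your direct, ``self-contained'' route is essentially the paper's own proof: extract a subsequence with $p_{K,z}(f_{n_{j+1}}-f_{n_j})<2^{-j}$, set $g_j(z')=\max_{x\in K}|f_{n_{j+1}}(x,z')-f_{n_j}(x,z')|$, use monotone convergence to get $G=\sum_j g_j\in L^1(Q_z)$, invoke the Weierstrass $M$-test so that the telescoping series converges uniformly on $K$ for $Q_z$-a.e.\ $z'$, and then pass to the full sequence via the Cauchy condition. The paper executes exactly these steps and with the same dominating function.

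Your Bochner-space framing is a legitimate and somewhat slicker alternative: once one checks that $z'\mapsto f(\cdot,z')$ is strongly measurable into the separable Banach space $C(K)$, the identification $E_{K,z}\cong L^1(Z,Q_z;C(K))$ makes completeness a quotation. The paper avoids naming Bochner integrals, presumably to stay self-contained and to keep the later projective-limit arguments elementary; the trade-off is that the paper must redo by hand the standard $L^1$-completeness template, while your first approach outsources it but imports the Pettis measurability theorem and the surjectivity step (iv), which is not entirely free.

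On the ``main obstacle'' you flag: it is less serious than you suggest. The function $G=\sum_j g_j$ is a monotone limit of Borel functions, hence Borel, so the exceptional set $\{z':G(z')=\infty\}$ is itself Borel (not merely in the $Q_z$-completion). Defining $f(\cdot,z')\equiv 0$ there and as the uniform limit elsewhere yields, for each fixed $x\in K$, a pointwise limit of Borel functions of $z'$ on a Borel set, patched with $0$ on its Borel complement---so $f_x$ is Borel measurable outright. The paper simply asserts ``the limit is measurable in $z'$'' without spelling this out, but the justification is the one-liner above, not a genuine obstacle.
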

\begin{proof}
It is clear that $E_{K,z}$ is a vector space and that $p_{K,z}$ is a norm Note that $p_{K,z}(f)=0$ implies $\int_Z |f(x,z')|Q_z(dz') = 0$ for all $x\in K$, hence $f(x,z')=0$ for all $x\in K$, $Q_z$--a.s.. Hence $f=0$. Let $\{f_n\}$ be a Cauchy sequence. Then $p_{K,z}(f_n-f_m)\to 0$ as $n,m\to \infty$. There is $n(1)$ such that $p_{K,z}(f_{n(1)}-f_n) < 2^{-1}$ for all $n\ge n(1)$. Now choose $n(2)>n(1)$ such that  $p_{K,z}(f_{n(2)}-f_{n(1)}) < 2^{-2}$ for all $n\ge n(2)$ and so on. Thus, we obtain a subsequence $n(j)$ such that $p_{K,z}(f_{n(j+1)}-f_{n(j)}) < 2^{-j}$ for $j=1,2,\ldots$. Let, to simplify notation, $\overline f_j := f_{n(j)}$ and let $g_j(z') := \sup_{x\in K} |\overline f_{j+1}(x,z')-\overline f_{j}(x,z')|$. Note that, for all $N=1,2,\ldots$
\[
\sum_{j=1}^N \int_Z g_j(z') \mu_z(dz') < 2^{-1} + \cdots +2^{-N} <1,
\]
hence $G_N=\sum_{j=1}^N g_j$ is a monotone sequence of positive and integrable functions. By the Monotone Convergence Theorem, the function $G(z') =\lim_{N\to \infty} G_N(z')$ is integrable, and thus  finite $Q_z$--a.e., see \cite{Dieudonne}, (13.6.4). From this it follows that the series
$
\sum_{j=1}^{\infty}  g_j(z')
$ converges $Q_z$--a.e. Since
\begin{equation}\label{Mtest}
\sum_{j=1}^{\infty} |\overline f_{j+1}(x,z') - \overline f_{j}(x,z') |  \le \sum_{j=1}^N  g_j(z'),
\end{equation}
the series on the left hand side also converges for any $x\in K$, $Q_z$--a.e. Consider the series
\begin{equation}\label{Mtest2}
\overline f_{1}(x,z') +  \sum_{j=1}^{\infty} \left(\overline f_{j+1}(x,z') - \overline f_{j}(x,z')\right).
\end{equation}
It converges $Q_z$--a.e., and if $f(x,z')$ is its sum, note that by \eqref{Mtest} and the Weierstrass M--test, the convergence is uniform in the compact set $K$. Since every $\overline f_j$ is a Carath\'eodory function, the limit $f(x,z')$ is continuous in $x$. Moreover, the limit is measurable in $z'$ and $\int_Z \max_{x\in K}|f(x,z')|Q_z(dz')\le \int_Z \max_{x\in K} \left(|\overline f_1(x,z')| + G(x,z')\right) Q_z(dz')<\infty$, hence $f\in E_{K,z}$. Let us see that the convergence of $\{f_n\}$ to $f$ is in the norm $p_{K,z}$. To show this, note that the convergence of the series \eqref{Mtest2} is uniform in $x\in K$, hence
\[
\max_{x\in K} |\overline f_{j}(x,z')-f(x,z') |
\]
tends to 0 as $j\to \infty$ and thus, $p_{K,z}(f_{n(j)}-f)\to 0$ as $j\to \infty$. However, $p_{K,z}(f_{n}-f_m)\to 0$ as $n,m\to \infty$, hence
\[
\lim_{n,j\to \infty} p_{K,z}(f_{n}-f) \le  \lim_{n,j\to \infty} p_{K,z}(f_n - f_{n(j)}) +  \lim_{n,j\to \infty} p_{K,z}(f_{n(j)}-f) = 0.
\]
Thus, $\lim_{n\to \infty} p_{K,z}(f_{n}-f) = 0$.
\end{proof}

\strut
Let $\mathcal{H}$ be the family of all finite subsets $H$ of $Z$ and let, for $H\in \mathcal{H}$,
\[
p_{K,H}(f) = \sup_{z\in H} p_{K,z}.
\]
Note that $p_{K,H}$ is a norm on
\[
E_{K,H} :=\{f\in \ca{K\times Z}\,:\, p_{K,H}(f) <\infty\},
\]
and that, by Lemma \ref{lemma1}, $(E_{K,H}, p_{K,H})$ is a Banach space, since the norms $p_{K,H}$ and $p_{K,z}$ are equivalent, for any $z\in H$, and generate the same topology, see \cite{Dieudonne}, (12.14.7). The importance of choosing this family of seminorms instead of the original one is that $\mathcal{P}_{\mathcal{H}}$ is a directed family by inclusion. That is, if we define $H\le H'$ if $H\subseteq H'$, then for any $H,H'\in \mathcal{H}$, there exists $H''\in \mathcal{H}$ such that $p_{K,H''}\ge \max\{p_{K,H}, p_{K,H'}\}$.

An element in $E_{K,H}$ can be written $f+M_{K,H}$ in equivalence class notation
where
\[
M_{K,H}=\{f\in E_{K,H}\,:\, f_x = 0 \mbox{ $Q_z$--a.e., for all $x\in K$, for all $z\in H$}\}.
\]
Consider the sets
\[
\mathcal{E}_{K,H}:=\{f\in \ca{K\times Z}\,:\, p_{K,H}(f) <\infty\},
\]
and the set of Carath\'eodory functions that are integrable with respect to all $z\in Z$,
\[
\mathcal{E}_K:=\{f\in \ca{K\times Z}\,:\, p_{K,H}(f) <\infty\mbox{ for all  $H\in \mathcal{H}$}\}.
\] 
Note that $E_{K,H}$ is the quotient space $\mathcal{E}_{K,H}/M_{K,H}$.
Let us define $M_K=\bigcap_{H\in \mathcal{H}} M_{K,H}$ and consider the quotient space $E_K:=\mathcal{E}_K/M_K$, formed by equivalence classes of functions in $\mathcal{E}_K$ with respect to the relation $M_K$. That is,
two functions of $\mathcal{E}_K$ are in the same equivalence class if and only if, for any $x\in K$, $f_x(z')=g_x(z')$ $Q_z$--a.e., for all $z\in Z$.
\begin{lemma}\label{EK}
\[
E_K = {\displaystyle \bigcap _{H\in \mathcal{H}} E_{K,H}}.
\]
\end{lemma}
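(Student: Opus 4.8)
The plan is to reduce the asserted identity to two facts that are immediate from the definitions --- namely $\mathcal{E}_K=\bigcap_{H\in\mathcal{H}}\mathcal{E}_{K,H}$ at the level of functions and $M_K=\bigcap_{H\in\mathcal{H}}M_{K,H}$ at the level of the null subspaces --- and then to read the intersection $\bigcap_{H}E_{K,H}$ through the canonical family of maps out of $E_K$.

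First I would record the function-space identity. By construction $f\in\mathcal{E}_K$ exactly when $p_{K,H}(f)<\infty$ for every finite $H\subseteq Z$, and since $p_{K,H}(f)=\sup_{z\in H}p_{K,z}(f)$ with $H$ finite, this is equivalent to $p_{K,z}(f)<\infty$ for all $z\in Z$; the same condition characterises membership of $f$ in every $\mathcal{E}_{K,H}$, so $\mathcal{E}_K=\bigcap_{H}\mathcal{E}_{K,H}$. Likewise $M_K=\bigcap_{H}M_{K,H}$ holds by the very definition of $M_K$, and because every singleton $\{z\}$ is finite this common set is precisely $\{f\,:\,f_x=0$ $Q_z$-a.e.\ for all $x\in K$ and all $z\in Z\}$, the description already given above.

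Next I would combine the two identities. Since $\mathcal{E}_K\subseteq\mathcal{E}_{K,H}$ and $M_K\subseteq M_{K,H}$ for every $H$, the composite $\mathcal{E}_K\hookrightarrow\mathcal{E}_{K,H}\twoheadrightarrow E_{K,H}$ annihilates $M_K$ and descends to a well-defined linear map $\iota_H\colon E_K\to E_{K,H}$, $f+M_K\mapsto f+M_{K,H}$. The family $\{\iota_H\}_{H\in\mathcal{H}}$ is jointly injective, for if $f+M_{K,H}=g+M_{K,H}$ for all $H$ then $f-g\in\bigcap_{H}M_{K,H}=M_K$; and it is surjective onto the intersection because any class that is represented in $E_{K,H}$ for every $H$ is represented by a single Carath\'eodory function $f$ lying in every $\mathcal{E}_{K,H}$, hence in $\mathcal{E}_K$ by the first identity. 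Thus $(\iota_H)_H$ identifies $E_K$ with $\bigcap_{H}E_{K,H}$, i.e.\ with the quotient of $\bigcap_H\mathcal{E}_{K,H}$ by $\bigcap_H M_{K,H}$.

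The one point that requires care --- and the only thing here that is not pure bookkeeping --- is that the spaces $E_{K,H}$ are quotients of $\ca{K\times Z}$ by different null subspaces $M_{K,H}$, so $\bigcap_{H}E_{K,H}$ cannot be read as a naive set intersection but must be understood via the compatible maps $\iota_H$; what legitimises this is exactly the identity $\bigcap_H M_{K,H}=M_K$, which says that testing against all finite $H$ (equivalently, against each $z\in Z$ separately) loses no $Q_z$-a.e.\ identification. Everything else is a routine unwinding of the definitions.
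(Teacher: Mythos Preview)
Your proof is correct and follows essentially the same route as the paper: both arguments reduce the identity to the two ingredients $\mathcal{E}_K=\bigcap_H\mathcal{E}_{K,H}$ and $M_K=\bigcap_H M_{K,H}$, then check the two inclusions via the canonical passage $f+M_K\mapsto f+M_{K,H}$. Your treatment is in fact more careful than the paper's about what the intersection of quotients by different null subspaces means, making explicit the compatible maps $\iota_H$ where the paper simply works with a common representative.
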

\begin{proof}
Let $f+M_K\in E_K$. Then $f+M_K\subseteq f+M_{K,H}$ and $p_{K,H}(f+M_K) = p_{K,H}(f+M_{K,H})  < \infty$ for all $H\in \mathcal{H}$, hence $f+M_K\in E_{K,H}$ for all $H\in \mathcal{H}$. Reciprocally, if $g$ is a representative element of an equivalence class in ${\bigcap _{H\in \mathcal{H}} E_{K,H}}$, then there is $f\in \mathcal{E}_{K,H}$ such that $g=f+M_{K,H}$ for all $H$. Hence $g-f\in M_{K,H}$ for all $H$, and thus $f-g\in M_K$, or $g=f+M_K$, and hence $g\in E_K$.
\end{proof}
We consider on $E_K$ the topology $\tau_H$ generated by the family of seminorms $\mathcal{P}_{\mathcal{H}}=(p_{K,H})_{H\in \mathcal{H}}$. 
We show in the next result that $(E_K, \mathcal{P}_{\mathcal{H}})$ is the projective limit of the family of Banach spaces $(E_{K,H})_{H\in \mathcal{H}}$, $\lim_{\leftarrow}E_{K,H}$, and thus it is a complete locally convex topological space.
\begin{lemma}\label{lemma2} For each $K\in \mathcal{K}$, $(E_K, \mathcal{P}_{\mathcal{H}})$ is a Hausdorff complete locally convex space.
\end{lemma}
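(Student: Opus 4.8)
The plan is to check the three claims in turn; local convexity and the Hausdorff property are immediate, and completeness is obtained by identifying $(E_K,\mathcal{P}_{\mathcal H})$ with a projective limit of Banach spaces. Local convexity holds because the topology $\tau_H$ is, by construction, the one generated by the family of seminorms $\mathcal{P}_{\mathcal H}$. For the Hausdorff property it suffices to show that $\mathcal{P}_{\mathcal H}$ separates the points of $E_K$: if $p_{K,H}(f)=0$ for every finite $H\subseteq Z$, then taking $H=\{z\}$ gives $p_{K,z}(f)=0$, hence $f_x=0$ $Q_z$--a.e.\ on $K$ for all $x\in K$ and all $z\in Z$; thus $f\in\bigcap_{z\in Z}M_{K,z}=M_K$, i.e.\ $[f]=0$ in $E_K$.

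For completeness, regard $\mathcal H$ as a directed set under inclusion. If $H\subseteq H'$ then $\mathcal{E}_{K,H'}\subseteq\mathcal{E}_{K,H}$ and $M_{K,H'}\subseteq M_{K,H}$, so $[f]_{M_{K,H'}}\mapsto[f]_{M_{K,H}}$ is a well-defined linear map $\pi_{H'H}\colon E_{K,H'}\to E_{K,H}$, continuous with operator norm at most $1$ because $p_{K,H}\le p_{K,H'}$; these maps satisfy $\pi_{HH}=\mathrm{id}$ and $\pi_{H'H}\circ\pi_{H''H'}=\pi_{H''H}$, so $(E_{K,H},\pi_{H'H})_{H\in\mathcal H}$ is a projective system of Banach spaces (each $E_{K,H}$ is a Banach space by Lemma \ref{lemma1} and the equivalence of the norms $p_{K,H}$ and $p_{K,z}$). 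Its projective limit $\lim_{\leftarrow}E_{K,H}$ is the set of compatible families inside the product $\prod_{H}E_{K,H}$; this product is complete, and $\lim_{\leftarrow}E_{K,H}$ is closed in it, being the intersection, over all pairs $H\subseteq H'$, of the equalizers of the continuous maps $(\phi_G)_G\mapsto\pi_{H'H}(\phi_{H'})$ and $(\phi_G)_G\mapsto\phi_H$ into the Hausdorff space $E_{K,H}$. Hence $\lim_{\leftarrow}E_{K,H}$ is complete.

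Finally, consider $\iota\colon E_K\to\prod_{H}E_{K,H}$, $\iota([f]_{M_K})=([f]_{M_{K,H}})_H$. It is well defined, lands in $\lim_{\leftarrow}E_{K,H}$ (compatibility is immediate from the definition of $\pi_{H'H}$), and is injective by the computation used for the Hausdorff property; moreover, since each $p_{K,H}$ is precisely the pullback along $\iota$ of the $H$-th canonical seminorm on the product and the $p_{K,H}$ generate $\tau_H$, the map $\iota$ is a topological isomorphism of $E_K$ onto $\iota(E_K)$ with its subspace topology. The crux, and the main obstacle, is to show that $\iota$ is \emph{onto} $\lim_{\leftarrow}E_{K,H}$: given a compatible family $(\phi_H)_H$, one picks Carath\'eodory representatives $g_H\in\mathcal{E}_{K,H}$ with $g_{H'}-g_H\in M_{K,H}$ whenever $H\subseteq H'$, fixes a countable dense set $D\subseteq K$ (available since $K\subseteq\R^l$), observes that the family is encoded by the coherent system --- indexed by $d\in D$ --- of the $L^1(Q_z)$--classes of the Borel maps $z'\mapsto g_{\{z\}}(d,z')$, and must then glue this system, for each $d$, into a single Borel function $h_d\colon Z\to\R$ agreeing with $g_{\{z\}}(d,\cdot)$ $Q_z$--a.e.\ for every $z\in Z$, and assemble the $h_d$ (using joint measurability and continuity in $x$, after discarding an exceptional $z'$-set that is $Q_z$--null for every $z$) into a Carath\'eodory function $f$ on $K\times Z$ with $[f]_{M_{K,H}}=\phi_H$ for all $H$. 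It is precisely here that the structure of $Q$ as a Borel transition kernel on a Borel subset of $\R^k$, together with the separability of $X$, enters. Once surjectivity of $\iota$ is established, $E_K\cong\lim_{\leftarrow}E_{K,H}$ inherits completeness, which proves the lemma.
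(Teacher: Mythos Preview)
Your overall strategy coincides with the paper's: both argue that $(E_K,\mathcal{P}_{\mathcal H})$ is (isomorphic to) the projective limit $\lim_{\leftarrow}E_{K,H}$ of the Banach spaces $E_{K,H}$, and then invoke the fact that a projective limit of complete spaces is complete. Your treatment of local convexity, the Hausdorff property, and the completeness of $\lim_{\leftarrow}E_{K,H}$ itself is fine and matches the paper's use of Schaefer (II.5.3).

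The gap is exactly where you flag it: surjectivity of $\iota$ onto $\lim_{\leftarrow}E_{K,H}$. You describe what ``must then'' be done---for each $d$ in a countable dense subset of $K$, glue the uncountable family of $L^1(Q_z)$--classes $\bigl[z'\mapsto g_{\{z\}}(d,z')\bigr]_{z\in Z}$ into a single Borel function $h_d$ that represents all of them simultaneously---but you do not carry this out, and the step is not routine. Compatibility in the projective system only relates \emph{finitely} many $z$'s at a time; producing one Borel representative valid $Q_z$--a.e.\ for \emph{every} $z\in Z$ is a genuine selection/lifting problem, and merely appealing to ``the structure of $Q$ as a Borel transition kernel'' does not settle it. Nor is it clear that, even if each $h_d$ existed, the assembly into a Carath\'eodory function (continuous in $x$ for \emph{every} $z'$, after removing a set that is $Q_z$--null for every $z$) would go through. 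As written, the argument stops precisely at the nontrivial point.

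The paper avoids this direct gluing. It first proves Lemma~\ref{EK}, which identifies $E_K$ with the intersection $\bigcap_{H\in\mathcal H}E_{K,H}$, and then cites Example~2.2.7 in Bogachev--Smolyanov to conclude that this intersection coincides with $\lim_{\leftarrow}E_{K,H}$. In effect, the identification $E_K\cong\lim_{\leftarrow}E_{K,H}$ is obtained via that reference rather than by an explicit construction of a Carath\'eodory representative from a compatible family. If you want to keep your self-contained route, you need to either supply the gluing argument in full (which will require more than what you have sketched) or, as the paper does, route the identification through a cited result.
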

\begin{proof}
Let $\tau_H$ be the projective topology on $E_K$ with respect to the family of Banach spaces $(E_{K,H}, p_{K,H})_{H\in \mathcal{H}}$. The family $\mathcal{H}$ is directed by inclusion. Given $H,H'\in \mathcal{H}$ with $H\le H'$, let the linear mapping $q_{HH'}:E_{K,H'}\rightarrow E_{K,H}$ be given by $q_{HH'}(f+M_{H'}) = f + M_{H'}$. This is well defined, since $E_{K,H'}\subseteq E_{K,H}$ and $M_{H'}\subseteq M_H$, for $H\le H'$. Clearly, each $q_{HH'}$ is continuous. Then, by Example 2.2.7 in \cite{BogachevSmolyanov}, the projective limit $\lim_{\leftarrow}E_{K,H}$ coincides with $F:=\bigcap_{H\in \mathcal{H}} E_{K,H}$. But $E_K=F$ by Lemma \ref{EK}. To see that $\tau_H$ is Hausdorff (or separated), we have to prove that for all nonzero element $g\in \lim_{\leftarrow}E_{K,H}$, there is $H\in \mathcal{H}$ and a neighborhood of the zero equivalence class, $U_{K,H}\subseteq E_{K,H}$, such that $g+M_{K,H}\notin U_{K,H}$, see \cite{Schaefer} (II.5.1). Since $g$ is nonzero, we have that $g+M_{K,H}$ is nonzero for all $H\in \mathcal{H}$, hence there is $H\in \mathcal{H}$ such that $p_{K,H}(g+M_{K,H})=p_{K,H}(g) =\delta >0$. Then, letting $U_{K,H} = \{f+M_{K,H}\in E_{K,H}\,:\, p_{K,H}(f+M_{K,H}) < \delta/2\}$, we are done.

To conclude the proof, note that the projective limit of Banach spaces is a complete locally convex space, see \cite{Schaefer} (II.5.3).
\end{proof}

\textsc{Proof of Lemma \ref{lemma:function_space}}
As in the the previous lemma, we work with the directed family of seminorms $\mathcal{P}_H$. As discussed above, it generates the same topology as $\mathcal{P}$.
Let $P = \prod_{K\in \mathcal{K}} E_K$, endowed with the Tychonoff product topology. By Lemma \ref{lemma2}, each $E_K$ is a complete locally convex space, thus $P$ is also a complete locally convex space. Moreover, $P$ is obviously Hausdorff, since the family of seminorms is separating.
Let us see that there is a linear homomorphism between $E$ and $P$, so the lemma follows. Let $\phi:E \longrightarrow P$ be defined by $\phi(f) = (f_K)_{K\in \mathcal{K}}$, where $f_K\in E_K$ is the restriction of $f$ to $K\times Z$ (we dismiss now the equivalence class notation used in the proof of Lemma \ref{lemma2}, as there is no possible confusion here, as the equivalence relation is $M$, defined prior to Lemma \ref{lemma2}). It is clear that $\phi$ is linear and one-to-one, since $f\neq0$ implies that there is $K\in \mathcal{K}$ such that $f_K\neq 0$. It is also suprajective, since, under our hypotheses on $X$, $Z$ and $Q$, every function $f_K$ in $E_K$ can be extended to a Carath\'eodory function $f^K:X\times Z\longrightarrow \R$, see \cite{Kucia}, Corollary 3. Consider the function $f(x,z)=f^K(x,z)$ if $x\in K$. This definition is consistent, since for another compact set $K'$, if $x\in K\cap K'$, then $f^{K}(x,z)=f_{K\cap K'}(x,z)=f^{K'}(x,z)$. Moreover, $f$ is a Carath\'eodory function: for each $z\in Z$, the restriction of $f$ to a compact set $K$ is continuous; hence, since $X$ is locally compact, $f$ is continuous; see \cite{Willard}, Lemma 43.10. Also, it is trivial that $f$ is Borel measurable with respect to $z$. Hence, for any  $(f_K)_{K\in \mathcal{K}}\in P$, we have proved the existence of a function $f$ in $E$ for which $\phi(f)=(f_K)_{K\in \mathcal{K}}$, and hence $\phi$ is suprajective. It remains to show that $\phi$ is continuous and that it is open. Let $\pi_K:P\rightarrow E_K$ be the projection of $E$ onto $E_K$, defined  as follows: if $f\in E$, then $\pi_K(f)=f_K$. The mappings $\pi_K$ are continuous by definition of the Tychonoff topology. Note that $\pi_K\circ \phi = \pi_K$, hence by \cite{Schaefer} (II.5.2), $\phi$ is continuous. Moreover, from the previous identity, $\pi_K\circ\phi^{-1} = \pi_K$, hence by the same argument as above, $\phi^{-1}$ is continuous.

\section{Continuity of the Markov operator}\label{App:Con}
In this section we investigate the continuity of the fixed point of the Bellman operator in the variables $(x,z)$. Our exploration is not the most general possible. We restrict ourselves to a case which is common in many models in economics.  General results about the continuity of the Markov operator $M$ can be consulted in \cite{Serfozo} and \cite{HLL}. We state the following simple result.
\begin{lemma}\label{lemma:cont}
Let $f\in \mathcal{L}^1(Z;C(X))$. Suppose that there is a $\sigma$-finite measure $\lambda$ on $Z$ such that $Q_z$ is absolutely continuous with respect to $\lambda$, for all $z\in Z$, with density (or Radon-Nicodym derivative) $\varphi(z,z')$, continuous with respect to $z$ and such that, for all compact set $K_1$ in $X$ and $K_2$ in $Z$, there exists a function $h\in L^1(Z)$ such that $|f(x,z')\varphi(z,z')|\le h(z')$ for almost all $z'\in Z$ and all $x\in K_1$, $z\in K_2$. Then $Mf$ is continuous in $(x,z)$.
\end{lemma}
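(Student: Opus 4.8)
The plan is a routine application of the dominated convergence theorem, once the Markov operator is rewritten against the reference measure $\lambda$. Since $Q_z\ll\lambda$ with density $\varphi(z,\cdot)$, for every $(x,z)\in X\times Z$ we have
\[
(Mf)(x,z)=\int_Z f(x,z')\,Q_z(dz')=\int_Z f(x,z')\varphi(z,z')\,\lambda(dz'),
\]
and the integrand $z'\mapsto f(x,z')\varphi(z,z')$ is $\mathcal{Z}$-measurable (as a product of the measurable functions $f(x,\cdot)$ --- by the Carath\'eodory property --- and the Radon--Nikodym derivative $\varphi(z,\cdot)$) and $\lambda$-integrable, by the domination hypothesis applied with $K_1=\{x\}$ and $K_2=\{z\}$; in particular the identity above makes sense.

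First I would note that, since $X$ and $Z$ carry Euclidean (hence metrizable) topologies, it suffices to prove sequential continuity of $Mf$. So fix $(x,z)$ and an arbitrary sequence $(x_n,z_n)\to(x,z)$ in $X\times Z$. The sets $K_1:=\{x\}\cup\{x_n:n\ge 1\}$ and $K_2:=\{z\}\cup\{z_n:n\ge 1\}$ are compact in $X$ and $Z$ respectively, so by hypothesis there is a $\lambda$-integrable function $h$ with $|f(\xi,z')\varphi(\zeta,z')|\le h(z')$ for $\lambda$-almost every $z'\in Z$ and all $\xi\in K_1$, $\zeta\in K_2$.

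Then I would pass to the limit. For $\lambda$-a.e. $z'$, the map $\xi\mapsto f(\xi,z')$ is continuous on $X$ (this is the Carath\'eodory property $f^{z'}\in C(X)$) and $\zeta\mapsto\varphi(\zeta,z')$ is continuous on $Z$ by assumption; hence $f(x_n,z')\varphi(z_n,z')\to f(x,z')\varphi(z,z')$, and the whole sequence is dominated by $h\in L^1(Z,\lambda)$. Lebesgue's dominated convergence theorem then gives
\[
(Mf)(x_n,z_n)=\int_Z f(x_n,z')\varphi(z_n,z')\,\lambda(dz')\ \longrightarrow\ \int_Z f(x,z')\varphi(z,z')\,\lambda(dz')=(Mf)(x,z),
\]
which establishes sequential, and therefore full, continuity of $Mf$ on $X\times Z$.

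There is no genuine obstacle here; the only points requiring a little care are bookkeeping ones: checking that $f(x,\cdot)\varphi(z,\cdot)$ is measurable and $\lambda$-integrable so that the change of measure $\int f\,dQ_z=\int f\varphi\,d\lambda$ is legitimate, and observing that the single dominating function $h$ attached to the compacts $K_1,K_2$ controls every term of the sequence at once --- which is precisely why the hypothesis is stated uniformly over compacta in $x$ and $z$.
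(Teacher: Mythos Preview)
Your proof is correct and follows essentially the same route as the paper: rewrite $Mf$ as $\int f(x,z')\varphi(z,z')\,\lambda(dz')$ and then invoke dominated convergence. The paper simply appeals to Theorem~20.3 in Aliprantis--Burkinshaw (a continuity-of-parametric-integrals result), whereas you unpack that theorem by hand via the sequential argument with the compact sets $K_1=\{x\}\cup\{x_n\}$ and $K_2=\{z\}\cup\{z_n\}$; the content is the same.
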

\begin{proof}
The assumptions on $Q$ imply
\[
Mf(x,z) = \int_Z f(x,z') Q_z(dz') = \int_Z f(x,z') \varphi(z,z') \lambda (dz').
\]
Theorem 20.3 in \cite{AB} applies to $f(x,z') \varphi(z,z')$, hence $Mf$ is continuous.
\end{proof}

The issue of continuity of the value function in the unbounded case (and unbounded space of shocks) is not an easy one. The translation of Lemma 12.14 in \cite{SLP} to this case is not straightforward, even if the Markov chain is strong Feller continuous. Recall that $Q$ has the weak (strong) Feller property if $M$ maps bounded continuous functions (\textit{resp.} bounded measurable functions) on $Z$ into bounded continuous functions.

To see the kind of problems that may emerge for unbounded functions, consider the following example. Let $Z=[0,\infty)$ and let the transition function $Q:Z\times \mathcal{Z} \longrightarrow \R$ be defined as follows:
$
Q(0,B)=\delta_0(B)
$, where $\delta_0$ is the Dirac measure at the point 0, that is, $\delta_0(B)=1$ if $0\in B$ and $\delta_0(B)=0$ otherwise.
For $0<z<1$, $
Q(z,B)=\int_{B} dF_z(z')$, 
where
\[
F_z(z')=\left\{\begin{array}{ll} 0,&\hbox{if $z'=0$;}\\
z' z^2+1-z,&\hbox{if $0<z'\le \frac 1z$};\\
1,& \hbox{if $z'>\frac 1z$}.
\end{array}\right.,
\]
Finally, for $z\ge 1$, $Q(z,B)=\lambda(B\cap [0,1])$,
where $\lambda$ denotes the Lebesgue measure of $\R$.

Note that, for $0<z<1$, $F_z$ is a distribution function: it is nondecreasing, continuous except at 0, where the right sided limit exists, $0\le F_z\le 1$, and
\[
\int dF(z') = (z' z^2+1-z-0)|_{z'=0} + \int_0^{\frac 1z} z^2 dz'  = 1-z+z=1.
\]
Moreover, it is clear that $Q(\cdot,B)$ is Borel measurable. Thus, $Q$ is a transition function.
Let $f(y,z)=f(z)$ be independent of $y$ and continuous in $z$. Then $Mf$ is well defined in this particular example and depends only on $z$, with
$
(Mf)(0)=\int f(z')Q(0,dz') = f(0).
$
For $0<z<1$ we have
\begin{align*}
(Mf)(z) &= \int f(z') Q(z,dz')\\
&\quad  = \int f(z') dF_z(z')\\
&\quad = f(0)(z' z^2+1-z-0)|_{z'=0} + \int _0^{\frac 1z} f(z')z^2dz' \\
&\quad= f(0)(1-z) + z^2 \int _0^{\frac 1z} f(z')dz'.
\end{align*}
Note that as $f$ is continuous, the integral above exists, for any $0<z<1$ and $Mf$ is continuous for $0<z<1$.
For $z\ge 1$, $Mf$ is constant and given by
\[
(Mf)(z) = \int f(z')Q(z,dz') = \int_{[0,1]} f(z')dz'.
\]
Now, if $f$ is measurable and bounded, there is $k>0$ such that $|f|\le k$, hence $-kz^2 z^{-1} \le  \int _0^{\frac 1z} f(z')dz' \le kz^2 z^{-1}$, so $ \int _0^{\frac 1z} f(z')dz'$ tends to 0 as $z\to 0^+$, and then $(Mf)(z)$ tends to $f(0)=(Mf)(0)$. Also, $(Mf)(z)$ tends to $\int_{[0,1]} f(z')dz' = (Mf)(1)$ as $z\to 1^-$. Thus $Mf$ is continuous. Thereby, $Q$ is strong Feller continuous. However, considering the unbounded function $g(z') = z'$, we obtain $Mg(0)=g(0)=0$ and $Mg(z)=\frac 12$ for $z>0$, thus $Mg$ is discontinuous at 0.

It is not difficult to find non--trivial continuous functions $U$ for which the dynamic programming equation with transition probability $Q$ admits discontinuous solutions. For instance, let $u(z,c)=(1+z)c$ be an utility function that depends on consumption $c$ and shock $z$, and let $\Gamma(m)=[0,m+y]$, where $m\ge 0$, $y>0$ is a constant,$X=\R_+$, $Z=[0,\infty]$, and let a discount factor $\beta$ such that $\frac 32\beta <1$. The dynamic programming equation is
\[
v(m,z) = \max_{m'\in [0,m+y]} \Big\{(1+z)(m+y-m') + \beta \int_{[1,\infty)} v(m',z') Q_z(dz') \Big\},
\]
Notice that this specification corresponds to a pure currency economy model with linear utility, where agents' preferences are subject to random shocks. These random shocks are assumed to be governed by the Markov chain $Q$ described above. See \cite{SLP} for further details about this model. We are simply interested in showing that the value function is not jointly continuous in $(m ,z)$.
It is easily checked that
\[
v(m,z) = \left\{\begin{array}{ll}m+y+ y\frac {\beta}{1-\beta},&\mbox{if $z=0$;}\\
(1+z)(m+y) + \frac 32  y \frac {\beta}{1-\beta},&\mbox{if $z>0$},\end{array}
\right.
\]
is a solution in the class $\ca{\R_+\times\R_+}$, which is not continuous in $z$.

\section*{Acknowledgements}
Support from the Ministerio de Ciencia, Innovaci\'on y Universidades -- Agencia Estatal de Investigaci\'on, grants ECO2017-86261-P,
MDM 2014-0431 and Comunidad de Madrid (Spain), grant MadEco-CM S2015/HUM-3444 is gratefully acknowledged. 

\end{document}